\newtheorem{theorem}{Theorem}
\newtheorem{corollary}[theorem]{Corollary}
\newtheorem{lemma}[theorem]{Lemma}
\newtheorem{remark}[theorem]{Remark}
\newtheorem{assumption}[theorem]{Assumption}
\newtheorem{example}[theorem]{Example}
\newtheorem{definition}[theorem]{Definition}
\newcommand{\E}{\mathrm{E}}
\newcommand{\Cov}{\mathrm{Cov}}
\newcommand{\Cv}{\mathrm{Cv}}
\newcommand{\peak}{\mathrm{peak}}
\newcommand{\dd}{\mathrm{d}}
\newcommand{\one}{\mathbbm{1}}
\newcommand{\eqdist}{\overset{\mathrm{d}}{=}}
\newcommand{\ds}{\displaystyle}
\begin{document}

\title{A General Formula for the Stationary Distribution of the Age
of Information and Its Application to Single-Server Queues\thanks{This
work was supported in part by MEXT/JSPS KAKENHI Grant Numbers
JP16H06914, JP15K00034, JP16K00034,  JP16H02878, and JP25120008.
This paper was presented in part at 2017 IEEE International Symposium on
Information Theory.}%
}

\author{Yoshiaki Inoue\thanks{Y. Inoue and T. Takine are with 
Department of Information and Communications Technology, 
Graduate School of Engineering, Osaka University, Suita 565-0871, Japan 
(e-mail:yoshiaki@comm.eng.osaka-u.ac.jp; takine@comm.eng.osaka-u.ac.jp).},
Hiroyuki Masuyama\thanks{H. Masuyama and T. Tanaka are with
Department of Systems Science, 
Graduate School of Informatics, Kyoto University, Kyoto 606-8501, Japan
(e-mail: masuyama@i.kyoto-u.ac.jp; tt@i.kyoto-u.ac.jp).},
Tetsuya Takine\footnotemark[2], 
and Toshiyuki Tanaka\footnotemark[3]
}%

\maketitle

\begin{abstract}
This paper considers the stationary distribution of the age of
information (AoI) in information update systems.  We first derive a
general formula for the stationary distribution of the AoI, which
holds for a wide class of information update systems. The formula
indicates that the stationary distribution of the AoI is given in
terms of the stationary distributions of the system delay and the peak
AoI.  To demonstrate its applicability and usefulness, we analyze the
AoI in single-server queues with four different service disciplines:
first-come first-served (FCFS), preemptive last-come first-served
(LCFS), and two variants of non-preemptive LCFS service disciplines.
For the FCFS and the preemptive LCFS service disciplines, the GI/GI/1,
M/GI/1, and GI/M/1 queues are considered, and for the non-preemptive
LCFS service disciplines, the M/GI/1 and GI/M/1 queues are
considered. With these results, we further show comparison results
for the mean AoI's in the M/GI/1 and GI/M/1 queues under those service
disciplines.
\end{abstract}

\noindent
\textbf{Keywords: }Age of information, stationary distribution, peak AoI, 
single-server queues, FCFS, preemptive LCFS, non-preemptive LCFS.

\section{Introduction}\label{sec:intro}

We consider an information update system composed of an information
source equipped with a sensor, a processor (a server), and a monitor.
The state of the information source changes over time, which is
observed by the sensor occasionally. Whenever the state is sensed, the
sensor generates a packet that contains the sensed data and its
time-stamp, and sends the packet to the server. The server processes
the received data, appends the result to log database, and updates
information displayed on the monitor. The age of information (AoI) is
a primary performance measure in information update systems, which is
defined as the length of time elapsed from the time-stamp of
information being displayed on the monitor.

The information update system described above is an abstraction of
various situations where the freshness of data is of interest, e.g.,
status monitoring in manufacturing systems, satellite imagery for
weather report, tracking trends in social networks, and so on.  The
AoI has recently attracted a considerable attention due to its
applicability to a wide range of information and communication
systems. Readers are referred to \cite{Kosta2017} for a detailed
introduction and review of this subject.

Information update systems are usually modeled as queueing systems,
where packets arriving at a queueing system correspond to information
packets. In most previous work on the analysis of the AoI, the
\textit{mean AoI} is of primary concern, which is defined as
the long-run time-average of the AoI. To be more specific, consider 
a stationary, ergodic queueing system, and let $A_t$ ($t \geq 0$)
denote the AoI at time $t$:
\begin{equation}
A_t := t - \eta_t, \quad t \geq 0,
\label{eq:AoI-process-general}
\end{equation}
where $\eta_t$ ($t \geq 0$) denotes the time-stamp of information
being displayed on the monitor at time $t$.  The mean AoI $\E[A]$ is
defined as
\[
\E[A] = \lim_{T \to \infty} 
\frac{1}{T} \int_0^T A_t \dd t,
\]
and under a fairly general setting, $\E[A]$ is given by
\cite{Kaul12-1}
\begin{equation}
\E[A] = \frac{\E[G_n^{\dag} D_n] + \E[(G^{\dag})^2]/2}{\E[G^{\dag}]},
\label{eq:mean-AoI}
\end{equation}
where $\E[G^{\dag}]$ and $\E[(G^{\dag})^2]$ denote the mean and the
second moment of interarrival times and $\E[G_n^{\dag} D_n]$ denotes
the mean product of the interarrival time $G_n^{\dag}$ of the
$(n-1)$st and the $n$th packets and the system delay $D_n$ of the $n$th packet.
This formula has been the starting point in most previous work on the
analysis of the AoI. As stated in \cite[Page 170]{Kosta2017}, however,
the calculation of the mean AoI based on (\ref{eq:mean-AoI}) is
cumbersome because $G_n^{\dag}$ and $D_n$ are dependent in general and
their joint distribution can take a complicated form.

The purpose of this paper is twofold.  The first one is the derivation
of a general formula for the \textit{stationary distribution} $A(x)$
($x \geq 0$) of the AoI in ergodic information update systems, which
is defined as the long-run fraction of time in which the AoI is not
greater than an arbitrarily fixed value $x$:
\[
A(x) = \lim_{T \to \infty} \frac{1}{T}\int_0^T \one_{\{A_t \leq x\}} \dd t,
\]
where $\one_{\{\cdot\}}$ denotes an indicator function. 
Although the mean AoI $\E[A]$ is a primary performance metric, it
alone is not sufficient to characterize the long-run behavior of the
AoI and its related processes. First of all, if the stationary
distribution $A(x)$ of the AoI is available, we can evaluate the
deviation of the AoI from its mean value.  To support our claim
further, we provide two examples, which show that the stationary
distribution of the AoI plays a central role in the analysis of
AoI-related processes. 

\begin{example}
In \cite{Sun17-it}, a non-linear age penalty function
is introduced to expand the concept of the AoI, which is also referred
to as the Cost of Update Delay (CoUD) metric in \cite{Kosta17}. 
For a non-negative and non-decreasing function $f(x)$ ($x \geq 0$)
with $f(0) = 0$, CoUD $C_t$ at time $t$ is defined as $C_t = f(A_t)$
($t \geq 0$). Clearly, the time-average of CoUD is given in terms of
the stationary  distribution $A(x)$ ($x \geq 0$) of the AoI:
\[
\lim_{T \to \infty} \frac{1}{T}\int_0^T C_t \dd t
=
\int_0^{\infty} f(x) \dd A(x).
\]
\end{example}

\begin{example}
Consider a remote estimation of a stationary Wiener process $\{X_t;\,
t \geq 0\}$ via a channel modeled as a stationary queueing system
\cite{Sun17-isit}. We define $\{\hat{X}_t;\, t \geq 0\}$ as an estimator
$\hat{X}_t = X_{\eta_t}$ of $X_t$ (see (\ref{eq:AoI-process-general})).
As shown in \cite{Sun17-isit}, if a sequence of sampling times is
independent of $\{X_t;\, t \geq 0\}$,
\begin{align*}
\E[(X_t - \hat{X}_t)^2] = \E[(\mathrm{Norm}(0, A_t))^2] = \E[A],
\end{align*}
where $\mathrm{Norm}(\mu, \sigma^2)$ denotes a normal random variable
with mean $\mu$ and variance $\sigma^2$.  Therefore, the mean square
error of the estimator $\{\hat{X}_t;\, t \geq 0\}$ is equal to the
mean AoI for the Wiener process.

As an extension of this result, we consider the stationary
distribution of the estimation error.  It is readily verified that the
characteristic function of the estimation error is given by
\[
\E[e^{i\omega(X_t - \hat{X}_t)}] 
=
\E[e^{i\omega \mathrm{Norm}(0, A_t)}]
=
a^*\left(\frac{\omega^2}{2}\right),
\]
where $a^*(s)$ ($s > 0$) denotes the Laplace-Stieltjes transform
(LST) of the stationary distribution of the AoI.
\end{example}

In \cite{Costa16}, a performance metric called the \textit{peak AoI}
is introduced, which is defined as the AoI immediately before
information updates. The formulation of the peak AoI is simpler than
that of the AoI, and it can be used as an alternative metric of
the freshness of data. In particular, one of the primary
motivations of introducing the peak AoI in \cite{Costa16} is that 
one can characterize its stationary distribution using standard
methods in queueing theory; 
the stationary distribution
of the peak AoI is given in terms of that of the system delay.

In this paper, we show that under a fairly general setting, the
stationary distribution $A(x)$ of the AoI is given in terms of those
of the system delay and the peak AoI. As we will see, this formula
holds sample-path-wise, regardless of the service discipline or the
distributions of interarrival and service times.

Therefore, the analysis of the stationary distribution of the AoI in
ergodic queueing systems is reduced to the analysis of those of the
peak AoI and the system delay, which can be analyzed via
standard techniques in queueing theory. 
An important consequence of this observation is that the peak AoI is
not merely an alternative performance metric to the AoI but rather an
essential quantity in elucidating properties of the AoI. Furthermore,
this observation leads to an alternative formula for the mean AoI in
terms of the \emph{second moments} of the peak AoI and the system delay.

The second purpose of this paper is the derivation of various
analytical formulas for the AoI in single-server queues, which
demonstrates the wide applicability of our general formula. More
specifically, we consider the following four service disciplines:
\begin{itemize}
\item[(A)] First-come, first-served (FCFS),
\item[(B)] Preemptive last-come, first-served (LCFS),
\item[(C)] Non-preemptive LCFS with discarding, and
\item[(D)] Non-preemptive LCFS without discarding.

\end{itemize}
Under the FCFS service discipline, all packets are served in order of
arrival, while under the LCFS service disciplines (B)--(D), the newest
packet is given priority. In the preemptive LCFS discipline, newly
arriving packets immediately start receiving their services on
arrival, interrupting the ongoing service (if any).  In the
non-preemptive LCFS service discipline, on the other hand, arriving
packets have to wait until the completion of the ongoing service, and
waiting packets are also overtaken by those which arrive during their
waiting times.

Note that the non-preemptive LCFS service discipline has two variants,
(C) and (D): overtaken packets are discarded in the service discipline
(C), while overtaken packets remain in the system and they are served
eventually in the service discipline (D).  Although (D) yields a
larger AoI than (C), this service discipline is also of interest in
evaluating the logging overhead caused by sending all generated
packets to the database.  Note that for the preemptive LCFS
discipline, the treatment of overtaken packets (i.e., discarding them
or not) does not affect the AoI performance.

Table \ref{table:literature} summarizes known results for the mean AoI in
standard queueing systems. For the analysis of LCFS queues, all the
previous works listed in Table \ref{table:literature} utilize the
memoryless property of exponential interarrival times to simplify the
derivation of the cross-term  $\E[G_n^{\dag} D_n]$ in
(\ref{eq:mean-AoI}). Also, to the best of our knowledge, no closed
formula for the AoI under the service discipline (D) has been reported
in the literature. As we will see, our general formula is readily
applicable to non-Poisson arrival cases and the service discipline (D). 
We note that in addition to those listed in Table
\ref{table:literature}, there are analytical results for the mean AoI
in queueing models with additional features in the literature, e.g.,
queues with packet deadline \cite{Kam18}, multi-class queues
\cite{Yates19}, and priority queues \cite{Yates18}.

\begin{table}[tp]
\centering
\caption{Summary of known results for the mean AoI in standard queueing
systems, where the fourth entry in Kendall's notation indicates 
the capacity of the waiting room, and P and NP stand for preemptive
and non-preemptive.}
\label{table:literature}
\begin{tabular}{ccc}
\hline
Reference & Service discipline & System 
\\ 
\hline
\multirow{3}{5mm}{\cite{Kaul12-1}}
 & FCFS & M/M/1/$\infty$ \\
 & FCFS & M/D/1/$\infty$ \\
 & FCFS & D/M/1/$\infty$ \\
\hline
\multirow{2}{5mm}{\cite{Costa16}}
 & FCFS & M/M/1/0 \\
& FCFS & M/M/1/1 \\
\hline
\cite{Najm17} & FCFS & M/GI/1/0 \\
\hline
\multirow{2}{5mm}{\cite{Kam16}} 
 & FCFS & M/M/2/$\infty$ \\
 & FCFS & M/M/$\infty$/0 \\
\hline
\cite{Kaul12-2} & P-LCFS & M/M/1/0 \\
\hline
\cite{Najm16} & P-LCFS & M/Gamma/1/0 \\
\hline
\cite{Najm17} & P-LCFS & M/GI/1/0 \\
\hline
\cite{Kaul12-2, Costa16} & NP-LCFS (C) & M/M/1/1 \\
\hline
\cite{Najm16} & NP-LCFS (C) & M/Erlang/1/1 \\
\hline
\end{tabular}
\end{table}

Our contribution to the analysis of the AoI in single-server queues is
summarized as follows.
\begin{namelist}
\item[(A):]
For the FCFS GI/GI/1 queue, we show that the stationary distribution
of the AoI is given in terms of the system delay distribution.  We
also derive upper and lower bounds for the mean AoI in the FCFS
GI/GI/1 queue. In addition, we derive explicit formulas for the LST of
the stationary distribution of the AoI in the FCFS M/GI/1 and GI/M/1
queues.

\item[(B):]
For the preemptive LCFS GI/GI/1, M/GI/1, and GI/M/1 queues, we derive
explicit formulas for the LST of the stationary distribution of the
AoI. In addition, for the preemptive M/GI/1 and GI/M/1 queues, we
derive ordering properties of the AoI in terms of the service time
and the interarrival time distributions.

\item[(C) and (D):]
For the non-preemptive M/GI/1 and GI/M/1 queues with and without
discarding, we derive explicit formulas for the LST of the stationary
distribution of the AoI.
\end{namelist}
In Appendix \ref{appendix:summary}, we also present specialized
formulas for the M/M/1, M/D/1, and D/M/1 queues.

\begin{remark}
\label{remark:G-GI}
Throughout this paper, we strictly distinguish between the symbols `G'
and `GI' in Kendall's notation: `GI' represents that interarrival or
service times are independent and identically distributed (i.i.d.)
random variables, while `G' represents that there are no restrictions
on the arrival or service processes.
\end{remark}

Taking the derivative of the LST of the AoI, we can obtain the mean
and higher moments of the AoI. In all of the above models, we provide
formulas for the mean AoI. We also derive formulas for higher moments
of the AoI when they take simple forms. Furthermore, we obtain
comparison results for the mean AoI among the four service disciplines
in the M/GI/1 and GI/M/1 queues. See Table \ref{table:contribution},
which summarizes our results for the AoI in standard queueing systems.

\begin{table}[p]
\centering
\caption{Summary of our results for the AoI in standard queueing
  models, where the fourth entry in Kendall's notation indicates the
  capacity of the waiting room, and P and NP stand for preemptive and
  non-preemptive.}
\label{table:contribution}
\begin{tabular}{ll}
\hline
\multicolumn{1}{c}{Model} & \multicolumn{1}{c}{Results}
\\ 
\hline
\multirow{2}{*}{FCFS GI/GI/1/$\infty$} & 
LST (Lemma \ref{lemma:a_peak-GIG1}); \\ &
bounds for $\E[A]$ (Theorem \ref{thm:A-mean-bound})
\\\hline
FCFS M/GI/1/$\infty$ & 
LST, $\E[A]$, and $\E[A^2]$  (Theorem \ref{corollary:MG1-GM1-LST} (i))
\\\hline
FCFS GI/M/1/$\infty$ & 
LST, $\E[A]$, and $\E[A^2]$  (Theorem \ref{corollary:MG1-GM1-LST} (ii))
\\\hline
\multirow{2}{*}{FCFS M/M/1/$\infty$} & 
LST, dist.\ function, $\E[A]$ and $\E[A^2]$ \\ & 
(Appendix \ref{appendix:FCFS})
\\\hline
FCFS M/D/1/$\infty$ & $\E[A]$ and $\E[A^2]$ (Appendix \ref{appendix:FCFS})
\\\hline
FCFS D/M/1/$\infty$ &$\E[A]$ and $\E[A^2]$ (Appendix \ref{appendix:FCFS})
\\\hline
\multirow{2}{*}{P-LCFS GI/GI/1/0} &
LST and $\E[A]$ (Theorem \ref{theorem:PR-LCFS-GG1-a^*(s)}); \\ &
Decomposition formula (Corollary \ref{corollary:PR-LCFS-GG1-decomposition})
\\\hline
\multirow{2}{*}{P-LCFS M/GI/1/0} &
LST, $\E[A]$, and $\E[A^2]$ (Corollary \ref{corollary:PR-LCFS-summary} (i));\\&
ordering of $\E[A]$ (Corollary \ref{corollary:P-LCFS-order} (i))
\\\hline
\multirow{2}{*}{P-LCFS GI/M/1/0} &
LST, $\E[A]$, and $\E[A^2]$ (Corollary \ref{corollary:PR-LCFS-summary} (ii));\\&
ordering of dist.\ function (Corollary \ref{corollary:P-LCFS-order} (ii))
\\\hline
\multirow{2}{*}{P-LCFS M/M/1/0} & 
LST, dist.\ function, $\E[A]$ and $\E[A^2]$ \\ & 
(Appendix \ref{appendix:PR-LCFS})
\\\hline
P-LCFS M/D/1/0 & $\E[A]$ and $\E[A^2]$ (Appendix \ref{appendix:PR-LCFS})
\\\hline
P-LCFS D/M/1/0 & $\E[A]$ and $\E[A^2]$ (Appendix \ref{appendix:PR-LCFS})
\\\hline
\multirow{2}{*}{NP-LCFS (C) M/GI/1/1} & 
LST (Theorem \ref{theorem:NP-LCFS} (i)); \\ &
$\E[A]$ (Corollary \ref{corollary:NP-LCFS} (i))
\\\hline
\multirow{2}{*}{NP-LCFS (C) GI/M/1/1} & 
LST (Theorem \ref{theorem:NP-LCFS} (ii)); \\ &
$\E[A]$ (Corollary \ref{corollary:NP-LCFS} (ii))
\\\hline
NP-LCFS (C) M/M/1/1 & LST and $\E[A]$ 
(Appendix \ref{subsection:appendix-NP-LCFS-C})
\\\hline
NP-LCFS (C) M/D/1/1 & $\E[A]$ (Appendix \ref{subsection:appendix-NP-LCFS-C})
\\\hline
NP-LCFS (C) D/M/1/1 & $\E[A]$ (Appendix \ref{subsection:appendix-NP-LCFS-C})
\\\hline
\multirow{2}{*}{NP-LCFS (D) M/GI/1/$\infty$} & 
LST (Theorem \ref{theorem:NP-LCFS} (iii)); \\ &
$\E[A]$ (Corollary \ref{corollary:NP-LCFS} (iii))
\\\hline
\multirow{2}{*}{NP-LCFS (D) GI/M/1/$\infty$} & 
LST (Theorem \ref{theorem:NP-LCFS} (iv)); \\&
$\E[A]$ (Corollary \ref{corollary:NP-LCFS} (iv))
\\\hline
NP-LCFS (D) M/M/1/$\infty$ & LST, $\E[A]$, and $\E[A^2]$ 
(Appendix \ref{subsection:appendix-NP-LCFS-D})
\\\hline
NP-LCFS (D) M/D/1/$\infty$ & $\E[A]$
(Appendix \ref{subsection:appendix-NP-LCFS-D})
\\\hline
NP-LCFS (D) D/M/1/$\infty$ & $\E[A]$ 
(Appendix \ref{subsection:appendix-NP-LCFS-D})
\\\hline\hline
\multirow{3}{*}{M/GI/1} & ordering of $\E[A]$ among FCFS, P-LCFS, \\ &
NP-LCFS (C), and NP-LCFS (D) \\ &
(Theorems \ref{theorem:M/G/1-G/M/1-order} (i) and \ref{theorem:MG1-FCFS<P-LCFS})
\\\hline
\multirow{3}{*}{GI/M/1} & ordering of $\E[A]$ among FCFS, P-LCFS, \\ &
NP-LCFS (C), and NP-LCFS (D) \\ &
(Theorem \ref{theorem:M/G/1-G/M/1-order} (ii))
\\\hline
\end{tabular}
\end{table}

The rest of this paper is organized as follows.  In Section
\ref{sec:sample-path}, we derive a general formula for sample-paths of
the AoI, and using it, we obtain various formulas for the AoI in a
general information update system. In Section \ref{sec:applications},
we consider the applications of the general formula to single-server
queues operated under the FCFS, preemptive LCFS, and non-preemptive
LCFS service disciplines. Furthermore, we  provide some comparison
results for the mean AoI among these  service disciplines in the
M/GI/1 and GI/M/1 queues. Finally, the conclusion of this paper is
provided in Section \ref{sec:conclusion}.

\section{Sample-Path Analysis in a General Setting}
\label{sec:sample-path}

We consider a sample path of the AoI process $\{A_t\}_{t \geq 0}$,
where $A_t$ is defined in (\ref{eq:AoI-process-general}). Note that
the process $\{\eta_t\}_{t \geq 0}$ of the time-stamp of the
information being displayed on the monitor is a step function of $t$,
i.e., it is piece-wise constant and has discontinuous points at which
information is updated. Therefore, any sample path of the AoI process
$\{A_t\}_{t \geq 0}$ is piece-wise linear with slope one and it has
(downward) jumps when information is updated. In what follows, we
assume that $\{A_t\}_{t \geq 0}$ is right-continuous, i.e., $\lim_{t
\to t_0+} A_t = A_{t_0}$.

Any sample path of the AoI process $\{A_t\}_{t \geq 0}$ can be
constructed in the following way. Let $\beta_n$ ($n=1,2,\ldots$)
denote the time instant of the $n$th information update, and 
let $X_n := A_{\beta_n}$ ($n=1,2,\ldots$) denote the AoI immediately
after the $n$th update. Also, let $\beta_0 := 0$ and $X_0 := A_0$.
For simplicity, we assume $\beta_{n-1} < \beta_n$ ($n=1,2,\ldots$). In
these settings, $A_t$ is given by
\begin{equation}
A_t = X_{n-1} + (t - \beta_{n-1}),
\quad
t \in [\beta_{n-1}, \beta_n),
\,
n = 1,2,\ldots.
\label{defn-Z_t}
\end{equation}
The sample-path of the AoI process $\{A_t\}_{t \geq 0}$ is thus
determined completely by the \textit{deterministic marked point
process} $\{(\beta_n,X_n)\}_{n=0,1,\ldots}$. We define $A_{\peak,n}$
($n=1,2,\ldots$) as the $n$th peak AoI (i.e., the AoI immediately
before the $n$th update).
\[
A_{\peak,n} = \lim_{t \to \beta_n-} A_t = X_{n-1} + (\beta_n-\beta_{n-1}).
\]
Figure \ref{fig:AoI-sample-path} shows an example of
sample paths of the AoI process $\{A_t\}_{t \geq 0}$.

\begin{figure}[t]
\centering
\setlength{\unitlength}{1.5mm}%
\begin{picture}(78,33)(-3,-3)
\put(0,0){\vector(0,1){30}}
\put(0,0){\vector(1,0){75}}
\put(0,3){\circle*{1.00}}
\put(0,3){\line(1,1){14.586}}
\put(15,18){\circle{1.00}}
\put(15,7){\circle*{1.00}}
\put(15,7){\line(1,1){19.586}}
\put(35,27){\circle{1.00}}
\put(35,17){\circle*{1.00}}
\put(35,17){\line(1,1){4.586}}
\put(40,22){\circle{1.00}}
\put(40,9){\circle*{1.00}}
\put(40,9){\line(1,1){9.586}}
\put(50,19){\circle{1.00}}
\put(50,2){\circle*{1.00}}
\put(50,2){\line(1,1){19.586}}
\put(70,22){\circle{1.00}}
\put(70,5){\circle*{1.00}}
\put(70,5){\line(1,1){5}}
\put(-3,28){\makebox(0,0){${A_t}$}}
\put(75,-3){\makebox(0,0){${t}$}}
\put(0,-3){\makebox(0,0){${0}$}}
\put(15,-3){\makebox(0,0){${\beta_1}$}}
\put(35,-3){\makebox(0,0){${\beta_2}$}}
\put(40,-3){\makebox(0,0){${\beta_3}$}}
\put(50,-3){\makebox(0,0){${\beta_4}$}}
\put(70,-3){\makebox(0,0){${\beta_5}$}}
\qbezier[16](15,0.5)(15,8.75)(15,17)
\qbezier[25](35,0.5)(35,13.25)(35,26)
\qbezier[20](40,0.5)(40,10.75)(40,21)
\qbezier[17](50,0.5)(50,9.25)(50,18)
\qbezier[15](70,0.5)(70,10.75)(70,21)
\put(15,20){\makebox(0,0){${A_{\peak,1}}$}}
\put(34,29){\makebox(0,0){${A_{\peak,2}}$}}
\put(41,24){\makebox(0,0){${A_{\peak,3}}$}}
\put(50,21){\makebox(0,0){${A_{\peak,4}}$}}
\put(70,24){\makebox(0,0){${A_{\peak,5}}$}}
\put(3,2){\makebox(0,0){${X_0}$}}
\put(13,6){\makebox(0,0){${X_1}$}}
\put(33,16){\makebox(0,0){${X_2}$}}
\put(38,8){\makebox(0,0){${X_3}$}}
\put(48,4){\makebox(0,0){${X_4}$}}
\put(68,4){\makebox(0,0){${X_5}$}}
%
\end{picture}
\caption{A sample path of the AoI process.}
\label{fig:AoI-sample-path}
\end{figure}
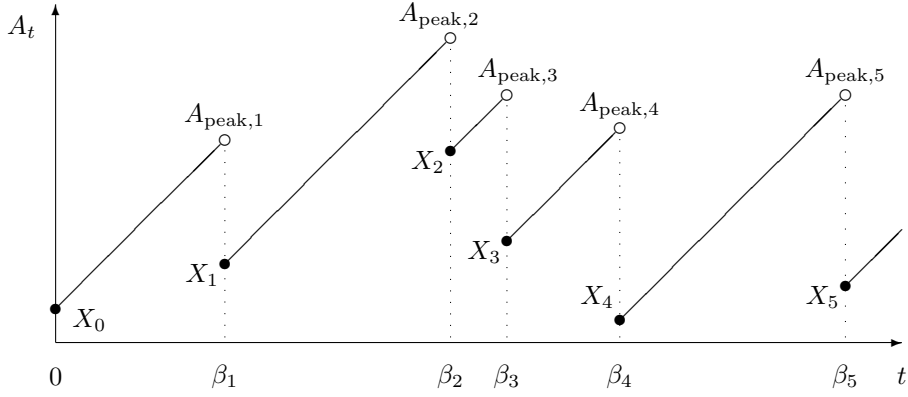

\begin{remark}
As we will see, $X_n$ equals to the system delay of a packet in a
queueing system when the time-stamps of packets are equal to their
arrival times. At this moment, however, we do not make any assumption
on $X_n$ other than its non-negativity, so that the general formula
for the AoI distribution to be obtained is applicable to various
situations, e.g., a network of queues, time-stamps with noise or
uncertainty, information updates obtained by sensor fusion techniques,
and so on.
\end{remark}

\noindent
In what follows, we first show a general formula for sample paths of
the AoI process $\{A_t\}_{t \geq 0}$, which is represented as 
\textit{a deterministic function} in terms of the deterministic
marked point process $\{(\beta_n,X_n)\}_{n=0,1,\ldots}$.
We then discuss the AoI in a general queueing system 
represented as \textit{a stationary, ergodic stochastic model},
assuming that the time-stamps of packets are equal to their arrival
times.

Suppose that the deterministic marked point process
$\{(\beta_n,X_n)\}_{n=0,1,\ldots}$ is given. Let $A^{\sharp}(x)$,
$A_{\peak}^{\sharp}(x)$, and $X^{\sharp}(x)$ ($x \geq 0$) denote the
{\it asymptotic frequency distributions} (see, e.g., \cite[Section
2.6]{El-T99}) of $\{A_t\}_{t \geq 0}$,
$\{A_{\peak,n}\}_{n=1,2,\ldots}$, and $\{X_n\}_{n=1,2,\ldots}$,
respectively, on the sample path:
\begin{alignat}{2}
A^{\sharp}(x) &= 
\lim_{T \to \infty}
\frac{1}{T}\int_0^T \one_{\{A_t \leq x\}} \dd t,
&
x &\geq 0,
\label{eq:A-ave-def}
\\
A_{\peak}^{\sharp}(x)
&= 
\lim_{N \to \infty}
\frac{1}{N} \sum_{n=1}^N \one_{\{A_{\peak,n} \leq x\}},
\qquad
&
x &\geq 0,
\label{eq:A-peak-ave-def}
\\
X^{\sharp}(x)
&= 
\lim_{N \to \infty}
\frac{1}{N} \sum_{n=1}^N \one_{\{X_n \leq x\}},
&
x &\geq 0,
\label{eq:X-ave-def}
\end{alignat}
if the limits exist.  Let $M_t$ ($t \geq 0$) denote
the total number of information updates by time $t$.
\begin{equation}
M_t = \sup\{n \in \{0,1,2,\dots\}; \beta_n \leq t\}.
\label{defn-M_t}
\end{equation}
The following lemma is a sample-path analogue of the elementary
renewal theorem.

\begin{lemma}[{\cite[Lemma 1.1]{El-T99}}]\label{lemma:elementary-renewal}
For $\lambda^{\dag} \in (0,\infty)$, 
\[
\lim_{t \to \infty} \frac{M_t}{t} = \lambda^{\dag}\
\Leftrightarrow\ 
\lim_{n \to \infty} \frac{\beta_n}{n} = \frac{1}{\lambda^{\dag}}.
\]
\end{lemma}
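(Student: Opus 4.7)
The plan is to exploit the duality between the counting process $\{M_t\}$ and the update epochs $\{\beta_n\}$: by the definition (\ref{defn-M_t}), whenever $M_t \geq 1$ we have the sandwich
\[
\beta_{M_t} \leq t < \beta_{M_t + 1}.
\]
This is the only structural fact I will need, together with the observation that $M_{\beta_n} = n$ (a consequence of the assumption $\beta_{n-1} < \beta_n$).

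For the forward implication, I assume $\beta_n/n \to 1/\lambda^{\dag}$. Because $1/\lambda^{\dag}$ is finite and positive, $\beta_n \to \infty$, which together with the monotonicity of $M_t$ forces $M_t \to \infty$ as $t \to \infty$. Dividing the sandwich by $M_t$ gives
\[
\frac{\beta_{M_t}}{M_t} \;\leq\; \frac{t}{M_t} \;<\; \frac{\beta_{M_t+1}}{M_t+1}\cdot\frac{M_t+1}{M_t},
\]
and since $M_t \to \infty$, both outer expressions tend to $1/\lambda^{\dag}$. Hence $t/M_t \to 1/\lambda^{\dag}$, which is equivalent to $M_t/t \to \lambda^{\dag}$.

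For the converse, assume $M_t/t \to \lambda^{\dag} \in (0,\infty)$. Then for every fixed $t < \infty$ we have $M_t < \infty$, so only finitely many updates can fall into any bounded interval; consequently $\beta_n \to \infty$. Using $M_{\beta_n} = n$, I apply the hypothesis with $t = \beta_n$ to obtain
\[
\frac{n}{\beta_n} \;=\; \frac{M_{\beta_n}}{\beta_n} \;\longrightarrow\; \lambda^{\dag} \qquad (n \to \infty),
\]
and inverting gives $\beta_n/n \to 1/\lambda^{\dag}$.

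There is no real obstacle here; the only care needed is in justifying that $M_t \to \infty$ (resp.\ $\beta_n \to \infty$) in each direction before passing to the limit in the sandwich, which follows immediately from $\lambda^{\dag}$ being strictly positive and finite. The argument is entirely sample-path and does not invoke any probabilistic structure, matching the deterministic setting in which the lemma is stated.
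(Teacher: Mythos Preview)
Your proof is correct; it is the standard sandwich/inverse-function argument for the sample-path version of the elementary renewal theorem. The paper itself does not prove this lemma at all --- it is quoted directly from \cite[Lemma~1.1]{El-T99} and used as a black box --- so there is no paper proof to compare against. Your write-up would serve as a self-contained replacement for that citation.
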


To proceed further, we make the following assumptions.

\begin{assumption}
\label{asm:Y-mean-converge}
\begin{itemize}
\item[(i)]
$\lim_{n \to \infty} (\beta_n/n) = 1/\lambda^{\dag}$ for some
$\lambda^{\dag} \in (0,\infty)$.

\item[(ii)]
The limits in (\ref{eq:A-peak-ave-def}) and (\ref{eq:X-ave-def}) exist 
for each $x \geq 0$.
\end{itemize}
\end{assumption}

Assumption~\ref{asm:Y-mean-converge} (i) implies that $\lim_{n \to \infty}
\beta_n = \infty$, so that $A_t$ is well-defined for $t \in [0,\infty)$.

\begin{lemma}\label{lemma:Z_t-relation}
Under Assumption \ref{asm:Y-mean-converge}, 
the limit (\ref{eq:A-ave-def}) 
exists for each $x \geq 0$ and it is given by
\[
A^{\sharp}(x) 
= 
\lambda^{\dag} \int_0^x (X^{\sharp}(y) - A_{\peak}^{\sharp}(y))\dd y.
\]
\end{lemma}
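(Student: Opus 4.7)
The plan is to exploit the piecewise-linear structure of $\{A_t\}$: on each inter-update interval the AoI grows with slope one from $X_{n-1}$ up to $A_{\peak,n}$, so a change of variable converts a time integral into a simple integral in the AoI variable. Summing over $n$ then makes the empirical distributions $X_N^{\sharp}$ and $A_{\peak,N}^{\sharp}$ appear explicitly.

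First, using (\ref{defn-Z_t}), on $[\beta_{n-1},\beta_n)$ the map $t \mapsto A_t$ is an affine bijection onto $[X_{n-1},A_{\peak,n})$, hence
\[
\int_{\beta_{n-1}}^{\beta_n}\one_{\{A_t \leq x\}}\,\dd t
= \int_{X_{n-1}}^{A_{\peak,n}}\one_{\{y \leq x\}}\,\dd y
= \int_0^x \bigl(\one_{\{X_{n-1} \leq y\}}-\one_{\{A_{\peak,n}\leq y\}}\bigr)\dd y.
\]
Summing for $n = 1,\ldots,N$, interchanging the finite sum with the integral, and re-indexing one sum as $\sum_{n=1}^N \one_{\{X_{n-1}\leq y\}} = N X_N^{\sharp}(y)+\one_{\{X_0\leq y\}}-\one_{\{X_N\leq y\}}$, where $X_N^{\sharp}(y):=N^{-1}\sum_{n=1}^N \one_{\{X_n\leq y\}}$ and similarly $A_{\peak,N}^{\sharp}(y)$, gives
\[
\int_0^{\beta_N}\one_{\{A_t\leq x\}}\,\dd t
= N\int_0^x \bigl(X_N^{\sharp}(y)-A_{\peak,N}^{\sharp}(y)\bigr)\dd y + R_N(x),
\]
with $|R_N(x)|\leq 2x$.

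Next, I would divide by $\beta_N$, apply Assumption \ref{asm:Y-mean-converge}(i) (equivalently Lemma \ref{lemma:elementary-renewal}) so that $N/\beta_N\to\lambda^{\dag}$, and then use Assumption \ref{asm:Y-mean-converge}(ii) together with bounded convergence on the compact interval $[0,x]$ (the integrands are bounded by $1$) to obtain
\[
\lim_{N\to\infty}\frac{1}{\beta_N}\int_0^{\beta_N}\one_{\{A_t\leq x\}}\,\dd t
= \lambda^{\dag}\int_0^x \bigl(X^{\sharp}(y)-A_{\peak}^{\sharp}(y)\bigr)\dd y.
\]

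Finally, to upgrade this subsequential statement along $\{\beta_N\}$ to the full limit in $T$, I would sandwich an arbitrary $T$ between $\beta_{M_T}$ and $\beta_{M_T+1}$ (with $M_t$ as in (\ref{defn-M_t})): since $\one_{\{A_t\leq x\}}\in[0,1]$,
\[
\frac{\beta_{M_T}}{T}\cdot\frac{1}{\beta_{M_T}}\int_0^{\beta_{M_T}}\one_{\{A_t\leq x\}}\dd t
\ \leq\ \frac{1}{T}\int_0^T\one_{\{A_t\leq x\}}\dd t
\ \leq\ \frac{\beta_{M_T+1}}{T}\cdot\frac{1}{\beta_{M_T+1}}\int_0^{\beta_{M_T+1}}\one_{\{A_t\leq x\}}\dd t,
\]
and Lemma \ref{lemma:elementary-renewal} gives $\beta_{M_T}/T\to 1$ and $\beta_{M_T+1}/T\to 1$, so both bounds converge to the same limit, yielding $A^{\sharp}(x)$.

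The only mildly delicate point is the boundary correction $R_N$: it comes from the fact that each interval $[\beta_{n-1},\beta_n)$ starts at $X_{n-1}$ rather than $X_n$, creating a telescoping mismatch between $X_0$ and $X_N$. Because $|R_N|\leq 2x$ is bounded independently of $N$, it is absorbed into the $O(1/\beta_N)$ term; everything else is a routine application of dominated convergence and Lemma \ref{lemma:elementary-renewal}.
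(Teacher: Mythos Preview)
Your proof is correct and follows essentially the same approach as the paper's: both use the change of variable $y=A_t$ on each inter-update interval to turn the time integral into $\int_0^x(\one_{\{X_{\cdot}\leq y\}}-\one_{\{A_{\peak,\cdot}\leq y\}})\,\dd y$, then invoke bounded convergence on $[0,x]$ together with Lemma~\ref{lemma:elementary-renewal}. The only cosmetic differences are that the paper indexes $S_n$ over $[\beta_n,\beta_{n+1})$ (so the one-index shift appears in $A_{\peak,n+1}$ rather than in $X_{n-1}$, avoiding your explicit $R_N$), and handles the continuous-$T$ limit by bounding the residual $\epsilon(x;T)/T$ directly rather than via your sandwich argument.
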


\begin{proof}
By definition, we have for $T > \beta_1$,
\begin{equation}
\frac{1}{T}\int_0^T \one_{\{A_t \leq x\}} \dd t 
= 
\frac{1}{T}\sum_{n=1}^{M_T-1} S_n(x)
+ \frac{\epsilon(x;T)}{T},
\label{eqn-180103-01}
\end{equation}
where $S_n(x)$ ($n = 1,2,\ldots$) and $\epsilon(x;T)$ are given by
\begin{align}
S_n(x) &=
\int_{\beta_n}^{\beta_{n+1}} \one_{\{A_t \leq x\}} \dd t,
\label{eq:S_n(x)-def}
\\
\epsilon(x;T) &= 
\int_0^{\beta_1} \one_{\{A_t \leq x\}} \dd t
+
\int_{\beta_{M_T}}^T \one_{\{A_t \leq x\}} \dd t.
\label{eq:epsilon}
\end{align}
Thus, to prove the theorem, it suffices to show
that
\begin{align}
\lim_{T \to \infty} \frac{1}{T}
\sum_{n=1}^{M_T-1} S_n(x)
&= 
\lambda^{\dag}
\int_0^x (X^{\sharp}(u) - A_{\peak}^{\sharp}(u)) \dd u,
\label{lim-sum_S_n(x)/T}
\\
\lim_{T \to \infty} \frac{\epsilon(x;T)}{T} &= 0.
\label{lim-ep/T}
\end{align}

We first consider (\ref{lim-ep/T}). From Lemma
\ref{lemma:elementary-renewal} and Assumption
\ref{asm:Y-mean-converge} (i), we have
\begin{equation}
\lim_{T \to \infty}\frac{M_T}{T} = \lambda^{\dag},
\label{eq:M_T/T-limit}
\end{equation}
and thus
\begin{equation}
\lim_{T \to \infty}\frac{\beta_{M_T}}{T}
=
\lim_{T \to \infty}
\left(
\frac{M_T}{T}
\cdot
\frac{\beta_{M_T}}{M_T}
\right)
=
\lim_{T \to \infty}
\frac{M_T}{T}
\cdot
\lim_{N \to \infty} 
\frac{\beta_N}{N}
= 1.
\label{eq:Y_KT-KT}
\end{equation}
It then follows from Assumption \ref{asm:Y-mean-converge} (i),
(\ref{eq:epsilon}), (\ref{eq:M_T/T-limit}), and (\ref{eq:Y_KT-KT}) that
\begin{equation}
0 
\leq
\lim_{T \to \infty} \frac{\epsilon(x;T)}{T} 
\leq
\lim_{T \to \infty} 
\left(
\frac{\beta_1}{T}
+
\frac{\beta_{M_T+1}-\beta_{M_T}}{T} 
\right)
=
\lim_{T \to \infty} \frac{M_T+1}{T} \cdot \frac{\beta_{M_T+1}}{M_T+1} 
-
\lim_{T \to \infty} \frac{\beta_{M_T}}{T} 
= 0,
\end{equation}
which proves (\ref{lim-ep/T}). 

Next we consider (\ref{lim-sum_S_n(x)/T}).
Substituting (\ref{defn-Z_t}) into (\ref{eq:S_n(x)-def}) yields
\begin{align} 
S_n(x)
&=
\int_{\beta_n}^{\beta_{n+1}} \one_{\{X_n + (t-\beta_n) \leq x\}} \dd t
\nonumber
\\
&=
\int_{X_n}^{A_{\peak,n+1}} \one_{\{u \leq x\}} \dd u
\nonumber
\\
&= \int_0^{\infty} 
\one_{\{u \leq x\}} \one_{\{X_n \leq u\}} \one_{\{A_{\peak,n+1} \geq u\}} \dd u
\nonumber
\\
&= 
\int_0^x (\one_{\{X_n \leq u\}} - \one_{\{A_{\peak,n+1} \leq u\}}) \dd u,
\label{eqn-S_n(x)}
\end{align}
where we used 
$\one_{\{A_{\peak,n+1} \leq u\}} = \one_{\{X_n \leq u\}}\one_{\{A_{\peak,n+1} \leq u\}}$
in the last equality. Applying the bounded convergence theorem to (\ref{eqn-S_n(x)}), and
using (\ref{eq:A-peak-ave-def}) and (\ref{eq:X-ave-def}), we obtain
\begin{align*}
\lim_{N \to \infty} \frac{1}{N} \sum_{n=1}^N S_n(x)
&=
\int_0^x (X^{\sharp}(u) - A_{\peak}^{\sharp}(u)) \dd u.
\end{align*}
Combining this with (\ref{eq:M_T/T-limit}) yields
\begin{align*}
\lim_{T \to \infty} \frac{1}{T} \sum_{n=1}^{M_T-1} S_n(x)
&=
\lim_{T \to \infty} \frac{M_T}{T} 
\cdot 
\frac{M_T-1}{M_T} 
\cdot
\frac{1}{M_T-1} 
\sum_{n=1}^{M_T-1} S_n(x)
\nonumber
\\
&= 
\lambda^{\dag} 
\int_0^x (X^{\sharp}(u) - A_{\peak}^{\sharp}(u)) \dd u,
\end{align*}
which completes the proof.
\end{proof}

\begin{remark}
In the proof of Lemma \ref{lemma:Z_t-relation}, we do not use the
inequality $X_n \leq A_{\peak,n}$ ($n=1,2,\ldots$) that should hold in
the context of the AoI.
\end{remark}

In the rest of this paper, we focus on queueing systems 
represented as stationary and ergodic stochastic models, where
$\{(\beta_n, X_n)\}_{n=0,1,\ldots}$ is represented as \textit{a 
random marked point process}. We further restrict our attention to the
case that time-stamps of packets are identical to their arrival times. 
To make the notation simpler, we use the following
conventions. For a non-negative random variable $Y$, let $Y(x)$ ($x
\geq 0$) denote the probability distribution function (PDF) of $Y$ and
let $y^*(s)$ ($s > 0$) denote the LST of $Y$:
\[
Y(x)=\Pr(Y \leq x),
\quad
y^*(s)=\E[\exp(-s Y)].
\]
Furthermore, let $y^{(n)}(s)$ ($s > 0$, $n = 1,2,\ldots$) denote the
$n$th derivative of $y^*(s)$:
\begin{equation}
y^{(n)}(s) 
= 
\frac{\dd^n}{\dd s^n} y^*(s)
= 
(-1)^n \E[Y^n \exp(-s Y)].
\label{eq:x^n-def}
\end{equation}
Note that $Y$ has a finite $n$th moment if and only if $\lim_{s \to 0+}
|y^{(n)}(s)| < \infty$, so that we can write $\E[Y^n] = \lim_{s \to 0+}
(-1)^n y^{(n)}(s)$ even when $\lim_{s \to 0+}(-1)^n y^{(n)}(s)$ diverges
\cite[Page 435]{Feller1971}.
For simplicity, however, we hereafter assume that all random variables
under consideration have finite moments whenever results on their 
moments are stated.

Without loss of generality, we can restrict our attention to a general
FIFO (first-in first-out) queueing system, owing to the following
observation.  In general, arriving packets are classified into two types: 
\textit{informative} and \textit{non-informative} packets. Informative
packets update information being displayed on the monitor, while
non-informative packets do not. If arriving packets are processed on
an FCFS basis, all arriving packets are informative.  On the other
hand, if the order of processing packets is controllable, it might be
reasonable to give priority to the newest packet, because it is
expected to improve the AoI performance. In such a case, older
(overtaken) packets do not update information on the monitor, so that
they are regarded as non-informative. If we \emph{ignore all
non-informative packets and observe only the stream of informative
packets}, they are processed and depart from the system in a FIFO
manner. 

We thus consider a general stationary, ergodic FIFO queueing system,
where only informative packets are visible. For each sample path,
let $\alpha_n$ ($n = 1,2,\ldots$) denote the arrival time of the $n$th
informative packet, and let $\beta_n$ ($n=1,2,\ldots$) denote the
departure (i.e., information update) time of the $n$th informative packet. 
We assume that $\beta_1 > 0 = \beta_0$, $\alpha_n \leq \alpha_{n+1}$,
and $\alpha_n \leq \beta_n < \beta_{n+1}$ ($n = 1,2,\ldots$). Let
$G_n^{\dag}$ ($n=1,2,\ldots$) denote the interarrival time of the
$(n-1)$st and the $n$th informative packets and let $D_n$ ($n =
1,2,\ldots$) denote the system delay of the $n$th informative packet:
\begin{equation}
G_n^{\dag} = \alpha_n - \alpha_{n-1},
\quad
D_n = \beta_n - \alpha_n.
\label{eq:G_n-D_n-def}
\end{equation}
Under the assumption that the time-stamp of the $n$th informative
packet is equal to its arrival time $\alpha_n$, we have
\begin{alignat}{2}
X_n &= D_n, \quad &n=1,2,\ldots,
\label{eq:X_n=D_n}
\\
A_{\peak,n}
&=
\beta_n - \alpha_{n-1}
= G_n^{\dag} + D_n,
\quad
&n = 1,2,\ldots.
\label{eq:A_peak-by-DA}
\end{alignat}
For convenience, we define $D_0 := X_0 = A_0$.

\begin{assumption}\label{assumption:main}
\begin{itemize}
\item[(i)] The mean arrival rate of informative packets is positive
and finite i.e., 
\begin{equation}
\lim_{T \to \infty} \frac{1}{T}\sum_{n=1}^{\infty}
\one_{\{\alpha_n \leq T\}} 
= 
\lambda^{\dag} 
\in 
(0,\infty),
\label{eq:lambda-condition}
\end{equation}
with probability one.

\item[(ii)]
The system is stable, i.e., the mean departure rate of informative
packets is equal to the mean arrival rate of informative packets. 
More concretely,
\begin{equation}
\lim_{T \to \infty} \frac{M_T}{T} = \lambda^{\dag},
\label{eq:rate-coincide}
\end{equation}
with probability one, where $M_t$ ($t \geq 0$) is defined in
(\ref{defn-M_t}). 

\item[(iii)]
The marked point process $\{(\beta_n, D_n)\}_{n=0,1,\ldots}$ 
is stationary and ergodic. 
\end{itemize}
\end{assumption}

In the rest of this paper, we refer to $\lambda^{\dag}$ as the mean
arrival rate of informative packets. Let $\E[G^{\dag}]$ denote the
mean interarrival time of informative packets.
\[
\E[G^{\dag}] 
= 
\lim_{N \to \infty} \frac{1}{N} \sum_{n=1}^N G_n^{\dag}
=
\lim_{N \to \infty} \frac{\alpha_N}{N}.
\]
It then follows from Lemma \ref{lemma:elementary-renewal} that 
under Assumption \ref{assumption:main} (i) and (ii), we have 
\begin{equation}
\E[G^{\dag}] = \frac{1}{\lambda^{\dag}}.
\label{eq:departure-rate-E[G]}
\end{equation}
Recall that $A^{\sharp}(x)$, $A_{\peak}^{\sharp}(x)$, 
and $X^{\sharp}(x)$ ($x \geq 0$) denote asymptotic frequency
distributions defined on a sample path (see (\ref{eq:A-ave-def}),
(\ref{eq:A-peak-ave-def}), (\ref{eq:X-ave-def}), and (\ref{eq:X_n=D_n})).
Under Assumption \ref{assumption:main} (iii), it follows from
the pointwise ergodic theorem \cite[Page 50]{Baccelli2003} that
the asymptotic frequency distributions $A_{\peak}^{\sharp}(x)$,
$X^{\sharp}(x)$, and $A^{\sharp}(x)$ ($x \geq 0$) exist and the
following equations hold with probability one:
\[
A^{\sharp}(x)=A(x), 
\;\;
A_{\peak}^{\sharp}(x)=A_{\peak}(x), 
\;\;
X^{\sharp}(x)=D(x),
\]
where $A$, $A_{\peak}$, and $D$ denote generic random variables for
the stationary AoI, peak AoI, and system delay, respectively. 
Note that by definition, we have for any $t$ and $n$,
\[
A \eqdist A_t, 
\quad
D \eqdist D_n, 
\quad
A_{\peak} \eqdist A_{\peak,n},
\]
where $\eqdist$ stands for the equality in distribution.

Theorem \ref{theorem:time-average-GIG1} presented below is thus
immediate from Lemma \ref{lemma:Z_t-relation} and basic properties of
LST.

\begin{theorem}\label{theorem:time-average-GIG1}
In the general FIFO queueing system satisfying Assumption \ref{assumption:main}, 
\begin{itemize}
\item[(i)] 
the density function $a(x)$ ($x \geq 0$) of the AoI is given by
\begin{equation}
a(x) = \lambda^{\dag} (D(x) - A_{\peak}(x)),
\label{eq:time-average-GIG1-density}
\end{equation}
\item[(ii)] 
the LST $a^*(s)$ ($s > 0$) of the AoI is given by
\begin{equation}
a^*(s) = \lambda^{\dag} \cdot \frac{d^*(s) - a_{\peak}^*(s)}{s},
\label{eq:time-average-GIG1-LST}
\end{equation}
and
\item[(iii)] 
the $k$th ($k= 1,2,\ldots$) moment of the AoI is given by
\begin{equation}
\E[A^k] 
= 
\lambda^{\dag} \cdot \frac{\E[(A_{\peak})^{k+1}] - \E[D^{k+1}]}{k+1}.
\label{eq:time-average-GIG1-moment}
\end{equation}
\end{itemize}
\end{theorem}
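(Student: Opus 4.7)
The plan is to read off all three parts directly from Lemma~\ref{lemma:Z_t-relation}, once the asymptotic frequency distributions appearing there are identified with the stationary distributions of $A$, $D$, and $A_{\peak}$. Under Assumption~\ref{assumption:main}, this identification has already been carried out in the paragraph preceding the theorem via the pointwise ergodic theorem: with probability one, $A^{\sharp}(x)=A(x)$, $X^{\sharp}(x)=D(x)$, and $A_{\peak}^{\sharp}(x)=A_{\peak}(x)$. Substituting these into the conclusion of Lemma~\ref{lemma:Z_t-relation} yields
\[
A(x) = \lambda^{\dag}\int_0^x \bigl(D(y) - A_{\peak}(y)\bigr)\,\dd y, \quad x\geq 0.
\]
Part~(i) is then immediate by differentiating in $x$: the right-hand side is absolutely continuous, so $A$ admits the density $a(x)=\lambda^{\dag}(D(x)-A_{\peak}(x))$.

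For part~(ii), I would apply the LST to the density expression and use the standard identity $\int_0^\infty e^{-sx} F(x)\,\dd x = f^*(s)/s$, valid for a nonnegative random variable $Y$ with distribution $F$, LST $f^*$, and no atom at $0$ (this is just integration by parts, or equivalently a rearrangement of $\int_0^\infty e^{-sx}(1-F(x))\,\dd x = (1-f^*(s))/s$). In the general FIFO queueing setting of this section we have $\beta_n > \alpha_n$ and $A_{\peak,n}=G_n^{\dag}+D_n$, so neither $D$ nor $A_{\peak}$ has an atom at zero, and applying the identity termwise gives
\[
a^*(s) = \int_0^\infty e^{-sx}a(x)\,\dd x = \lambda^{\dag}\cdot\frac{d^*(s)-a_{\peak}^*(s)}{s}.
\]

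For part~(iii), I would compute $\E[A^k]=\int_0^\infty x^k a(x)\,\dd x$ by invoking the tail-integral formula for moments, $\E[Y^{k+1}] = (k+1)\int_0^\infty y^k(1-Y(y))\,\dd y$. Writing $D(y)-A_{\peak}(y)=(1-A_{\peak}(y))-(1-D(y))$ and integrating against $y^k$ yields
\[
\E[A^k] = \lambda^{\dag}\int_0^\infty y^k\bigl(D(y)-A_{\peak}(y)\bigr)\,\dd y = \lambda^{\dag}\cdot\frac{\E[(A_{\peak})^{k+1}]-\E[D^{k+1}]}{k+1},
\]
as claimed. The substantive content of the theorem lies in Lemma~\ref{lemma:Z_t-relation} and in the pointwise-ergodic identifications of the asymptotic frequency distributions; what remains is purely mechanical Laplace-transform and tail-integral bookkeeping, so I do not anticipate any real obstacle. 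The only minor point of care is the absence of atoms at $0$ in $D$ and $A_{\peak}$ (required for the clean Laplace identity) and the finiteness of the $(k+1)$st moments in part~(iii), both of which the setting and the paper's standing conventions supply.
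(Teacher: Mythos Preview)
Your proposal is correct and follows essentially the same approach as the paper, which states that the theorem is ``immediate from Lemma~\ref{lemma:Z_t-relation} and basic properties of LST'' after the pointwise-ergodic identifications. One small remark: the identity $\int_0^\infty e^{-sx}F(x)\,\dd x = f^*(s)/s$ actually holds regardless of atoms at $0$ (just subtract the tail identity from $1/s$), so your caveat about $D$ and $A_{\peak}$ having no atom at zero is unnecessary---which is fortunate, since the paper only assumes $\alpha_n\leq\beta_n$, not strict inequality.
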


\begin{remark}
Letting $k=1$ in (\ref{eq:time-average-GIG1-moment}), we obtain
\begin{equation}
\E[A] = \lambda^{\dag} \cdot \frac{\E[(A_{\peak})^2] - \E[D^2]}{2}.
\label{eq:E[A]-closed}
\end{equation}
The formula for the mean AoI in (\ref{eq:mean-AoI}) is thus reproduced
from (\ref{eq:A_peak-by-DA}), (\ref{eq:departure-rate-E[G]}), and
(\ref{eq:E[A]-closed}).
\end{remark}

\begin{remark}
(\ref{eq:A_peak-by-DA}) and (\ref{eq:departure-rate-E[G]}) imply
\begin{equation}
\lambda^{\dag} = \frac{1}{\E[A_{\peak}] - \E[D]},
\label{eq:lmd^dag-Ap-D}
\end{equation}
so that $a^*(s)$ in (\ref{eq:time-average-GIG1-LST}) trivially
satisfies  $\lim_{s \to 0+} a^*(s) = 1$.
\end{remark}

\begin{remark}
\label{remark:general-FIFO}

Theorem \ref{theorem:time-average-GIG1} can be applied to information
update systems with any service disciplines: Recall that we introduced
the general FIFO queueing system as a virtual system where only
informative packets are visible, and the original system (with
non-informative packets) is not required to be FIFO.

\end{remark}

In the following section, we present applications of Theorem
\ref{theorem:time-average-GIG1} to single-server queues.
As shown in (\ref{eq:x^n-def}), the mean AoI can also be obtained by
taking the derivative of $a^*(s)$ and letting $s \to 0+$, which we
will use repeatedly in the rest of this paper.

\section{Applications to Single-Server Queues}
\label{sec:applications}

In this section, we present analytical results for single-server
queues operated under four service disciplines: FCFS, preemptive
LCFS, and non-preemptive LCFS with and without discarding.
We start with summarizing symbols used in this section.
Throughout this section, we assume that interarrival times 
of packets, which are possibly informative or non-informative, are
i.i.d. We also assume that service times of packets are i.i.d.
Let $G$ (resp.\ $H$) denote a generic random variable for interarrival
times (resp.\ service times). We define $\lambda := 1/\E[G]$ as the mean
arrival rate of packets, and $\rho := \lambda \E[H]$ as the traffic
intensity. 

Let $\tilde{G}$ denote a generic random variable for
residual interarrival times, i.e., the time to the next arrival from a
randomly chosen time instant. Similarly, let $\tilde{H}$ denote 
a generic random variable for residual service times.
By definition, their LSTs are given by
\[
\tilde{h}^*(s) = \frac{1-h^*(s)}{s\E[H]},
\qquad
\tilde{g}^*(s) = \frac{1-g^*(s)}{s\E[G]},
\qquad
s > 0.
\]

Let $G_n^{\dag}$, $\lambda^{\dag}$, $D_n$, and $A_{\peak,n}$ denote
the interarrival times of the $(n-1)$st and the $n$th informative
packets, the mean arrival rate of informative packets, the system
delay of the $n$th informative packet, and the $n$th peak AoI,
respectively, as defined in the preceding section. In addition, let 
$H_n^{\dag}$ ($n = 1,2,\ldots$) denote the service time of the $n$th
informative packet. We also define 
$G^{\dag}$, $H^{\dag}$, $D$, and $A_{\peak}$ as generic random
variables for $G_n^{\dag}$, $H_n^{\dag}$, $D_n$, and $A_{\peak,n}$, 
respectively.
Note that $G^{\dag}$ does not follow the same distribution as $G$ (in
particular, $\lambda^{\dag} \neq \lambda$) unless the service
discipline is FCFS  (see Remark \ref{remark:general-FIFO}).
Similarly, we will see that $H_n^{\dag}$ follows a different
distribution from $H$ under the preemptive LCFS service discipline. 

To avoid trivialities, we make the following assumption in the rest of
this paper.

\begin{assumption} \label{assumption:aperiodic}
At least one of $G$ and $H$ is non-deterministic, i.e., 
the system is not a D/D/1 queue.
\end{assumption}

The rest of this section is organized as follows.
We analyze FCFS queues in Section \ref{ssec:FCFS},
preemptive LCFS queues in Section \ref{ssec:LCFS-P},
and non-preemptive LCFS queues with and without discarding in Section
\ref{ssec:LCFS-NP}. We then provide comparison results among these
service disciplines in Section \ref{ssec:comparison}.

\subsection{Applications to FCFS Queues} \label{ssec:FCFS}

In this subsection, we consider the stationary FCFS GI/GI/1, M/GI/1,
and GI/M/1 queues. Throughout this subsection, we assume $\rho < 1$,
so that the system is stable. 

\begin{remark} \label{remark:mixing-FCFS}
Under the stability condition $\rho < 1$, the system
delay $\{D_n\}_{n=1,2,\ldots}$ of informative packets 
is shown to be a regenerative process with finite mean 
regeneration time \cite[Chapter X, Proposition 1.3]{Asmussen2003},
where the system delay of an informative packet which finds the system
empty on arrival is a regeneration point.
Under Assumption \ref{assumption:aperiodic}, it is then readily
verified that $\{(\beta_n,D_n)\}_{n=0,1,\ldots}$ is 
mixing \cite[Page 49]{Baccelli2003}, so that it is also ergodic.
Theorem \ref{theorem:time-average-GIG1} is thus applicable to FCFS
single-server queues discussed below.
\end{remark}

Under the FCFS service discipline, all arriving packets are
informative, so that
\begin{equation}
G^{\dag} \eqdist G, 
\qquad
\lambda^{\dag} = \lambda = \frac{1}{\E[G]}.
\label{eq:lmd^dag=lmd-FCFS}
\end{equation}
In addition, we have an alternative formula for the $n$th peak AoI 
$A_{\peak,n} = \beta_n - \alpha_{n-1}$ (cf. (\ref{eq:A_peak-by-DA}))
as depicted in Figure \ref{figure:A-peak}:
\begin{align}
A_{\peak,n} &= 
\begin{cases}
G_n^{\dag} + H_n^{\dag}, & G_n^{\dag} > D_{n-1},
\\
D_{n-1} + H_n^{\dag}, & G_n^{\dag} \leq D_{n-1}
\end{cases}
\label{eq:alternative-peak-case}
\\
&= \max(D_{n-1}, G_n^{\dag}) + H_n^{\dag}.
\label{eq:alternative-peak}
\end{align}
Note here that $G_n^{\dag}$ is independent of $D_{n-1}$. 
From (\ref{eq:lmd^dag=lmd-FCFS}), (\ref{eq:alternative-peak}), and
Theorem \ref{theorem:time-average-GIG1}, we then obtain the
following result.

\begin{figure}[t]
\centering
\subfloat[Case of $G_n^{\dag} > D_{n-1}$.]{
\includegraphics[scale=0.5]{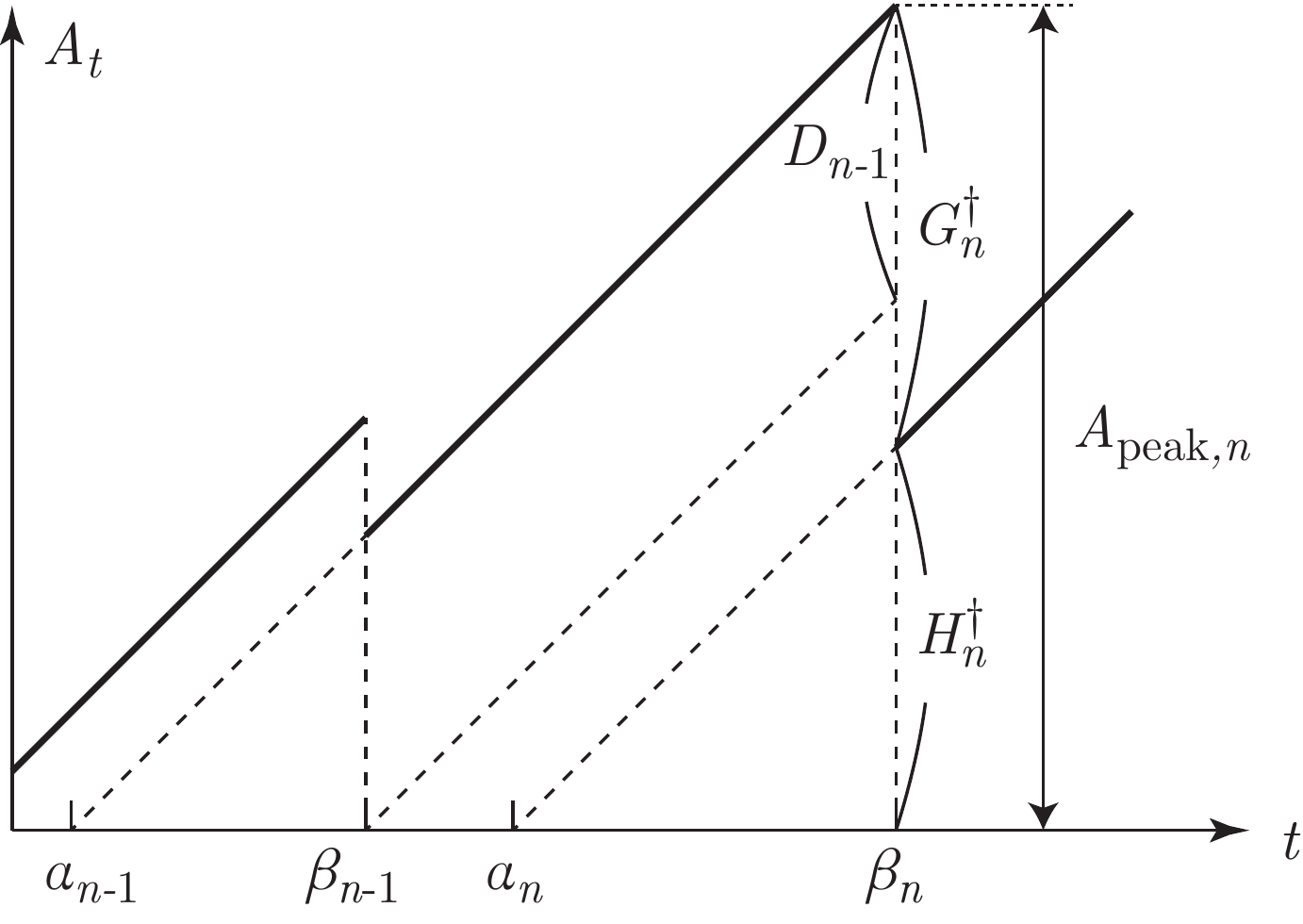}
}
\\[4ex]
\subfloat[Case of $G_n^{\dag} \leq D_{n-1}$.]{
\includegraphics[scale=0.5]{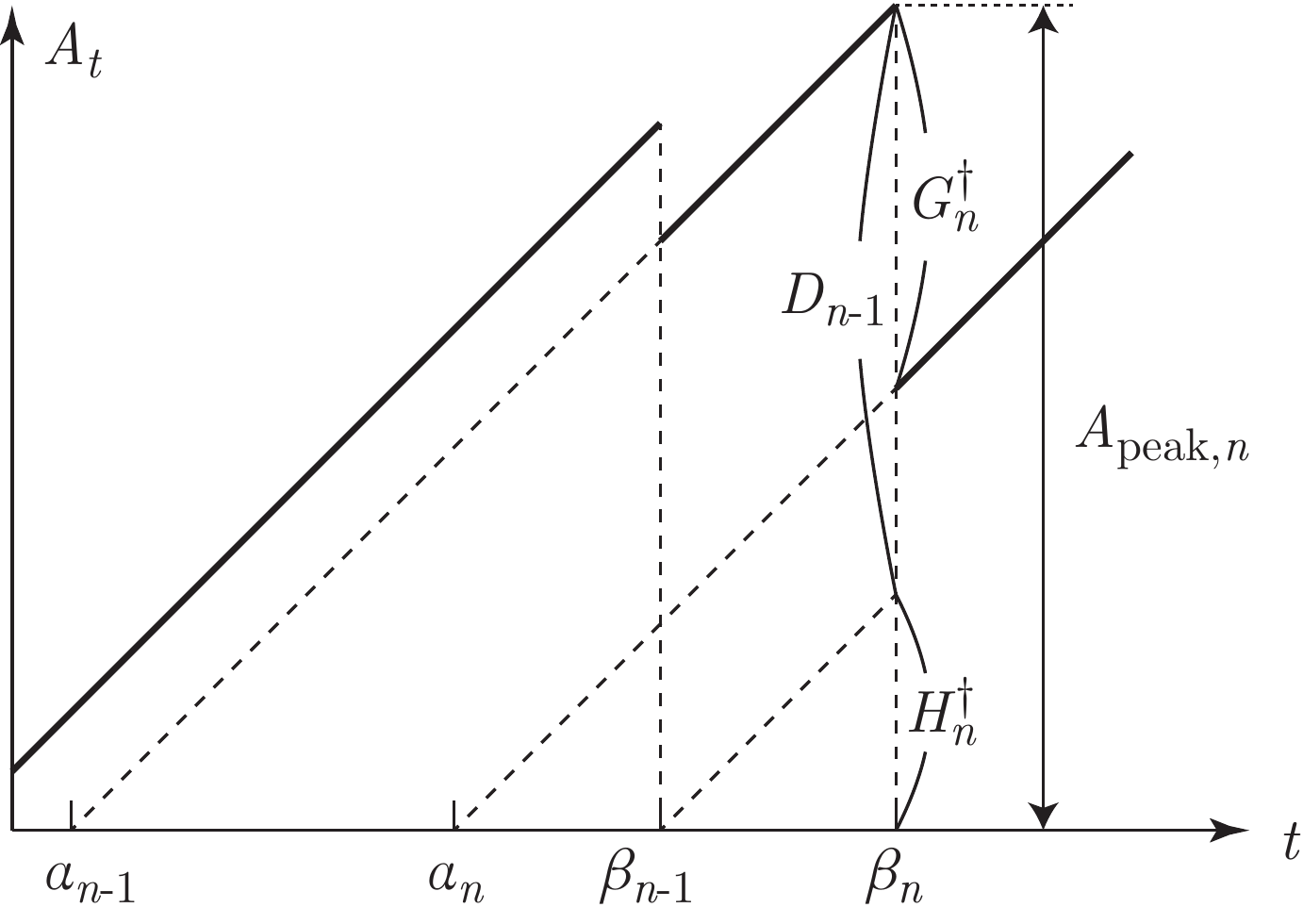}
}
\caption{The relation of the peak AoI $A_{\peak,n}$, the system delay
$D_{n-1}$, the interarrival time $G_n^{\dag}$, and the service time
$H_n^{\dag}$ in the FCFS GI/GI/1 queue.}
\label{figure:A-peak}
\end{figure}

\begin{lemma} \label{lemma:a_peak-GIG1}
In the FCFS GI/GI/1 queue, the LST $a^*(s)$
($s > 0$) of the AoI is given in terms of the system delay
distribution $D(x)$ ($x \geq 0$) by 
\[
a^*(s) = \frac{d^*(s) - a_{\peak}^*(s)}{s\E[G]},
\]
where 
\begin{equation}
a_{\peak}^*(s) 
=
\biggl[
\int_0^{\infty} e^{-s x} G(x) \dd D(x) 
+
\int_0^{\infty} e^{-s x} D(x) \dd G(x)
\nonumber
- \E\left[\one_{\{G=D\}}e^{-sG}\right]
\biggr]
h^*(s).
\label{eq:a_peak-GIG1}
\end{equation}
\end{lemma}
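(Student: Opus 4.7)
The plan is straightforward once Theorem~\ref{theorem:time-average-GIG1}(ii) and the sample-path identity (\ref{eq:alternative-peak}) are combined. Since (\ref{eq:lmd^dag=lmd-FCFS}) gives $\lambda^{\dag} = 1/\E[G]$ under FCFS, the first displayed formula of the lemma follows immediately from (\ref{eq:time-average-GIG1-LST}) once $a_{\peak}^*(s)$ has been identified. The real task is therefore to compute $a_{\peak}^*(s)$ starting from
\[
A_{\peak} \eqdist \max(D, G) + H,
\]
which is the stationary version of (\ref{eq:alternative-peak}) (noting that under FCFS, $G^{\dag} \eqdist G$ and $H^{\dag} \eqdist H$).

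The key structural observation is independence. In a GI/GI/1 queue the stationary system delay $D$ (a distributional copy of $D_{n-1}$) is a measurable function of past interarrival and service times only, so it is independent of both the next interarrival time $G$ and the next service time $H$; hence $D$, $G$, and $H$ are mutually independent in steady state. Consequently,
\[
a_{\peak}^*(s) = \E\!\left[e^{-s\max(D, G)}\right] \cdot h^*(s).
\]
What remains is to evaluate $\E[e^{-s\max(D,G)}]$ with $D$ and $G$ independent. Starting from the set identity $\one = \one_{\{D \leq G\}} + \one_{\{G \leq D\}} - \one_{\{G = D\}}$ and multiplying by $e^{-s\max(D, G)}$, the two non-diagonal expectations convert by independence to Stieltjes integrals: $\E[e^{-sG}\one_{\{D \leq G\}}] = \int_0^{\infty} e^{-sx} D(x) \dd G(x)$ and $\E[e^{-sD}\one_{\{G \leq D\}}] = \int_0^{\infty} e^{-sx} G(x) \dd D(x)$. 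Subtracting the diagonal contribution $\E[\one_{\{G = D\}} e^{-sG}]$ and multiplying by $h^*(s)$ yields exactly (\ref{eq:a_peak-GIG1}).

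The computation itself is essentially routine; the only point requiring attention is the careful handling of the diagonal event $\{G = D\}$. When either $G$ or $D$ has a density this correction vanishes, but since the lemma is stated in full generality for GI/GI/1, common atoms of the two distribution functions (which can arise, for instance, from deterministic interarrivals combined with atoms in the delay distribution) must be counted exactly once rather than twice. This is the only substantive bookkeeping step in the proof, and the one I would state most carefully.
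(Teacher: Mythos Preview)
Your proposal is correct and follows exactly the paper's approach: the paper derives the lemma directly from (\ref{eq:lmd^dag=lmd-FCFS}), (\ref{eq:alternative-peak}), and Theorem~\ref{theorem:time-average-GIG1}, noting the independence of $G_n^{\dag}$ from $D_{n-1}$, which is precisely the argument you spell out. Your explicit handling of the diagonal term via the identity $\one = \one_{\{D \leq G\}} + \one_{\{G \leq D\}} - \one_{\{G = D\}}$ is a clean way to obtain (\ref{eq:a_peak-GIG1}) and fills in the one computational detail the paper leaves implicit.
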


For the FCFS M/GI/1 and GI/M/1 queues, expressions for the system
delay distribution are well-known:
\begin{itemize}
\item[(i)] In the FCFS M/GI/1 queue \cite[Page 199]{Klei75},
\begin{equation}
d^*(s) = \frac{(1-\rho)s}{s-\lambda +\lambda h^*(s)} \cdot h^*(s).
\label{eq:MG1-w_q-w}
\end{equation}
\item[(ii)] In the FCFS GI/M/1 queue \cite[Page 252]{Klei75},
\begin{align}
d^*(s) 
&=
\frac{\mu-\mu\gamma}{s+\mu-\mu\gamma},
\label{eq:d^*(s)-GM1}
\end{align}
where $\gamma$ denotes the unique solution of
\begin{equation}
x=g^*(\mu-\mu x),
\qquad
0 < x < 1.
\label{eq:gamma-def}
\end{equation}
\end{itemize}
Therefore, we can obtain the LST $a^*(s)$ of $A$ by specializing Lemma
\ref{lemma:a_peak-GIG1} with these equations, and 
taking the derivatives of $a^*(s)$, we obtain 
moments of the AoI in these special cases.
\begin{theorem}\label{corollary:MG1-GM1-LST}

\begin{itemize}
\item[(i)] In the FCFS M/GI/1 queue,
\begin{align}
a^*(s) 
&=
\rho d^*(s) \tilde{h}^*(s) 
+ 
d^*(s+\lambda) \cdot \frac{\lambda}{s+\lambda} \cdot h^*(s)
\nonumber
\\
&=
d^*(s) - \frac{(1-\rho) s}{s+\lambda h^*(s+\lambda)} \cdot h^*(s),
\label{eq:LST-M/GI/1}
\\
\E[A] 
&= 
\frac{\lambda \E[H^2]}{2(1-\rho)} + \E[H]
+
\frac{1-\rho}{\rho h^*(\lambda)} \cdot \E[H],
\label{eq:mean-age-MG1}
\\
\E[A^2] 
&=
\frac{2(1-\rho)\left[1 + \rho h^*(\lambda) - \lambda (-h^{(1)}(\lambda))\right]
(\E[H])^2}{(\rho h^*(\lambda))^2}
+ 
\frac{\lambda \E[H^3]}{3(1-\rho)} 
+ 
\frac{(\lambda \E[H^2])^2}{2(1-\rho)^2}
+
\frac{\E[H^2]}{1-\rho}.
\label{eq:2nd-age-MG1}
\end{align}
\item[(ii)] In the FCFS GI/M/1 queue,
\begin{align}
a^*(s) 
& =
\biggl[
\rho d^*(s)
+ \tilde{g}^*(s)-\tilde{g}^*(s+\mu-\mu\gamma) 
\biggr] 
\frac{\mu}{s+\mu},
\nonumber
\\
\E[A]
&=
\frac{\E[G^2]}{2\E[G]} + \E[H]
+
\frac{\rho}{1-\gamma} \Bigl( -g^{(1)}(\mu-\mu\gamma) \Bigr),
\label{eq:mean-age-GM1}
\\
\E[A^2] 
&= 
\frac{\E[G^3]}{3\E[G]} + \rho \E[G^2] + 2(\E[H])^2
+
\frac{\rho}{1-\gamma}\biggl[
g^{(2)}(\mu-\mu\gamma) 
+
2\left(1 + \frac{1}{1-\gamma}\right) 
\Bigl(-g^{(1)}(\mu-\mu\gamma)\Bigr) \E[H] 
\biggr].
\label{eq:2nd-age-GM1}
\end{align}
\end{itemize}
\end{theorem}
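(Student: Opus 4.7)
The plan is to specialize Lemma~\ref{lemma:a_peak-GIG1} to each case and then differentiate the resulting LST to obtain the moments. Since $G$ is continuous in (i) (Poisson arrivals) and $D$ is continuous in (ii) (by the closed form (\ref{eq:d^*(s)-GM1})), and since $G_n^{\dag}$ and $D_{n-1}$ are independent in both models, the correction term $\E[\one_{\{G=D\}}e^{-sG}]$ in (\ref{eq:a_peak-GIG1}) vanishes in both cases.

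For case (i), I would substitute $G(x) = 1-e^{-\lambda x}$ and $\dd G(x) = \lambda e^{-\lambda x}\dd x$ into (\ref{eq:a_peak-GIG1}). The elementary identity $\int_0^{\infty}e^{-sx}Y(x)\dd x = y^*(s)/s$, valid for any non-negative $Y$, reduces the first integral to $d^*(s)-d^*(s+\lambda)$ and the second to $\lambda d^*(s+\lambda)/(s+\lambda)$. Plugging these back into Lemma~\ref{lemma:a_peak-GIG1} and using $\rho\tilde{h}^*(s) = \lambda(1-h^*(s))/s$ yields the first equality in (\ref{eq:LST-M/GI/1}); the second equality then follows by inserting the Pollaczek--Khinchine expression (\ref{eq:MG1-w_q-w}) into $d^*(s)$ and $d^*(s+\lambda)$ and simplifying. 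For case (ii), the same strategy applies with the roles of $G$ and $D$ effectively swapped: (\ref{eq:d^*(s)-GM1}) says $D$ is exponential with rate $\mu-\mu\gamma$, so the two integrals reduce to $\mu(1-\gamma)g^*(s+\mu-\mu\gamma)/(s+\mu-\mu\gamma)$ and $g^*(s)-g^*(s+\mu-\mu\gamma)$, respectively. Using $\tilde{g}^*(s)=(1-g^*(s))/(s\E[G])$, the defining relation $\gamma = g^*(\mu-\mu\gamma)$, and the fact that $\tilde{h}^*(s) = h^*(s) = \mu/(s+\mu)$ for exponential service, the LST collapses into the stated product form.

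The moment formulas follow by differentiating $a^*(s)$ once and twice and letting $s \to 0+$ via (\ref{eq:x^n-def}). In case (i) the needed ingredients are the Pollaczek--Khinchine values of $\E[D]$ and $\E[D^2]$, together with $h^*(\lambda)$ and $h^{(1)}(\lambda) = -\E[H e^{-\lambda H}]$; in case (ii) they are the moments of the exponential $D$, the moments of $G$, and the first two derivatives of $g^*(s+\mu-\mu\gamma)$ at $s=0$. The main obstacle is purely algebraic bookkeeping: the term $(1-\rho)sh^*(s)/[s+\lambda h^*(s+\lambda)]$ in case (i) and the difference $\tilde{g}^*(s)-\tilde{g}^*(s+\mu-\mu\gamma)$ in case (ii) must be Taylor-expanded to second order in $s$ before the $k=1,2$ coefficients can be read off cleanly. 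As a cross-check for $\E[A]$, the representation (\ref{eq:alternative-peak}) with the independence of $D_{n-1}$ and $G_n^{\dag}$ gives a direct route to $\E[(A_{\peak})^2]$, which reproduces $\E[A]$ through (\ref{eq:E[A]-closed}).
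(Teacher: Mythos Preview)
Your proposal is correct and follows exactly the route the paper indicates: specialize Lemma~\ref{lemma:a_peak-GIG1} using the explicit exponential form of $G$ in case (i) and of $D$ in case (ii), substitute the known delay LSTs (\ref{eq:MG1-w_q-w}) and (\ref{eq:d^*(s)-GM1}), and then differentiate to extract the moments. The paper gives no further detail than this, so your write-up is in fact more explicit than the original; the algebraic cross-check via (\ref{eq:alternative-peak}) and (\ref{eq:E[A]-closed}) is a nice addition.
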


In Figure \ref{fig:FCFS_DM1}, the mean $\E[A]$ and the standard
deviation $\mathrm{SD}[A]$ of the AoI in the FCFS
D/M/1 queue with $\E[H]=1$ are plotted as functions of $\rho$.
From this figure, we observe that $\E[A]$ and $\mathrm{SD}[A]$ 
take the minimum values at different values of $\rho$: 
$\E[A]$ is minimal at $\rho = \rho^* \approx 0.516885$,
while $\mathrm{SD}[A]$ is minimal at $\rho \approx 0.408982$.
Figure \ref{fig:FCFS_DM1_E_SD} shows the parametric curve of
$\E[A]$ and $\mathrm{SD}[A]$ with parameter $\rho$.
We observe that given the same mean AoI $\E[A]$, the smaller standard
deviation $\mathrm{SD}[A]$ of the AoI is achieved by the smaller traffic
intensity, which implies that the fewer arrival rate is more effective
than the excessive arrival rate.

\begin{figure}[tbp]
\centering
\subfloat[\mbox{$\E[A]$ and $\mathrm{SD}[A]$ as functions of $\rho$}.]{
\includegraphics[scale=1.0]{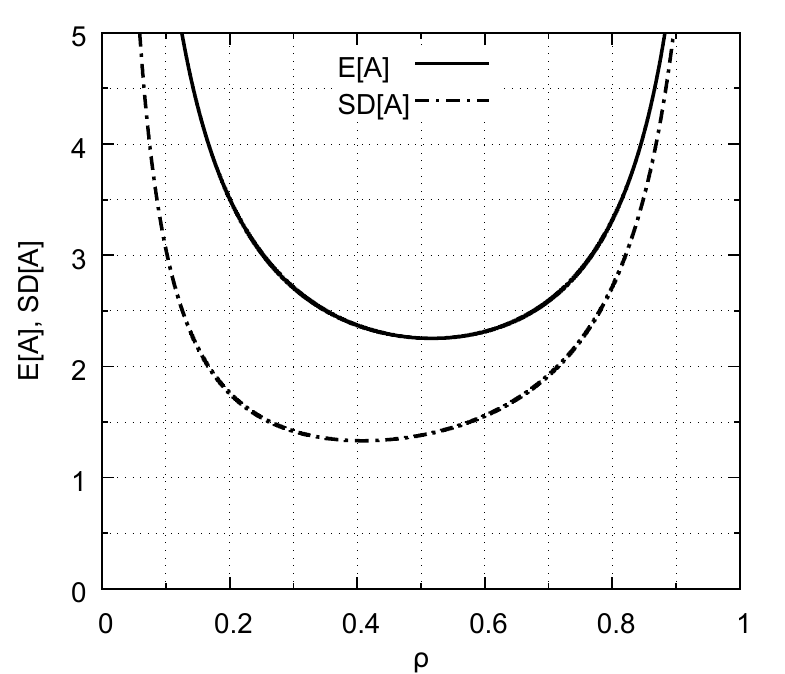}
\label{fig:FCFS_DM1}
}
\subfloat[\mbox{Relation of $\E[A]$ and $\mathrm{SD}[A]$ (parameterized by
$\rho$)}]{
\includegraphics[scale=1.0]{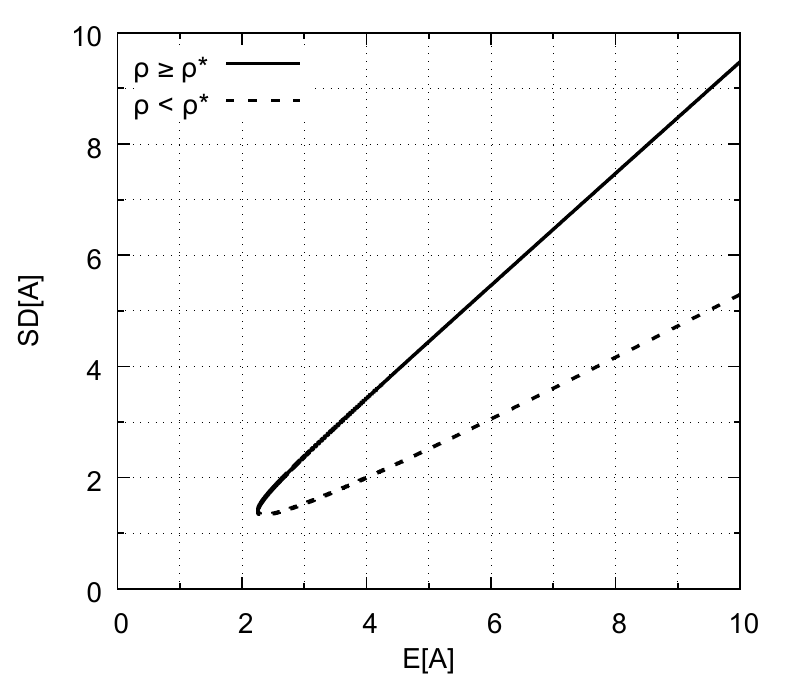}
\label{fig:FCFS_DM1_E_SD}
}
\caption{The mean $\E[A]$ and the standard deviation $\mathrm{SD}[A]$
of the AoI in the FCFS D/M/1 queue ($\E[H]=1$).}
\end{figure}

Before closing this subsection, we derive upper and lower bounds of the
mean AoI $\E[A]$ in the general FCFS GI/GI/1 queue. We rewrite
(\ref{eq:mean-AoI}) to be 
\begin{equation}
\E[A] = \E[D]
+
\frac{1 + (\Cv[G])^2}{2} \cdot \E[G]
+ \frac{\Cov[G_n^{\dag},D_n]}{\E[G]},
\label{eq:A-mean}
\end{equation}
where $\Cv[\cdot]$ denotes the coefficient of variation 
(the standard deviation divided by the mean) and
$\Cov[G_n^{\dag},D_n]$ denotes the covariance of $G_n^{\dag}$ and
$D_n$. Note that $\Cov[G_n^{\dag},D_n]$ does not depend on $n$ because
of the stationarity of the system. 

\begin{lemma}\label{lemma:cov-D-G}
In the FCFS GI/GI/1 queue,
$\Cov[G_n^{\dag},D_n]$ is bounded as follows.
\[
- \E[G]\E[D] \Pr(G < \E[G]) \leq \Cov[G_n^{\dag},D_n] \leq 0.
\]
\end{lemma}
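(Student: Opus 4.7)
The plan is to exploit the Lindley recursion $D_n = (D_{n-1} - G_n^{\dag})^+ + H_n^{\dag}$, valid under the FCFS discipline, together with the fact that in a GI/GI/1 queue $G_n^{\dag}$ is independent of the pair $(D_{n-1}, H_n^{\dag})$. Using this independence and the fact that $\Cov[G_n^{\dag}, H_n^{\dag}] = 0$, I would first rewrite
\begin{equation*}
\Cov[G_n^{\dag}, D_n]
= \Cov\bigl[G_n^{\dag},\, (D_{n-1}-G_n^{\dag})^+\bigr]
= \E\bigl[(G_n^{\dag} - \E[G])(D_{n-1}-G_n^{\dag})^+\bigr],
\end{equation*}
so that the problem reduces to bounding a single expectation in the independent pair $(G_n^{\dag}, D_{n-1})$.

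For the upper bound, I would condition on $D_{n-1}=d$: the map $g\mapsto(d-g)^+$ is non-increasing in $g$ while $g\mapsto g-\E[G]$ is non-decreasing, so by Chebyshev's association inequality (equivalently, by introducing an independent copy $G'$ of $G_n^{\dag}$ and noting that $(G_n^{\dag}-G')\bigl((d-G_n^{\dag})^+-(d-G')^+\bigr)$ is pointwise non-positive) the conditional covariance is non-positive. Integrating over $D_{n-1}$ preserves the sign and yields $\Cov[G_n^{\dag}, D_n] \leq 0$.

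For the lower bound, I would split the integrand at $G_n^{\dag}=\E[G]$. On the event $\{G_n^{\dag}\geq \E[G]\}$ the integrand is non-negative and can simply be dropped. On $\{G_n^{\dag}<\E[G]\}$ the elementary bounds $G_n^{\dag}-\E[G] \geq -\E[G]$ (using $G_n^{\dag}\geq 0$) and $(D_{n-1}-G_n^{\dag})^+\leq D_{n-1}$ give
\begin{equation*}
(G_n^{\dag}-\E[G])(D_{n-1}-G_n^{\dag})^+ \;\geq\; -\E[G]\, D_{n-1}\, \one_{\{G_n^{\dag}<\E[G]\}}.
\end{equation*}
Taking expectation, invoking the independence of $G_n^{\dag}$ and $D_{n-1}$, and replacing $\E[D_{n-1}]$ by $\E[D]$ by stationarity yields exactly $-\E[G]\E[D]\Pr(G<\E[G])$.

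The main obstacle, such as it is, lies in the initial algebraic rewriting: once one recognises that $\Cov[G_n^{\dag}, D_n]$ equals $\E[(G_n^{\dag}-\E[G])(D_{n-1}-G_n^{\dag})^+]$, both directions reduce to elementary monotonicity and splitting arguments that use neither transform methods nor any distributional assumption beyond the GI/GI structure. The cleverest part is centering by $\E[G]$ before splitting, which is what makes the integrand cleanly signed on $\{G_n^{\dag}\geq\E[G]\}$ and leaves only the half-line $\{G_n^{\dag}<\E[G]\}$ to be controlled.
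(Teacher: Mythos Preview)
Your proof is correct and follows essentially the same route as the paper: both start from Lindley's recursion, reduce $\Cov[G_n^{\dag},D_n]$ to $\E[(G_n^{\dag}-\E[G])(D_{n-1}-G_n^{\dag})^+]$ (the paper writes this as an integral against $\dd D(y)$), and obtain the lower bound by the very same split at $G=\E[G]$ followed by the bounds $G-\E[G]\ge -\E[G]$ and $(D_{n-1}-G)^+\le D_{n-1}$. The only difference is in the upper bound: you invoke Chebyshev's association inequality for one increasing and one decreasing function of $G$, whereas the paper argues via Jensen's inequality for the concave map $x\mapsto(y-x)x$ combined with $\E[G\mid G\le y]\le\E[G]$---a slightly longer but equivalent device.
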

The proof of Lemma \ref{lemma:cov-D-G} is provided in Appendix
\ref{appendix:cov-D-G}.

\begin{theorem}\label{thm:A-mean-bound}
In the FCFS GI/GI/1 queue, the mean AoI is bounded 
as follows.
\begin{align}
\E[A] 
&\geq
\E[D]\Pr(G \geq \E[G]) + \frac{1 + (\Cv[G])^2}{2} \cdot \E[G],
\label{eq:A-mean-lower-bound}
\\
\E[A] 
&\leq 
\E[D] + \frac{1 + (\Cv[G])^2}{2} \cdot \E[G].
\label{eq:A-mean-upper-bound}
\end{align}
\end{theorem}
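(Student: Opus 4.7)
The plan is to derive both bounds as a direct consequence of the representation (\ref{eq:A-mean}) combined with Lemma \ref{lemma:cov-D-G}. The idea is that (\ref{eq:A-mean}) cleanly separates $\E[A]$ into a part depending only on marginals of $D$ and $G$, plus a covariance term $\Cov[G_n^{\dag},D_n]/\E[G]$ which captures the dependence between the interarrival time $G_n^{\dag}$ and the system delay $D_n$ of the same informative packet. Since Lemma \ref{lemma:cov-D-G} pins down the sign and magnitude of that covariance, the bounds will follow by arithmetic.

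Concretely, I would first invoke Lemma \ref{lemma:cov-D-G}, divide the two-sided inequality by $\E[G] > 0$, to get
\[
-\E[D]\Pr(G < \E[G]) \;\leq\; \frac{\Cov[G_n^{\dag},D_n]}{\E[G]} \;\leq\; 0.
\]
Substituting this into (\ref{eq:A-mean}), the upper estimate is immediate: the covariance term is nonpositive, so dropping it yields (\ref{eq:A-mean-upper-bound}). For the lower estimate, adding $-\E[D]\Pr(G < \E[G])$ to the first two terms on the right-hand side of (\ref{eq:A-mean}) gives
\[
\E[A] \;\geq\; \E[D]\bigl(1-\Pr(G < \E[G])\bigr) + \frac{1+(\Cv[G])^2}{2}\cdot \E[G],
\]
and using $1-\Pr(G<\E[G]) = \Pr(G \geq \E[G])$ produces (\ref{eq:A-mean-lower-bound}).

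Since the stationarity asserted in Assumption \ref{assumption:main}(iii) (and verified in Remark \ref{remark:mixing-FCFS} for stable FCFS GI/GI/1) makes $\Cov[G_n^{\dag},D_n]$ independent of $n$, no further care about the indexing is needed. The only nontrivial ingredient is Lemma \ref{lemma:cov-D-G}, whose proof is deferred to Appendix \ref{appendix:cov-D-G}; once that lemma is in hand, the theorem reduces to plugging bounds into an identity. Thus the theorem itself is a two-line calculation, and the genuine obstacle lies in establishing the covariance bound of the lemma — in particular the lower bound $-\E[G]\E[D]\Pr(G<\E[G])$, which presumably exploits Lindley's recursion $D_n = \max(D_{n-1}+H_{n-1}-G_n^{\dag},0)$ together with the fact that $G_n^{\dag}$ is independent of $D_{n-1}$ and $H_{n-1}$, whereas the inequality $\Cov[G_n^{\dag},D_n]\leq 0$ should reflect the intuitive monotonicity that larger interarrival gaps tend to reduce the waiting time experienced by the arriving packet.
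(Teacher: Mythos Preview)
Your proposal is correct and matches the paper's approach exactly: the paper's proof of Theorem~\ref{thm:A-mean-bound} is a one-line appeal to (\ref{eq:A-mean}) and Lemma~\ref{lemma:cov-D-G}, which is precisely what you spell out. Your additional commentary on how the lemma itself is proved (via Lindley's recursion and the independence of $G_n^{\dag}$ from $D_{n-1}$) is also accurate and aligns with Appendix~\ref{appendix:cov-D-G}.
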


\begin{proof}
Theorem \ref{thm:A-mean-bound} follows from (\ref{eq:A-mean}) and
Lemma \ref{lemma:cov-D-G}.
\end{proof}

\begin{remark}
The bounds in Theorem \ref{thm:A-mean-bound} are tight in the sense that
both equalities hold in the D/GI/1 queue.
\end{remark}

\begin{remark}
It is known that the mean delay $\E[D]$ in the FCFS GI/GI/1 queue is
bounded by \cite{Dale77,King70}
\begin{align}
\E[D] &\geq \E[H] + \E\left[\{\max(0,H-G)\}^2\right],
\nonumber
\\
\E[D] 
&\leq 
\E[H] 
+ 
\frac{\E[G]}{2(1-\rho)} 
\bigl(\rho(2-\rho)(\Cv[G])^2 + \rho^2 (\Cv[H])^2\bigr).
\label{eq:ED-upper-GG1}
\end{align}
Bounding $\E[D]$ in (\ref{eq:A-mean-lower-bound}) and
(\ref{eq:A-mean-upper-bound}) by these inequalities, we can obtain a
bound for $\E[A]$ in terms of only the distributions of $G$ and $H$.
\end{remark}

Because (\ref{eq:A_peak-by-DA}) implies 
\begin{equation}
\E[A_{\peak}] = \E[D] + \E[G],
\label{eq:E[A_peak]}
\end{equation}
the following corollary is immediate from (\ref{eq:A-mean-upper-bound}).

\begin{corollary}\label{cor:A<A_peak}
In the FCFS GI/GI/1 queue, if $\Cv[G] \leq 1$, then
$\E[A] \leq \E[A_{\peak}]$.
\end{corollary}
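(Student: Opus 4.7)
The plan is to combine the upper bound (\ref{eq:A-mean-upper-bound}) on $\E[A]$ with the expression (\ref{eq:E[A_peak]}) for $\E[A_{\peak}]$ and reduce the desired inequality to the coefficient-of-variation hypothesis. More specifically, from (\ref{eq:A-mean-upper-bound}) we have
\[
\E[A] \leq \E[D] + \frac{1 + (\Cv[G])^2}{2} \cdot \E[G],
\]
while (\ref{eq:E[A_peak]}) gives $\E[A_{\peak}] = \E[D] + \E[G]$.

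Subtracting, the claim $\E[A] \leq \E[A_{\peak}]$ is equivalent to
\[
\frac{1 + (\Cv[G])^2}{2} \cdot \E[G] \leq \E[G],
\]
which, because $\E[G] > 0$, is in turn equivalent to $(\Cv[G])^2 \leq 1$. This is exactly the hypothesis $\Cv[G] \leq 1$, so the corollary follows immediately.

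There is essentially no obstacle here: the corollary is a direct arithmetic consequence of the two preceding results. The only thing worth noting in the write-up is to cite (\ref{eq:A-mean-upper-bound}) and (\ref{eq:E[A_peak]}) explicitly so that the reader sees at a glance that no new argument about the joint distribution of $G_n^{\dag}$ and $D_n$ is needed.
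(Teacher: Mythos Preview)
Your proof is correct and matches the paper's own argument, which likewise deduces the corollary directly from (\ref{eq:A-mean-upper-bound}) and (\ref{eq:E[A_peak]}). One minor wording fix: the condition $\tfrac{1+(\Cv[G])^2}{2}\,\E[G]\le\E[G]$ is \emph{sufficient} (via the upper bound) for $\E[A]\le\E[A_{\peak}]$, not equivalent to it, so replace ``is equivalent to'' by ``follows once we show''.
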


\begin{remark}\label{remark-A=<A-peak}
$\E[A] \leq \E[A_{\peak}]$ does not hold in general, which might sound
counterintuitive.  A simple counter example is the case that $\E[H] =
0$ and $\Cv[G] > 1$. Because $\E[H]=0$ leads to $\E[D] = 0$, we have
$\Cov[G_n^{\dag},D_n] = 0$. The inequality $\E[A] > \E[A_{\peak}]$ then
follows from (\ref{eq:A-mean}), (\ref{eq:E[A_peak]}), and $\Cv[G] >
1$. 

Furthermore, we can derive a sufficient condition for 
$\E[A] > \E[A_{\peak}]$ in the stationary FCFS GI/M/1 queue. 
It follows from (\ref{eq:ED-upper-GG1}), (\ref{eq:E[A_peak]}), 
and $\Cv[H]=1$ that 
\[
\E[A_{\peak}] \leq \E[H] + 
\frac{\E[G]}{2(1-\rho)} 
\bigl(\rho(2-\rho)(\Cv[G])^2 + \rho^2\bigr) +\E[G].
\]
On the other hand, from (\ref{eq:mean-age-GM1}) and
$-g^{(1)}(\mu-\mu\gamma)> 0$, we have
\[
\E[A] \geq \frac{\E[G]}{2} \bigl((\Cv[G])^2+1\bigr) + \E[H]. 
\]
Therefore, 
\[
\E[A] - \E[A_{\peak}]
\geq
\frac{\E[G]}{2(1-\rho)}
\left[
(1-3\rho+\rho^2) (\Cv[G])^2 - (1-\rho+\rho^2)\right].
\]
Note here that $1-3\rho+\rho^2 >0$ if $\rho < (3-\sqrt{5})/2$.
We thus conclude that in the stationary FCFS GI/M/1 queue, 
\[
\rho \in (0,(3-\sqrt{5})/2)\ 
\mbox{and}\ 
(\Cv[G])^2 \geq 1 + \frac{2\rho}{1-3\rho+\rho^2}
\ \Rightarrow\ 
\E[A] > \E[A_{\peak}].
\]
\end{remark}

\subsection{Applications to preemptive LCFS queues}
\label{ssec:LCFS-P}

In this subsection, we consider the AoI in the stationary preemptive
LCFS GI/GI/1, M/GI/1, and GI/M/1 queues. As stated in Section
\ref{sec:sample-path}, even in LCFS systems, informative packets
arrive and depart in a FIFO manner, so that Theorem
\ref{theorem:time-average-GIG1} is applicable under Assumption
\ref{assumption:main}. Recall that Assumption \ref{assumption:main} is
posed only on informative packets, and therefore the results in this
subsection may hold even if $\rho \geq 1$.

We first consider a GI/GI/1 queue. We define
$\zeta$ as
\[
\zeta = \Pr(G < H).
\]
Note that if $\zeta = 1$, every service is interrupted by the next
arrival with probability one, so that the AoI goes to infinity as time
passes. On the other hand, if $\zeta=0$, all packets are informative,
i.e., $d^*(s)=h^*(s)$ and $a_{\peak}^*(s) = g^*(s)h^*(s)$. 
To avoid trivialities and for simplicity, we assume the following:
\begin{assumption}\label{assumption:P-LCFS-GI-GI-1}
$\Pr(H=G)=0$ and $0 < \zeta < 1$.
\end{assumption}

\begin{remark}
We can verify that under Assumption \ref{assumption:P-LCFS-GI-GI-1},
the system delay of informative packets $\{D_n\}_{n=1,2,\ldots}$
is a regenerative process with finite mean regeneration time, 
where the system delay of an informative packet which finds the system
empty on arrival is a regeneration point. 
Therefore, under Assumptions \ref{assumption:aperiodic} and
\ref{assumption:P-LCFS-GI-GI-1}, $\{(\beta_n,D_n)\}_{n=0,1,\ldots}$ is
mixing and ergodic (cf.\ Remark \ref{remark:mixing-FCFS}).
Theorem \ref{theorem:time-average-GIG1} is thus applicable to the
preemptive single-server queues discussed below.
\end{remark}

Under the preemptive LCFS service discipline, a packet becomes
non-informative if the next packet arrives before its service
completion. We thus have
\begin{equation}
D_n = H_n^{\dag} \eqdist H_{<G},
\label{eq:D_n-PR-LCFS-GG1}
\end{equation}
where $H_{<G}$ denotes a generic conditional random variable for 
a service time given that it is smaller than interarrival time.

Note that the $m$th ($m=1,2,\ldots$) arriving packet after time
$\beta_n$ becomes the $(n+1)$st informative packet with
probability $\zeta^{m-1} (1-\zeta)$, and the peak AoI in this
case is given by
\begin{equation}
A_{\peak,n+1} 
\eqdist 
G_{>H} 
+ 
\sum_{i=1}^{m-1} G_{<H}^{[i]}
+ H_{<G},
\label{eq:PR-LCFS-GG1-(n+1)st-peak}
\end{equation}
where $G_{>H}$ (resp.\ $G_{<H}^{[i]}$) denotes a generic conditional
random variable for an interarrival time given that it is greater
(resp.\ smaller) than a service time. We can verify that
$G_{>H}$ represents the interarrival time of the first packet
arriving after the departure of the $n$th informative packet,
$G_{<H}^{[i]}$ represents the interarrival time of the $i$th
non-informative packet and the next packet, and $H_{<G}$ represents
the service time of the $(n+1)$st informative packet.
We define $G_{<H} := G_{<H}^{[1]}$.
Note that the LSTs of $H_{<G}$,
$G_{<H}$, and $G_{>H}$ are given by
\begin{align*}
h_{<G}^*(s) 
&=
\frac{1}{1-\zeta} \int_0^{\infty} e^{-s x} (1-G(x))\dd H(x),
\\
g_{<H}^*(s) 
&=
\frac{1}{\zeta} \int_0^{\infty} e^{-s x} (1-H(x))\dd G(x),
\\
g_{>H}^*(s) 
&=
\frac{1}{1-\zeta} \int_0^{\infty} e^{-s x} H(x) \dd G(x).
\end{align*}
By definition, we have
\[
g^*(s) = \zeta g_{<H}^*(s) + (1-\zeta) g_{>H}^*(s).
\]

We can obtain the following result from Theorem
\ref{theorem:time-average-GIG1}, noting that $m+1$ random variables
on the right-hand side of (\ref{eq:PR-LCFS-GG1-(n+1)st-peak})
are mutually independent in the GI/GI/1 queue
(see Appendix \ref{appendix:PR-LCFS-GG1-a^*(s)} for more details).

\begin{theorem}\label{theorem:PR-LCFS-GG1-a^*(s)}
In the preemptive LCFS GI/GI/1 queue,
\begin{align}
a^*(s) 
&=
h_{<G}^*(s) \cdot \tilde{g}^*(s) \cdot \frac{1-\zeta}{1-\zeta g_{<H}^*(s)},
\label{eq:PR-LCFS-GG1-a^*(s)}
\\
\E[A] &= 
\E[H_{<G}] 
+ \frac{E[G^2]}{2\E[G]} 
+ \frac{\zeta}{1-\zeta} \E[G_{<H}].
\label{eq:PR-LCFS-GG1-EA}
\end{align}
\end{theorem}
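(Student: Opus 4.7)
The plan is to apply Theorem \ref{theorem:time-average-GIG1} (ii), which expresses $a^*(s)$ in terms of $d^*(s)$, $a_{\peak}^*(s)$, and $\lambda^{\dag}$. Equation (\ref{eq:D_n-PR-LCFS-GG1}) immediately gives $d^*(s)=h_{<G}^*(s)$. Since the pairs $\{(G_k,H_k)\}$ are i.i.d.\ under GI/GI/1, the informativeness indicators $\one_{\{G_{k+1}>H_k\}}$ form an i.i.d.\ Bernoulli$(1-\zeta)$ sequence (using $\Pr(G=H)=0$ from Assumption \ref{assumption:P-LCFS-GI-GI-1}); by the strong law of large numbers, the asymptotic fraction of informative arrivals is $1-\zeta$, hence $\lambda^{\dag}=(1-\zeta)/\E[G]$.

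The core step is the computation of $a_{\peak}^*(s)$. Let $M$ denote the number of arrivals from immediately after the $n$th informative packet up to and including the $(n+1)$st; by the above Bernoulli structure, $\Pr(M=m)=\zeta^{m-1}(1-\zeta)$. Given $M=m$, the representation (\ref{eq:PR-LCFS-GG1-(n+1)st-peak}) holds. Assuming the mutual independence of its $m+1$ summands (discussed below), $\E[e^{-sA_{\peak,n+1}}\mid M=m]=g_{>H}^*(s)\,(g_{<H}^*(s))^{m-1}\,h_{<G}^*(s)$, and summing the geometric series yields
\[
a_{\peak}^*(s)=\frac{(1-\zeta)\,g_{>H}^*(s)\,h_{<G}^*(s)}{1-\zeta g_{<H}^*(s)}.
\]
Plugging $d^*$, $a_{\peak}^*$, and $\lambda^{\dag}$ into Theorem \ref{theorem:time-average-GIG1} (ii) and collapsing the numerator via $\zeta g_{<H}^*(s)+(1-\zeta)g_{>H}^*(s)=g^*(s)$ together with $\tilde{g}^*(s)=(1-g^*(s))/(s\E[G])$ yields (\ref{eq:PR-LCFS-GG1-a^*(s)}).

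For (\ref{eq:PR-LCFS-GG1-EA}), I would compute $\E[A]=-(\log a^*)'(0+)$ by log-differentiating (\ref{eq:PR-LCFS-GG1-a^*(s)}). The three factors contribute, respectively, $\E[H_{<G}]$ from $h_{<G}^*$, $\E[G^2]/(2\E[G])$ from $\tilde{g}^*$ (using $1-g^*(s)=\E[G]s-\E[G^2]s^2/2+o(s^2)$), and $\zeta\E[G_{<H}]/(1-\zeta)$ from the denominator $1-\zeta g_{<H}^*(s)$; these sum to the claimed expression.

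The main obstacle is justifying the mutual independence of the $m+1$ summands in (\ref{eq:PR-LCFS-GG1-(n+1)st-peak}), which is the subtlety the authors defer to an appendix. The key observation is to regroup the underlying variables into successive independent pairs $(G_k,H_{k-1})$: the event indicating whether packet $k-1$ is informative or non-informative depends only on the variables within one such pair, and in the expression for $A_{\peak,n+1}$ each of the relevant pairs contributes exactly one of its variables (a conditional $G$ for the interarrival summands and the terminal conditional $H$). Because the pairs are independent across $k$ by the GI/GI/1 assumption and the conditioning is within-pair, the extracted summands inherit mutual independence with the marginal conditional distributions $g_{>H}^*$, $g_{<H}^*$, and $h_{<G}^*$.
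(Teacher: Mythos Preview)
Your proposal is correct and follows essentially the same approach as the paper: compute $a_{\peak}^*(s)$ by summing over the geometric number of non-informative arrivals, determine $\lambda^{\dag}=(1-\zeta)/\E[G]$, plug into Theorem~\ref{theorem:time-average-GIG1}, and simplify via $\zeta g_{<H}^*(s)+(1-\zeta)g_{>H}^*(s)=g^*(s)$. The only minor differences are that the paper obtains $\lambda^{\dag}$ from (\ref{eq:lmd^dag-Ap-D}) rather than the Bernoulli SLLN, and simply says ``taking the derivative'' for (\ref{eq:PR-LCFS-GG1-EA}) rather than log-differentiating; your explicit pair-regrouping argument for the mutual independence in (\ref{eq:PR-LCFS-GG1-(n+1)st-peak}) is in fact more detailed than what the paper's appendix provides.
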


\begin{corollary}\label{corollary:PR-LCFS-GG1-decomposition}
In the preemptive LCFS GI/GI/1 queue, the AoI is
decomposed stochastically into three independent factors:
\begin{equation}
A \eqdist H_{<G} + \tilde{G} + Z,
\label{eq:PR-LCFS-GG1-decomposition}
\end{equation}
$\tilde{G}$ denotes a generic random variable for residual interarrival times,
and $Z$ denotes a random variable for an interval from the arrival of
a randomly chosen packet to the arrival of the next informative
packet, whose LST is given by
\[
z^*(s) 
= 
\frac{1-\zeta}{1-\zeta g_{<H}^*(s)}.
\]
\end{corollary}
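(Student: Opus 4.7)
The plan is to read the decomposition directly off the product representation of $a^*(s)$ in Theorem~\ref{theorem:PR-LCFS-GG1-a^*(s)}. Since the LST of a nonnegative random variable uniquely determines its distribution, and the LST of a sum of independent nonnegative random variables equals the product of their LSTs, it suffices to exhibit three independent nonnegative random variables whose LSTs multiply to $a^*(s)$.

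Two of the three factors are already named: $h_{<G}^*(s)$ is by definition the LST of $H_{<G}$, and $\tilde{g}^*(s)$ is by definition the LST of $\tilde{G}$. It remains to identify $z^*(s) = (1-\zeta)/(1-\zeta g_{<H}^*(s))$ as a valid LST. Expanding as a geometric series,
\[
z^*(s) = \sum_{m=0}^{\infty} (1-\zeta)\zeta^m \bigl(g_{<H}^*(s)\bigr)^m,
\]
which is a probability mixture over $m \in \{0,1,2,\ldots\}$ of the LSTs of $\sum_{i=1}^{m} G_{<H}^{[i]}$ (with the convention that the empty sum vanishes when $m=0$). Hence $z^*(s)$ is the LST of a geometric compound $Z = \sum_{i=1}^{N} G_{<H}^{[i]}$, where $N$ has the geometric distribution $\Pr(N=m)=(1-\zeta)\zeta^m$ and is independent of the i.i.d.\ copies $G_{<H}^{[i]}$. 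Taking independent copies of $H_{<G}$, $\tilde{G}$, and $Z$ then yields a random variable with LST equal to $a^*(s)$, which establishes (\ref{eq:PR-LCFS-GG1-decomposition}).

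For the probabilistic interpretation of $Z$ stated in the corollary, I would invoke the construction underlying (\ref{eq:PR-LCFS-GG1-(n+1)st-peak}): each arriving packet fails to become the next informative packet with probability $\zeta$ (its service is preempted by the following arrival), independently across packets by the i.i.d.\ structure of the GI/GI/1 model; hence, starting from the arrival epoch of a randomly chosen packet, the number of subsequent non-informative arrivals before the next informative one is geometric with parameter $1-\zeta$, and each of those interarrivals is distributed as $G_{<H}$. This matches the construction of $Z$ above. The argument is algebraic once Theorem~\ref{theorem:PR-LCFS-GG1-a^*(s)} is in hand, so there is no real obstacle; the only point that requires any care is verifying that $z^*(s)$ arises from a bona fide probability distribution, which the geometric-mixture expansion settles immediately.
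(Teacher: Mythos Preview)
Your proposal is correct and follows essentially the same approach as the paper: the decomposition is read off directly from the product form of $a^*(s)$ in Theorem~\ref{theorem:PR-LCFS-GG1-a^*(s)}, and the identification of $z^*(s)$ as a genuine LST is obtained via the same geometric-series expansion (the paper writes it as $\sum_{m=1}^{\infty}\zeta^{m-1}(1-\zeta)(g_{<H}^*(s))^{m-1}$, which is your sum after reindexing). Your write-up is simply more explicit about why the product of LSTs yields the distributional decomposition and about the probabilistic interpretation of $Z$, but the underlying argument is identical.
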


\begin{proof}
(\ref{eq:PR-LCFS-GG1-decomposition}) immediately follows from Theorem
\ref{theorem:PR-LCFS-GG1-a^*(s)}. The probabilistic interpretation of
$Z$ comes from
\[
\frac{1-\zeta}{1-\zeta g_{<H}^*(s)}
=
\sum_{m=1}^{\infty} \zeta^{m-1} (1-\zeta) \cdot (g_{<H}^*(s))^{m-1}.
\qedhere
\]
\end{proof}

Below, we consider two special cases: the preemptive LCFS M/GI/1 and GI/M/1
queues. Note that (\ref{eq:PR-LCFS-GG1-a^*(s)}) in these cases
takes simple forms,
and the moments of the AoI are readily obtained.

\begin{corollary}\label{corollary:PR-LCFS-summary}
\begin{itemize}
\item[(i)] In the preemptive LCFS M/GI/1 queue,
\begin{align}
a^*(s) 
&= 
\frac{\lambda h^*(s+\lambda)}{s+\lambda h^*(s+\lambda)},
\label{eq:theorem-PR-LCFS-AoI}
\\
\E[A] &= \frac{\E[H]}{\rho h^*(\lambda)},
\label{eq:mean-AoI-LCFS-P-MG1}
\\
\E[A^2] 
&= 
2 [1-\lambda (-h^{(1)}(\lambda))]
\left(\frac{\E[H]}{\rho h^*(\lambda)}\right)^2. 
\label{eq:2m-AoI-LCFS-P-MG1}
\end{align}
\item[(ii)]
In the preemptive LCFS GI/M/1 queue,
\begin{align}
a^*(s) 
&= 
\tilde{g}(s) \cdot \frac{\mu}{s+\mu},
\label{eq:theorem-PR-LCFS-GM1-AoI}
\\
\E[A^n] &= \sum_{m=0}^n \frac{n!E[G^{m+1}](\E[H])^{n-m}}{(m+1)!\E[G]},
\;\;
n=1,2,\ldots,
\nonumber
\end{align}
and in particular,
\begin{align}
\E[A] &= \frac{\E[G^2]}{2\E[G]} + \E[H],
\label{eq:mean-AoI-GM1-LCFS-P}
\\
\E[A^2] &= \frac{\E[G^3]}{3\E[G]} + 2\E[H]\E[A].
\nonumber
\end{align}
\end{itemize}
\end{corollary}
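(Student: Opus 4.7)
My plan is to prove both parts of Corollary~\ref{corollary:PR-LCFS-summary} by substituting the specific interarrival/service distributions into Theorem~\ref{theorem:PR-LCFS-GG1-a^*(s)} (equivalently, into the decomposition of Corollary~\ref{corollary:PR-LCFS-GG1-decomposition}) and then differentiating $a^*(s)$ at $s=0$ to extract the moments. For part~(i), I would use $g^*(s)=\lambda/(s+\lambda)$ (so $\tilde g^*(s)=\lambda/(s+\lambda)$) together with $\zeta=\Pr(G<H)=1-h^*(\lambda)$, and then evaluate the conditional LSTs from their integral definitions, obtaining $h_{<G}^*(s)=h^*(s+\lambda)/h^*(\lambda)$ and $g_{<H}^*(s)=\lambda(1-h^*(s+\lambda))/\bigl[(1-h^*(\lambda))(s+\lambda)\bigr]$. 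Substituting these into (\ref{eq:PR-LCFS-GG1-a^*(s)}), $h^*(\lambda)$ cancels in the first factor, and the $s+\lambda$ from $\tilde g^*(s)$ combines with $1-\zeta g_{<H}^*(s)$ to produce the denominator $s+\lambda h^*(s+\lambda)$, yielding (\ref{eq:theorem-PR-LCFS-AoI}). The moment formulas (\ref{eq:mean-AoI-LCFS-P-MG1}) and (\ref{eq:2m-AoI-LCFS-P-MG1}) then follow from the quotient rule applied to $a^*(s)=\phi(s)/\psi(s)$ with $\phi(s):=\lambda h^*(s+\lambda)$ and $\psi(s):=s+\phi(s)$, using $\phi(0)=\psi(0)=\lambda h^*(\lambda)=\rho h^*(\lambda)/\E[H]$.

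For part~(ii), I would specialize $H\sim\mathrm{Exp}(\mu)$, giving $\zeta=g^*(\mu)$, $h_{<G}^*(s)=\mu(1-g^*(s+\mu))/\bigl[(1-g^*(\mu))(s+\mu)\bigr]$, and $g_{<H}^*(s)=g^*(s+\mu)/g^*(\mu)$. The key simplification is $1-\zeta g_{<H}^*(s)=1-g^*(s+\mu)$, so the product $h_{<G}^*(s)\cdot(1-\zeta)/(1-\zeta g_{<H}^*(s))$ in (\ref{eq:PR-LCFS-GG1-a^*(s)}) telescopes to $\mu/(s+\mu)$. Hence $a^*(s)=\tilde g^*(s)\cdot\mu/(s+\mu)$, which is (\ref{eq:theorem-PR-LCFS-GM1-AoI}). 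Equivalently, Corollary~\ref{corollary:PR-LCFS-GG1-decomposition} collapses to $A\eqdist \tilde G + E$, where $E\sim\mathrm{Exp}(\mu)$ is independent of $\tilde G$.

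From this independent-sum representation, the general moment formula follows by the binomial expansion $\E[A^n]=\sum_{m=0}^n\binom{n}{m}\E[\tilde G^m]\E[E^{n-m}]$ together with the standard identities $\E[\tilde G^m]=\E[G^{m+1}]/\bigl((m+1)\E[G]\bigr)$ and $\E[E^{n-m}]=(n-m)!\,(\E[H])^{n-m}$; collecting factorials yields the stated formula for $\E[A^n]$, from which $\E[A]$ and $\E[A^2]$ are immediate. The main ``obstacle'' is purely algebraic bookkeeping: one has to verify that the seemingly cumbersome fractions in (\ref{eq:PR-LCFS-GG1-a^*(s)}) collapse cleanly in each specialization. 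Once the cancellation driving the denominator to $s+\lambda h^*(s+\lambda)$ in part~(i), and the cancellation of $1-g^*(s+\mu)$ in part~(ii), are in hand, no further ideas are required --- the moment calculations are routine differentiations of LSTs.
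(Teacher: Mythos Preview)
Your proposal is correct and follows exactly the approach the paper intends: the paper states only that ``(\ref{eq:PR-LCFS-GG1-a^*(s)}) in these cases takes simple forms, and the moments of the AoI are readily obtained,'' and you have filled in precisely those specializations of Theorem~\ref{theorem:PR-LCFS-GG1-a^*(s)} together with the differentiation/binomial-expansion bookkeeping. The conditional LSTs you compute and the two key cancellations (producing $s+\lambda h^*(s+\lambda)$ in~(i) and collapsing $h_{<G}^*(s)\cdot(1-\zeta)/(1-\zeta g_{<H}^*(s))$ to $\mu/(s+\mu)$ in~(ii)) are exactly right.
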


\begin{remark}
(\ref{eq:mean-AoI-LCFS-P-MG1}) is identical to Eq.\ (15) in
\cite{Najm17}.
\end{remark}

From Corollary \ref{corollary:PR-LCFS-summary}, we can discuss the
effect of the variability of service and interarrival times on the
AoI. To this end, we introduce the convex order of random variables.

\begin{definition}[{\cite[Page 109]{Shaked06}}]
Consider two random variables $X$ and $Y$ with the same mean
$\E[X]=\E[Y]$. $X$ is said to be smaller than or equal to $Y$
in the convex order (denoted by $X \leq_{\rm cx} Y$) if and only if
\[
\E[\phi(X)] \leq \E[\phi(Y)],
\]
for all convex functions $\phi$, provided the expectations exist.
\end{definition}

By definition, the convex order is a partial
order over the set of all real-valued random variables.
Roughly speaking, $X \leq _{\rm cx} Y$ implies that $Y$ is more
variable than $X$ \cite{Shaked06}. 
In particular, it is readily verified that
\begin{equation}
X \leq_{\rm cx} Y\ \Rightarrow\ \Cv[X] \leq \Cv[Y].
\label{eq:convex-order}
\end{equation}

Consider two preemptive LCFS queues with the same mean interarrival
time $\E[G]$ and the same mean service time $\E[H]$.
Let $G^{\langle k \rangle}$, $H^{\langle k \rangle}$, and $A^{\langle
k \rangle}$ ($k=1,2$) denote generic random variables for 
interarrival times, service times and the AoI in the $k$th queue. 

\begin{corollary}\label{corollary:P-LCFS-order}
\begin{itemize}
\item [(i)] For two preemptive LCFS M/GI/1 queues,
\begin{equation}
H^{\langle 1 \rangle} \leq_{\rm cx} H^{\langle 2 \rangle}\
\Rightarrow\ 
\E[A^{\langle 1 \rangle}] \geq \E[A^{\langle 2 \rangle}].
\label{eq:EA-ordering-PR-LCFS-MG1}
\end{equation}
\item [(ii)] For two preemptive LCFS GI/M/1 queues,
\begin{equation}
G^{\langle 1 \rangle} \leq_{\rm cx} G^{\langle 2 \rangle}\
\Rightarrow\ 
A^{\langle 1 \rangle} \leq_{\rm st} A^{\langle 2 \rangle},
\label{eq:EA-ordering-PR-LCFS-GM1}
\end{equation}
where $\leq_{\rm st}$ stands for the usual stochastic order
\cite[Page 4]{Shaked06}, i.e., 
$
A^{\langle 1 \rangle} \leq_{\rm st} A^{\langle 2 \rangle}
\Leftrightarrow
\E[\phi(A^{\langle 1 \rangle})] \leq \E[\phi(A^{\langle 1 \rangle})]
$
for all non-decreasing functions $\phi$, provided that the
expectations exist.
\end{itemize}
\end{corollary}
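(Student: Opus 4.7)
The plan is to read off both parts directly from the formulas in Corollary \ref{corollary:PR-LCFS-summary}, using only elementary properties of the convex order.

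For part (i), by (\ref{eq:mean-AoI-LCFS-P-MG1}) we have
\[
\E[A^{\langle k \rangle}] = \frac{\E[H^{\langle k \rangle}]}{\rho^{\langle k \rangle}\,h^{*\langle k \rangle}(\lambda)},\qquad k=1,2,
\]
and since the two M/GI/1 queues share the same $\lambda$ and the same $\E[H]$, they also share $\rho$. Thus the comparison reduces to showing $h^{*\langle 1 \rangle}(\lambda) \leq h^{*\langle 2 \rangle}(\lambda)$. This follows immediately from the definition of the convex order applied to the convex function $\phi(x) = e^{-\lambda x}$, giving $\E[A^{\langle 1 \rangle}] \geq \E[A^{\langle 2 \rangle}]$.

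For part (ii), by (\ref{eq:theorem-PR-LCFS-GM1-AoI}) we have $a^{*\langle k \rangle}(s) = \tilde{g}^{*\langle k \rangle}(s)\cdot \mu/(s+\mu)$, so
\[
A^{\langle k \rangle} \eqdist \tilde{G}^{\langle k \rangle} + E,
\]
where $E$ is an exponential random variable with rate $\mu$ independent of $\tilde{G}^{\langle k \rangle}$. Since usual stochastic order is preserved under independent convolution, it suffices to show $\tilde{G}^{\langle 1 \rangle} \leq_{\rm st} \tilde{G}^{\langle 2 \rangle}$. The residual-life tail is
\[
\Pr(\tilde{G}^{\langle k \rangle} > x)
= \frac{1}{\E[G^{\langle k \rangle}]}\int_x^{\infty}(1-G^{\langle k \rangle}(u))\dd u
= \frac{\E[(G^{\langle k \rangle}-x)^+]}{\E[G^{\langle k \rangle}]}.
\]
Now $G^{\langle 1 \rangle} \leq_{\rm cx} G^{\langle 2 \rangle}$ is equivalent to $\E[G^{\langle 1 \rangle}] = \E[G^{\langle 2 \rangle}]$ together with $\E[(G^{\langle 1 \rangle}-x)^+] \leq \E[(G^{\langle 2 \rangle}-x)^+]$ for all $x \geq 0$ (a standard characterization, obtained by taking $\phi(y) = (y-x)^+$ and using the equal-mean condition to pass to general convex $\phi$). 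Combining these yields $\Pr(\tilde{G}^{\langle 1 \rangle} > x) \leq \Pr(\tilde{G}^{\langle 2 \rangle} > x)$ for all $x$, i.e., $\tilde{G}^{\langle 1 \rangle} \leq_{\rm st} \tilde{G}^{\langle 2 \rangle}$, as required.

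The only nontrivial step is the residual-life bridge from convex order to stochastic order in part (ii); once that $(X-c)^+$ characterization is in hand, the rest is bookkeeping. Part (i) is essentially a one-line consequence of convexity of $e^{-\lambda x}$.
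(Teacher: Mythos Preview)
Your proof is correct and follows essentially the same approach as the paper: part (i) uses the formula (\ref{eq:mean-AoI-LCFS-P-MG1}) together with convexity of $e^{-\lambda x}$, and part (ii) uses the decomposition $A \eqdist \tilde{G} + H$ from (\ref{eq:theorem-PR-LCFS-GM1-AoI}) together with the implication $G^{\langle 1 \rangle} \leq_{\rm cx} G^{\langle 2 \rangle} \Rightarrow \tilde{G}^{\langle 1 \rangle} \leq_{\rm st} \tilde{G}^{\langle 2 \rangle}$ and closure of $\leq_{\rm st}$ under independent convolution. The only difference is that where the paper cites \cite[Theorem 1.A.3, Eq.\ (3.A.7)]{Shaked06} for the residual-life step, you supply a direct self-contained argument via $\Pr(\tilde{G} > x) = \E[(G-x)^+]/\E[G]$ and the $(\cdot - c)^+$ characterization of the convex order.
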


\begin{remark}
Clearly, (\ref{eq:EA-ordering-PR-LCFS-GM1}) implies
\[
G^{\langle 1 \rangle} \leq_{\rm cx} G^{\langle 2 \rangle}\
\Rightarrow\ 
\E[A^{\langle 1 \rangle}] \leq \E[A^{\langle 2 \rangle}].
\]
\end{remark}

\begin{proof}
We have (\ref{eq:EA-ordering-PR-LCFS-MG1}) from (\ref{eq:mean-AoI-LCFS-P-MG1})
because $h^*(\lambda) = \E[e^{-\lambda H}]$, and $\exp(-sx)$ ($s > 0$)
is a convex function. 

We then consider (\ref{eq:EA-ordering-PR-LCFS-GM1}).
Note that (\ref{eq:theorem-PR-LCFS-GM1-AoI}) indicates 
\begin{equation}
A \eqdist \tilde{G}+H.
\label{eq:A-decompose-PR-LCFS-GM1}
\end{equation}
Because $H$ is independent of $\tilde{G}^{\langle k\rangle}$
($k = 1,2$), it follows from \cite[Theorem 1.A.3, Eq.\ (3.A.7)]{Shaked06} that
\begin{align*}
G^{\langle 1 \rangle} \leq_{\mathrm{cx}} G^{\langle 2 \rangle}\ 
&\Rightarrow\
\tilde{G}^{\langle 1 \rangle} \leq_{\mathrm{st}} \tilde{G}^{\langle 2 \rangle}
\nonumber
\\
&\Rightarrow\
\tilde{G}^{\langle 1 \rangle} + H \leq_{\mathrm{st}}
\tilde{G}^{\langle 2 \rangle} + H.
\end{align*}
Therefore, we obtain (\ref{eq:EA-ordering-PR-LCFS-GM1})
from (\ref{eq:A-decompose-PR-LCFS-GM1}).
\end{proof}

Corollary \ref{corollary:P-LCFS-order} (i) shows that the preemptive
LCFS service discipline is particularly effective in terms of the mean
AoI when service times are \textit{highly variable},
given that packets arrive according to a Poisson process.
On the other hand, Corollary \ref{corollary:P-LCFS-order} (ii) shows
that lowering the variability of interarrival times reduces 
the AoI, when the service time distribution is exponential.

Note that Jensen's inequality leads to $\E[X] \leq_{\mathrm{cx}} X$ 
for any random variable $X$, i.e., the deterministic distribution
achieves the minimum in the convex order \cite[Theorem 3.A.24]{Shaked06}. 
Therefore, the M/D/1 queue gives \textit{the maximum} (i.e., the
worst) mean AoI among all preemptive LCFS M/GI/1 queues with the same
mean interarrival time $\E[G]$ and the same mean service time $\E[H]$.
On the other hand, the D/M/1 queue gives \textit{the minimum} 
(i.e., the best) mean AoI among all preemptive LCFS GI/M/1 queues
with the same mean interarrival time $\E[G]$ and the same mean service
time $\E[H]$.

We now present some numerical examples, where we fix $\E[H]=1$ and 
consider the following probability distributions for service times: 
\begin{itemize}
\item Deterministic distribution ($\Cv[H] = 0$).
\item Gamma distribution ($0 < \Cv[H] < 1$).
\item Exponential distribution  ($\Cv[H] = 1$).
\item Hyper-exponential distribution of order two with balanced means
\cite[Page 359]{Tijms1994} ($\Cv[H] > 1$).
\end{itemize}
Note that this setting enables us to specify the service time 
distribution uniquely once we do $\Cv[H]$.

In Figure \ref{fig:LCFS_P_MG1}, the mean AoI $\E[A]$ in the preemptive
LCFS M/GI/1 queue, given by (\ref{eq:mean-AoI-LCFS-P-MG1}), is plotted as
a function of $\rho$ for various values of $\Cv[H]$. This figure
confirms Corollary \ref{corollary:P-LCFS-order} (i), and we observe
that the variability of the service times has a significant impact on
the mean AoI. 

Next, using (\ref{eq:PR-LCFS-GG1-EA}), we present Figures
\ref{fig:LCFS_P_DBh1} and \ref{fig:LCFS_P_DGm1} for $\E[A]$ in the
D/GI/1 queue, where $\Cv[H] \geq 1$ in Figure \ref{fig:LCFS_P_DBh1}
and $0 \leq \Cv[H] \leq 1$ in Figure \ref{fig:LCFS_P_DGm1}. 
When $\Cv[H] \geq 1$, $\E[A]$ decreases
with an increase in $\Cv[H]$, as in the case of Poisson arrivals 
in Figure \ref{fig:LCFS_P_MG1}.
When $\Cv[H] \leq 1$, on the other hand, 
the mean AoI $\E[A]$ for $\rho < 1$ \textit{increases} with an
increase in the variability of service times.

Therefore, the preemptive LCFS service discipline is particularly
effective when service times are highly variable. In addition,
if arrival times of packets are deterministic, this service discipline
is also effective for less variable service times.
Finally, we observe from Figures \ref{fig:LCFS_P_MG1}--\ref{fig:LCFS_P_DGm1}
that for gamma service time distributions with $\Cv[H] < 1$, the mean AoI
$\E[A]$ diverges to infinity as the arrival rate goes to infinity,
which is not the case for exponential and hyper-exponential service
time distributions. These phenomena are in fact not dependent on the
details of the service time distributions, but hold in general as
stated in the following theorem.

\begin{theorem} \label{theorem:LCFS-P-limit}
Consider the M/GI/1 and D/GI/1 queues with an absolutely continuous
service time distribution. If the probability density function
$h(x)$ ($x \geq 0$) of service times is bounded and continuous, 
we have for both of the M/GI/1 and D/GI/1 queues,
\begin{equation}
\E[A] \rightarrow \frac{1}{h(0)}
\qquad
(\lambda \to \infty),
\label{eq:LCFS-P-limit}
\end{equation}
in the sense that the limit value of $\E[A]$ is finite if $h(0) \neq 0$, 
and otherwise $\E[A]$ diverges to infinity as $\lambda \to \infty$.
\end{theorem}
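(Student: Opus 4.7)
The strategy is to treat the M/GI/1 and D/GI/1 cases separately by specializing the explicit mean-AoI expressions already derived, and reducing each problem to a single asymptotic statement about the service-time distribution near the origin. Concretely, I expect both limits to be driven by the Abelian-type relation $\lambda h^*(\lambda) \to h(0)$ (for M/GI/1) and its deterministic analogue $\lambda H(1/\lambda) \to h(0)$ (for D/GI/1), each of which is a direct consequence of the continuity of $h$ at zero.

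For the M/GI/1 case I would start from Corollary \ref{corollary:PR-LCFS-summary}~(i): since $\rho = \lambda\E[H]$, formula (\ref{eq:mean-AoI-LCFS-P-MG1}) collapses to $\E[A] = 1/(\lambda h^*(\lambda))$. Writing $\lambda h^*(\lambda) = \int_0^\infty \lambda e^{-\lambda x} h(x)\,\dd x$ and substituting $u = \lambda x$ gives $\int_0^\infty e^{-u} h(u/\lambda)\,\dd u$. The integrand is dominated by $\|h\|_\infty e^{-u}$ and converges pointwise to $h(0)e^{-u}$ by continuity of $h$ at $0$, so dominated convergence yields $\lambda h^*(\lambda) \to h(0)$. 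This immediately gives (\ref{eq:LCFS-P-limit}) in the M/GI/1 case, including divergence when $h(0) = 0$.

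For the D/GI/1 case I would apply (\ref{eq:PR-LCFS-GG1-EA}) with $G \equiv 1/\lambda$ deterministic, valid for all sufficiently large $\lambda$ since $H$ absolutely continuous forces $\Pr(H=G)=0$ and (provided $H$ puts positive mass near $0$, which is necessary for the stated limit to be nontrivial) $\zeta \in (0,1)$. The three terms of (\ref{eq:PR-LCFS-GG1-EA}) are handled as follows. First, $\E[G^2]/(2\E[G]) = 1/(2\lambda) \to 0$. Second, $\E[H_{<G}] = \int_0^{1/\lambda} x h(x)\,\dd x / H(1/\lambda) \leq 1/\lambda \to 0$. Third, since $G_{<H}$ is degenerate at $1/\lambda$ and $\zeta = 1 - H(1/\lambda)$, the term $(\zeta/(1-\zeta))\E[G_{<H}]$ equals $(1 - H(1/\lambda))/(\lambda H(1/\lambda))$; by $H(0)=0$ and continuity of $h$ at $0$ we have $\lambda H(1/\lambda) = H(1/\lambda)/(1/\lambda) \to h(0)$, while $1 - H(1/\lambda) \to 1$. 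Adding the three limits gives $\E[A] \to 1/h(0)$, with divergence in the $h(0)=0$ case.

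The main obstacle I anticipate is bookkeeping rather than depth: one must justify separately the finite and infinite limits depending on whether $h(0) > 0$ or $h(0) = 0$, and must verify that Assumption \ref{assumption:P-LCFS-GI-GI-1} is in force for all sufficiently large $\lambda$ in the D/GI/1 setting so that (\ref{eq:PR-LCFS-GG1-EA}) applies. The analytical core—the Abelian limit of $\lambda h^*(\lambda)$ and the derivative limit of $\lambda H(1/\lambda)$—is essentially an initial-value statement and poses no real difficulty under the stated hypotheses of boundedness and continuity of $h$.
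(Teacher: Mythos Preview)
Your proposal is correct and follows essentially the same route as the paper: for M/GI/1 you reduce to $\E[A]=1/(\lambda h^*(\lambda))$ and take the Abelian limit $\lambda h^*(\lambda)\to h(0)$ (the paper simply cites the initial value theorem, while you prove it via dominated convergence), and for D/GI/1 you use the three-term expression (\ref{eq:PR-LCFS-GG1-EA}), showing the first two terms vanish and the third reduces to $(1-H(1/\lambda))/(\lambda H(1/\lambda))\to 1/h(0)$ via the averaging limit $\lambda\int_0^{1/\lambda}h(x)\,\dd x\to h(0)$, exactly as the paper does.
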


\noindent
The proof of Theorem \ref{theorem:LCFS-P-limit} is provided in
Appendix \ref{appendix:LCFS-P-limit}.

\begin{figure}[tbp]
\centering
\includegraphics[scale=1.0]{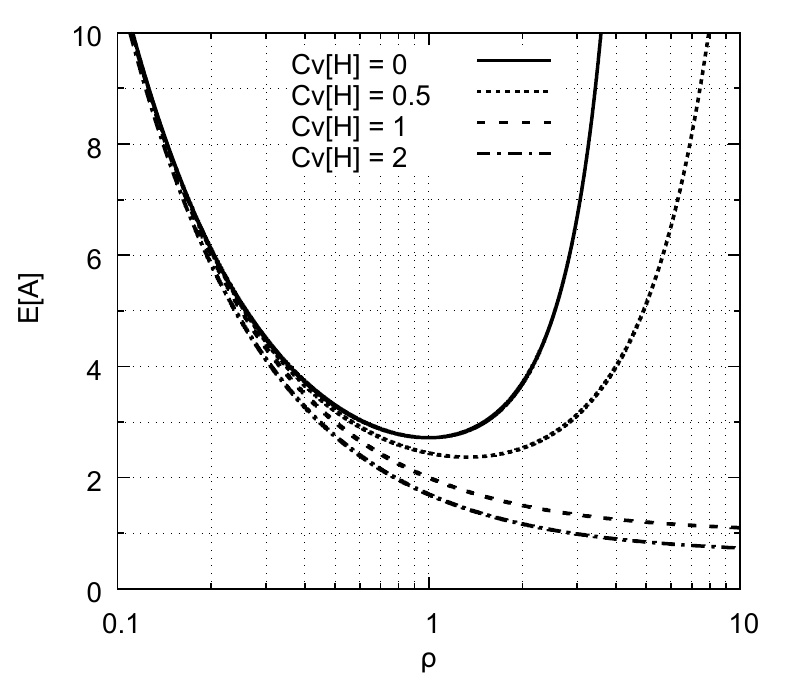}
\vspace{-1ex}
\caption{$\E[A]$ in the preemptive LCFS M/GI/1 queue.}
\label{fig:LCFS_P_MG1}
\mbox{}
\\[1ex]
\includegraphics[scale=1.0]{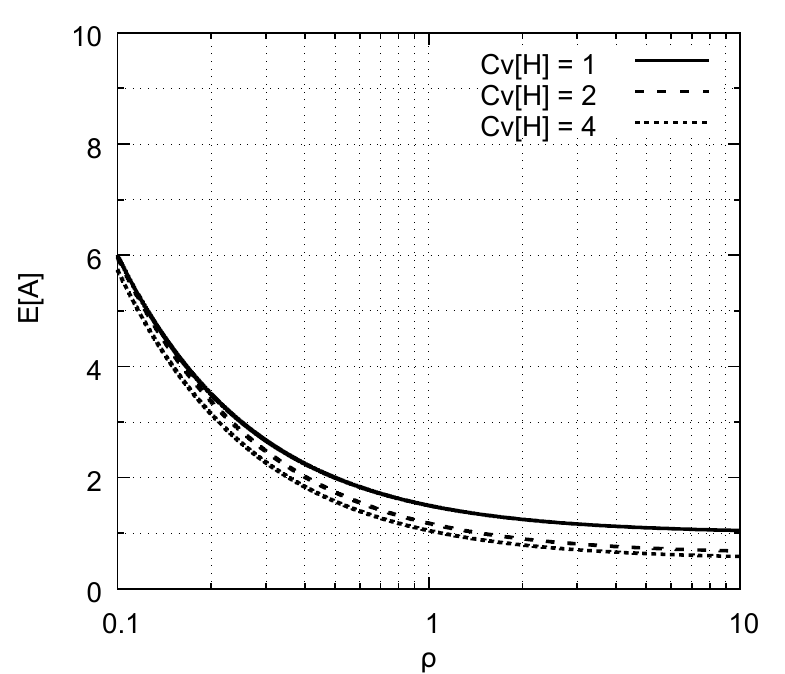}
\vspace{-1ex}
\caption{$\E[A]$ in the preemptive LCFS D/GI/1 queue ($\Cv[H] \geq 1$).}
\label{fig:LCFS_P_DBh1}
\mbox{}
\\[1ex]
\includegraphics[scale=1.0]{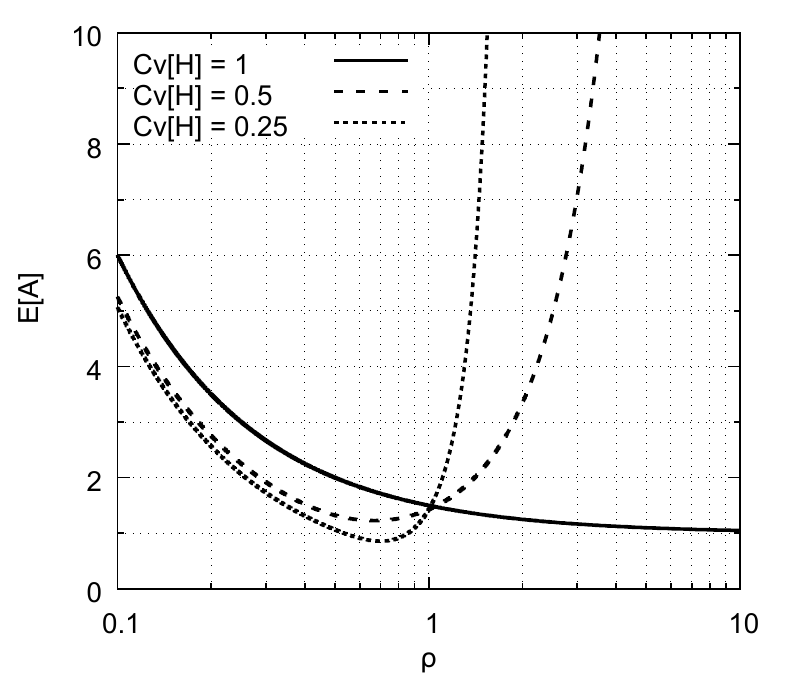}
\vspace{-1ex}
\caption{$\E[A]$ in the preemptive LCFS D/GI/1 queue ($\Cv[H] \leq 1$).}
\label{fig:LCFS_P_DGm1}
\end{figure}

\subsection{Applications to non-preemptive LCFS queues with and without discarding}
\label{ssec:LCFS-NP}

In this subsection, we apply Theorem \ref{theorem:time-average-GIG1} to
the stationary non-preemptive LCFS M/GI/1 and GI/M/1 queues with and without discarding.
In non-preemptive LCFS queues with discarding, non-informative
packets are discarded without receiving their services. In
non-preemptive LCFS queues without discarding, on the other hand, all
packets are served eventually, whether they are informative or not.

Similarly to preemptive LCFS systems, Theorem \ref{theorem:time-average-GIG1} 
is applicable under Assumption \ref{assumption:main}, so that the
results in this subsection may hold even if $\rho \geq 1$. For systems
without discarding, however, we focus on the case of $\rho < 1$
because otherwise the mean number of waiting non-informative packets
goes to infinity as time passes.

\begin{remark}
Because the non-preemptive LCFS service disciplines are
work-conserving, the workload process is identical to that 
of the FCFS case. Similarly to Remark \ref{remark:mixing-FCFS}, we can
thus show that under the stability condition $\rho < 1$, the system
delay $\{D_n\}_{n=1,2,\ldots}$ of informative packets is a
regenerative process with finite mean regeneration time, where the
system delay of an informative packet which finds the system empty on
arrival is a regeneration point.
Therefore, under Assumption \ref{assumption:aperiodic},
$\{(\beta_n,D_n)\}_{n=0,1,\ldots}$ is mixing and ergodic
(cf.\ Remark \ref{remark:mixing-FCFS}).
Theorem \ref{theorem:time-average-GIG1} is thus applicable to the
non-preemptive single-server queues discussed below.
\end{remark}

Under the non-preemptive LCFS service disciplines, the waiting time 
$W_n$ ($n = 1,2,\ldots$) of the $n$th informative packet is 
independent of its service time $H_n^{\dag}$. Note that
the system delay $D_n$ of this packet is given by
\begin{equation}
D_n = W_n + H_n^{\dag}.
\label{eq:D_n-NP-LCFS}
\end{equation}
Furthermore, on the service initiation of an informative packet, 
there exists no waiting packet which can be informative:
no packets are waiting in the discarding case,
while all waiting packets are non-informative in the non-discarding
case. We thus always have $W_n < G_{n+1}^{\dag}$.
Furthermore, if $W_n + H_n^{\dag} > G_{n+1}^{\dag}$, i.e., at least
one packet arrives in the service time of the $n$th informative packet,
the service of the $(n+1)$st informative packet starts immediately after
the service completion. If $W_n + H_n^{\dag} \leq G_{n+1}^{\dag}$, on the
other hand, it takes $\hat{G}_{n+1} + W_{n+1}$ before the service
initiation of the $(n+1)$st informative packet, where 
$\hat{G}_{n+1} := G_{n+1}^{\dag}-W_n-H_n^{\dag}$.

With these observations, we obtain (cf.\
(\ref{eq:alternative-peak-case}))
\begin{align}
A_{\peak,n+1} 
&=
\begin{cases}
W_n + H_n^{\dag} + H_{n+1}^{\dag}, 
& 
W_n + H_n^{\dag} > G_{n+1}^{\dag},
\\[1ex]
\parbox{9em}{
$W_n + H_n^{\dag} + \hat{G}_{n+1}$ 
\\
$\hphantom{W_n} {}+ W_{n+1} + H_{n+1}^{\dag}$,
}
& W_n + H_n^{\dag} \leq G_{n+1}^{\dag}
\end{cases}
\label{eq:NP-LCFS-A_peak-1}
\\
&=
\begin{cases}
W_n + H_n^{\dag} + H_{n+1}^{\dag}, 
&
W_n + H_n^{\dag} > G_{n+1}^{\dag},
\\
G_{n+1}^{\dag} + W_{n+1} + H_{n+1}^{\dag},
&
W_n + H_n^{\dag} \leq G_{n+1}^{\dag}.
\end{cases}
\label{eq:NP-LCFS-A_peak-2}
\end{align}
Note that in non-preemptive LCFS queues with discarding, 
\begin{equation}
W_n + H_n^{\dag} \leq G_{n+1}^{\dag} \Rightarrow W_{n+1} = 0,
\label{eq:W(n+1)=0-w-d}
\end{equation}
so that (\ref{eq:NP-LCFS-A_peak-1}) and (\ref{eq:NP-LCFS-A_peak-2})
are simplified.

We can characterize the distributions of the stationary system delay
$D$ and peak AoI $A_{\peak}$ based on (\ref{eq:D_n-NP-LCFS}), 
(\ref{eq:NP-LCFS-A_peak-1}), and (\ref{eq:NP-LCFS-A_peak-2}),
so that the LST $a^*(s)$ of the AoI is obtained from Theorem 
\ref{theorem:time-average-GIG1} (see Appendix \ref{appendix:NP-LCFS}
for the details).

\begin{theorem}\label{theorem:NP-LCFS}
\begin{itemize}
\item[(i)] In the non-preemptive LCFS M/GI/1 queue with discarding, 
\begin{align}
a^*(s) 
& =
\left(
h^*(\lambda) + \rho \tilde{h}^*(s+\lambda)\right) h^*(s)
\cdot
\frac{\rho \tilde{h}^*(s) + h^*(s+\lambda)\ds\frac{\lambda}{s+\lambda}}
{\rho + h^*(\lambda)}.
\label{eq:a^*-NP-LCFS-MG1-with-d}
\end{align}
\item[(ii)] In the non-preemptive LCFS GI/M/1 queue with discarding,
\begin{align}
a^*(s) 
&=
\biggl[
\tilde{g}^*(s) + \rho \cdot \frac{\mu}{s+\mu} 
\biggl( 
g^*(s+\mu) 
- g^*(\mu) \cdot 
\frac{1-\mu(-g^{(1)}(s+\mu))}{1-\mu(-g^{(1)}(\mu))}
\biggr) \biggr] \frac{\mu}{s+\mu}.
\label{eq:a^*-NP-LCFS-GM1-with-d}
\end{align}
\item[(iii)] In the non-preemptive LCFS M/GI/1 queue without
discarding which satisfies $\rho < 1$,
\begin{align}
a^*(s) 
&=
\frac{\lambda}{s+\lambda} \cdot h^*(s)
\biggl[
\rho \tilde{h}^*(s) 
+
\frac{(1-\rho) (s+\lambda) (1-h^*(s)+h^*(s+\lambda))}
{s + \lambda h^*(s+\lambda)}
\biggr].
\label{eq:a^*-NP-LCFS-MG1-wo-d}
\end{align}
\item[(iv)] In the non-preemptive LCFS GI/M/1 queue without
discarding which satisfies $\rho < 1$,
\begin{equation}
a^*(s) 
=
\left[
\tilde{g}^*(s) + \rho (g^*(s+\mu-\mu\gamma)-\gamma)\frac{\mu}{s+\mu}
\right]
\frac{\mu}{s+\mu},
\label{eq:a^*-NP-LCFS-GM1-wo-d}
\end{equation}
where $\gamma$ denotes the unique solution of (\ref{eq:gamma-def}). 
\end{itemize}
\end{theorem}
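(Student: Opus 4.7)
The plan is to apply Theorem \ref{theorem:time-average-GIG1} to each of the four models, reducing the computation of $a^*(s)$ to finding three ingredients: the informative-packet arrival rate $\lambda^{\dag}$, the system-delay LST $d^*(s)$, and the peak-AoI LST $a_{\peak}^*(s)$. From (\ref{eq:D_n-NP-LCFS}) together with the stated independence of $W_n$ and $H_n^{\dag}$, we have $d^*(s) = w^*(s) h^*(s)$; equations (\ref{eq:NP-LCFS-A_peak-1})--(\ref{eq:NP-LCFS-A_peak-2}) then let us write $a_{\peak}^*(s)$ as a sum of two terms according to whether $\{W_n + H_n^{\dag} > G_{n+1}^{\dag}\}$ holds. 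The task is therefore to identify, in each of the four models, the joint stationary law of $(W_n, H_n^{\dag}, G_{n+1}^{\dag}, W_{n+1}, H_{n+1}^{\dag})$ and to assemble the pieces through Theorem \ref{theorem:time-average-GIG1}.

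For the two discarding cases (i) and (ii), the implication (\ref{eq:W(n+1)=0-w-d}) forces $W_{n+1}=0$ on $\{W_n + H_n^{\dag} \leq G_{n+1}^{\dag}\}$, so consecutive informative packets form a regenerative sequence with $\{W=0\}$ as a regeneration point. In case (i), a one-cycle renewal argument using $\Pr(G > H) = h^*(\lambda)$ for Poisson arrivals yields $\lambda^{\dag} = \lambda/(\rho + h^*(\lambda))$, and the conditional law of $W_n$ given $W_n>0$ is recovered from the time-reversal property of the Poisson arrivals inside the previous service interval; substitution into Theorem \ref{theorem:time-average-GIG1} then produces (\ref{eq:a^*-NP-LCFS-MG1-with-d}). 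In case (ii), memorylessness of exponential service makes the residual service seen by any interrupting arrival again exponential with rate $\mu$, so all the required LSTs collapse to expressions in $g^*$ and $\mu/(s+\mu)$, giving (\ref{eq:a^*-NP-LCFS-GM1-with-d}).

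For the non-discarding cases (iii) and (iv), the work-conserving property of the discipline makes the workload process coincide with that of the corresponding FCFS queue, so the delay distribution of the first informative packet of a busy period is inherited from (\ref{eq:MG1-w_q-w}) or (\ref{eq:d^*(s)-GM1}). Within a busy period, informative packets form a geometric-type chain: as long as the service of the current informative packet contains at least one arrival, the last such arrival becomes the next informative packet. In (iii) Poisson memorylessness converts this chain into the factor $(1-\rho)(s+\lambda)(1-h^*(s)+h^*(s+\lambda))/(s+\lambda h^*(s+\lambda))$ of (\ref{eq:a^*-NP-LCFS-MG1-wo-d}), while in (iv) the standard GI/M/1 root $\gamma$ defined by (\ref{eq:gamma-def}) delivers the term $g^*(s+\mu-\mu\gamma)-\gamma$ of (\ref{eq:a^*-NP-LCFS-GM1-wo-d}). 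The main obstacle is the bookkeeping in the non-discarding cases, where a single busy period can contain informative packets interleaved with backlogged non-informative packets in an intricate order; only the memorylessness of either the arrival process (in (iii)) or the service times (in (iv)) allows one to decouple the informative chain from the accumulated backlog and thereby obtain closed-form LSTs.
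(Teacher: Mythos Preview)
Your high-level plan matches the paper's: apply Theorem~\ref{theorem:time-average-GIG1} after computing $\lambda^{\dag}$, $d^*(s)$, and $a_{\peak}^*(s)$ via (\ref{eq:D_n-NP-LCFS}) and the case split (\ref{eq:NP-LCFS-A_peak-1})--(\ref{eq:NP-LCFS-A_peak-2}). For (i) and (ii) your outline is essentially what the paper does; the paper obtains $\lambda^{\dag}$ from (\ref{eq:lmd^dag-Ap-D}) rather than a direct renewal count, and in (ii) it sets up a two-state Markov chain on $\{W_n=0\}$ versus $\{W_n>0\}$ to obtain the stationary probabilities, but these are minor variations.

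For (iii) and (iv), however, there is a concrete gap. Your claim that ``the delay distribution of the first informative packet of a busy period is inherited from (\ref{eq:MG1-w_q-w}) or (\ref{eq:d^*(s)-GM1})'' is off: the first informative packet of a busy period has waiting time zero and delay $H$, so nothing is inherited there. What the work-conserving property actually delivers is that the \emph{queue-length} process (and, in (iii), the joint law of queue length and remaining service time) coincides with that of the FCFS queue. This is precisely the missing ingredient, because on the event $\{W_n+H_n^{\dag}\le G_{n+1}^{\dag}\}$ the quantity $G_{n+1}^{\dag}$ itself depends on the number $L^{\dag}$ of non-informative packets queued when the $n$th informative service begins: after $H_n^{\dag}$ ends with no new arrival, the server works through those $L^{\dag}$ packets one by one, and the $(n+1)$st informative packet is the first arrival during one of those services (or after the system empties). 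Computing $a_{\peak}^*(s)$ therefore requires the \emph{joint} law of $(W_n,L^{\dag})$, not merely $d^*(s)$. In (iii) the paper obtains this from the known bivariate transform $\Pi(z,s)=\E[z^{L}e^{-s\tilde{H}}\mid L\ge 1]$ of the M/GI/1 queue, builds $w^{**}(z,s)=\E[z^{L^{\dag}}e^{-sW}]$, and then sums over $L^{\dag}$ to produce (\ref{eq:a_peak-NP-MG1-wo-d}); in (iv) the geometric arrival-epoch law $\Pr(L^{\rm A}=k)=(1-\gamma)\gamma^{k}$ plays the analogous role. Your proposal names the backlog as an obstacle but does not supply this joint-distribution ingredient, and the phrase ``memorylessness decouples'' is not by itself enough: in (iii), Poisson arrivals give you the conditional law of $W_n$ given $W_n>0$, but say nothing about $L^{\dag}$ without the FCFS queue-length input.
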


Taking the derivatives of $a^*(s)$, we can obtain the moments of AoI.
We provide only the mean AoI below, because formulas for higher
moments are messy.

\begin{corollary}\label{corollary:NP-LCFS}
\begin{itemize}
\item[(i)] In the non-preemptive LCFS M/GI/1 queue with discarding,
\begin{align}
\E[A] 
& =
\frac{1}{\rho+h^*(\lambda)}
\left(
\frac{\lambda \E[H^2]}{2} + \frac{h^*(\lambda)}{\lambda} + (- h^{(1)}(\lambda))
\right)
+ \frac{1-h^*(\lambda)}{\lambda} - (- h^{(1)}(\lambda)) + \E[H].
\label{eq:corollary:EA-NP-LCFS-MG1-with-d}
\end{align}
\item[(ii)] In the non-preemptive LCFS GI/M/1 queue with discarding,
\begin{align}
\E[A] 
&=
\E[H] + \frac{\E[G^2]}{2\E[G]} 
+ 
\rho \biggl(
(-g^{(1)}(\mu)) 
+
\frac{\mu g^*(\mu) g^{(2)}(\mu)}{1-\mu(-g^{(1)}(\mu))} \biggr).
\label{eq:EA-NP-LCFS-GM1-with-d}
\end{align}
\item[(iii)] In the non-preemptive LCFS M/GI/1 queue without
discarding which satisfies $\rho < 1$,
\begin{equation}
\E[A] 
=
\frac{\lambda \E[H^2]}{2} 
+
\left(\frac{(1-\rho)^2}{\rho h^*(\lambda)} + 2\right) \E[H].
\label{eq:EA-NP-LCFS-MG1-w/o-d}
\end{equation}

\item[(iv)] In the non-preemptive LCFS GI/M/1 queue without
discarding which satisfies $\rho < 1$,
\begin{equation}
\E[A] 
= 
\E[H] + \frac{\E[G^2]}{2\E[G]} + \rho (-g^{(1)}(\mu-\mu\gamma)).
\label{eq:EA-NP-LCFS-GM1-w/o-d}
\end{equation}
\end{itemize}
\end{corollary}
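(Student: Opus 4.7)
The plan is a direct computation: since $\E[A] = -\lim_{s \to 0+} a^{(1)}(s)$ by (\ref{eq:x^n-def}), each of the four formulas is obtained by differentiating the corresponding LST in Theorem \ref{theorem:NP-LCFS} once and evaluating at $s=0$. Before differentiating, I would verify $a^*(0)=1$ in each case as a sanity check: this uses $h^*(0)=g^*(0)=\tilde{h}^*(0)=\tilde{g}^*(0)=1$, the identity $\rho\,\tilde{h}^*(\lambda)=1-h^*(\lambda)$ that collapses the first factor of (\ref{eq:a^*-NP-LCFS-MG1-with-d}) to $1$ at $s=0$, and, in case (iv), the defining equation $g^*(\mu-\mu\gamma)=\gamma$ from (\ref{eq:gamma-def}) that kills the bracketed correction in (\ref{eq:a^*-NP-LCFS-GM1-wo-d}).

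The computational toolkit consists of the standard expansions
\[
h^{(1)}(0) = -\E[H], \qquad h^{(2)}(0) = \E[H^2], \qquad \tilde{h}^*(0)=1, \qquad \tilde{h}^{(1)}(0) = -\frac{\E[H^2]}{2\E[H]},
\]
and their counterparts for $g^*$ and $\tilde{g}^*$, obtained by Taylor-expanding $\tilde{h}^*(s)=(1-h^*(s))/(s\E[H])$ at $s=0$. For each case I would then factor the LST into a small number of pieces and apply the product/quotient rule. For (i), writing the right-hand side of (\ref{eq:a^*-NP-LCFS-MG1-with-d}) as $K\,F_1(s)\,F_2(s)\,F_3(s)$ with $K=1/(\rho+h^*(\lambda))$ and $F_2=h^*$, the evaluations $F_1(0)=F_2(0)=1$ and $F_3(0)=\rho+h^*(\lambda)$ reduce the derivative to $F_1'(0)+F_2'(0)+K\,F_3'(0)$, each of which is immediate from the toolkit together with $\rho=\lambda\E[H]$. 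For (iii), the bracketed quotient in (\ref{eq:a^*-NP-LCFS-MG1-wo-d}) has a finite limit at $s=0$ because $s+\lambda h^*(s+\lambda)\big|_{s=0}=\lambda h^*(\lambda)\neq 0$; after the product rule the coefficient $(1-\rho)^2/(\rho h^*(\lambda))$ in (\ref{eq:EA-NP-LCFS-MG1-w/o-d}) emerges from combining the Pollaczek--Khinchine-type prefactor $(1-\rho)$ with an extra $(1-\rho)$ produced by the quotient's derivative. Cases (ii) and (iv) follow the same template; in (iv), one further use of $\gamma=g^*(\mu-\mu\gamma)$ expresses the answer solely in terms of $g^{(1)}(\mu-\mu\gamma)$.

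The main obstacle is bookkeeping rather than any conceptual subtlety: each $a^*(s)$ in Theorem \ref{theorem:NP-LCFS} contains nested quotients and compositions with $s+\lambda$, $s+\mu$, or $s+\mu-\mu\gamma$, so sign and factor errors are easy to make. I would therefore compute $a^*(0)$ first to catch simplification mistakes, keep each factor symbolic until the end, and only then substitute $\rho=\lambda\E[H]$ (or $\rho=1/(\mu\E[G])$). The second-moment contribution $g^{(2)}(\mu)$ in (\ref{eq:EA-NP-LCFS-GM1-with-d}) is the likeliest source of an error, since it arises only after differentiating an inner $g^*$-composition twice inside the quotient in (\ref{eq:a^*-NP-LCFS-GM1-with-d}).
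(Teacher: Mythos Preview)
Your proposal is correct and follows exactly the route the paper takes: the paper states that the corollary is obtained by ``taking the derivatives of $a^*(s)$'' from Theorem~\ref{theorem:NP-LCFS}, and provides no further detail. Your plan (including the $a^*(0)=1$ sanity checks, the identity $\rho\,\tilde{h}^*(\lambda)=1-h^*(\lambda)$, and the use of $\gamma=g^*(\mu-\mu\gamma)$ in case~(iv)) is precisely the bookkeeping needed to carry this out.
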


As mentioned in Section \ref{sec:intro}, we can evaluate the logging
overhead by comparing the non-preemptive LCFS queues with and without
discarding. In Figure \ref{fig:LCFS-NP}, the mean AoI $\E[A]$ is
plotted as a function of $\rho$ for the non-preemptive LCFS M/M/1 and
D/M/1 queues with and without discarding, where $\E[H]=1$. 
We observe that the logging overhead is relatively large in the 
M/M/1 queue, while in the D/M/1 queue, the minimum values of $\E[A]$
have little difference between the cases with and without discarding
non-informative packets. Therefore, for the deterministic arrival
case, one can collect all sampled data with a small effect on the 
mean AoI $\E[A]$, using the non-preemptive LCFS service discipline
without discarding.

\begin{figure}[tbp]
\centering
\includegraphics[scale=1.0]{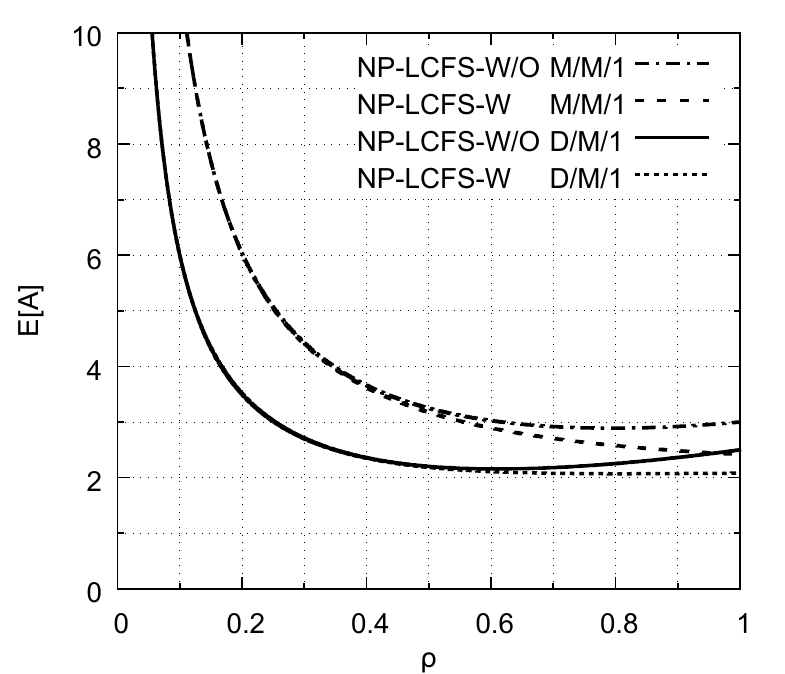}
\caption{$\E[A]$ in the LCFS-NP M/M/1 and M/D/1 queues with and
without discarding.}
\label{fig:LCFS-NP}
\end{figure}

\subsection{Mean AoI comparison among four service disciplines}
\label{ssec:comparison}

In this subsection, we compare the mean AoI in the stationary M/GI/1
and GI/M/1 queues with four service disciplines.
Let $\E[A_{\rm FCFS}]$, $\E[A_{\rm LCFS}^{\rm P}]$,
$\E[A_{\rm LCFS}^{\rm NP-W}]$, and $\E[A_{\rm LCFS}^{\rm NP-W/O}]$
denote the mean AoI in the FCFS queue, preemptive LCFS queue,
non-preemptive LCFS queue with discarding, and
non-preemptive LCFS queue without discarding, respectively.

For the M/GI/1 and GI/M/1 queues, we obtain the following relations,
whose proofs are given in Appendix \ref{appendix:M/G/1-G/M/1-order}.

\begin{theorem}\label{theorem:M/G/1-G/M/1-order}
\begin{itemize}
\item[(i)]
In the stationary M/GI/1 queue, we have
\[
\E[A_{\rm LCFS}^{\rm NP-W}]
\leq
\E[A_{\rm LCFS}^{\rm NP-W/O}]
\leq
\E[A_{\rm FCFS}].
\]

\item[(ii)]
In the stationary GI/M/1 queue, we have
\[
\E[A_{\rm LCFS}^{\rm P}]
\leq
\E[A_{\rm LCFS}^{\rm NP-W}]
\leq
\E[A_{\rm LCFS}^{\rm NP-W/O}]
\leq
\E[A_{\rm FCFS}].
\]
\end{itemize}
\end{theorem}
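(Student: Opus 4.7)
The plan is to establish each pairwise inequality by subtracting the closed-form mean AoI expressions from the earlier results and then verifying that the difference is non-negative. The relevant formulas are (\ref{eq:mean-AoI-LCFS-P-MG1}) and (\ref{eq:mean-AoI-GM1-LCFS-P}) for P-LCFS, (\ref{eq:mean-age-MG1}) and (\ref{eq:mean-age-GM1}) for FCFS, and Corollary~\ref{corollary:NP-LCFS} for the four non-preemptive cases. I would handle the two parts independently and start with part (ii), whose comparisons are algebraically more transparent.

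For part (ii), the easy steps are the two outer inequalities. Subtracting (\ref{eq:mean-AoI-GM1-LCFS-P}) from (\ref{eq:EA-NP-LCFS-GM1-with-d}) yields $\rho[(-g^{(1)}(\mu))+\mu g^*(\mu) g^{(2)}(\mu)/(1-\mu(-g^{(1)}(\mu)))]$, a sum of non-negative terms once one checks that the denominator is strictly positive; this follows from $\mu(-g^{(1)}(\mu))=\E[\mu G e^{-\mu G}]\leq\sup_{y\geq 0} y e^{-y}=e^{-1}<1$. Subtracting (\ref{eq:EA-NP-LCFS-GM1-w/o-d}) from (\ref{eq:mean-age-GM1}) collapses to $\rho\cdot(-g^{(1)}(\mu-\mu\gamma))\cdot\gamma/(1-\gamma)$, which is manifestly non-negative since $\gamma\in(0,1)$ and $-g^{(1)}(\cdot)\geq 0$. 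The tightest comparison $\E[A_{\rm LCFS}^{\rm NP-W}]\leq\E[A_{\rm LCFS}^{\rm NP-W/O}]$ reduces to proving
\[
(-g^{(1)}(\mu-\mu\gamma))-(-g^{(1)}(\mu))\geq\frac{\mu g^*(\mu)g^{(2)}(\mu)}{1-\mu(-g^{(1)}(\mu))}.
\]
Here I would exploit the integral representation $(-g^{(1)}(\mu-\mu\gamma))-(-g^{(1)}(\mu))=\int_{\mu(1-\gamma)}^{\mu}g^{(2)}(s)\,\dd s$, the defining fixed-point relation $\gamma=g^*(\mu-\mu\gamma)$, and the Cauchy--Schwarz inequality $(-g^{(1)}(\mu))^2\leq g^*(\mu)g^{(2)}(\mu)$ that controls the right-hand side.

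For part (i), subtracting (\ref{eq:EA-NP-LCFS-MG1-w/o-d}) from (\ref{eq:mean-age-MG1}) gives a difference equal to $\frac{\rho\lambda\E[H^2]}{2(1-\rho)}-\E[H]\cdot\frac{h^*(\lambda)-(1-\rho)}{h^*(\lambda)}$. Non-negativity follows from two elementary Taylor bounds on $e^{-x}$: $h^*(\lambda)\geq 1-\rho$ (from $e^{-x}\geq 1-x$) and $h^*(\lambda)-(1-\rho)\leq\lambda^2\E[H^2]/2$ (from $e^{-x}\leq 1-x+x^2/2$ for $x\geq 0$). Combined, these give $\E[H]\cdot[h^*(\lambda)-(1-\rho)]/h^*(\lambda)\leq \E[H]\lambda^2\E[H^2]/[2(1-\rho)]=\rho\lambda\E[H^2]/[2(1-\rho)]$, which is exactly the first term. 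The remaining inequality $\E[A_{\rm LCFS}^{\rm NP-W}]\leq\E[A_{\rm LCFS}^{\rm NP-W/O}]$ requires expanding (\ref{eq:corollary:EA-NP-LCFS-MG1-with-d}) and (\ref{eq:EA-NP-LCFS-MG1-w/o-d}), placing the difference over the common denominator $\rho(\rho+h^*(\lambda))h^*(\lambda)$, and bounding the resulting numerator via the same two Taylor inequalities together with $-h^{(1)}(\lambda)=\E[He^{-\lambda H}]\leq\E[H]$.

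The main obstacle is the NP-W versus NP-W/O comparison in both parts. In each case, the difference vanishes to leading order in natural limits (as $\rho\to 0$ in M/GI/1, and as $\mu\to\infty$ in GI/M/1, as one can verify on the M/M/1 special case), so any crude bounding of $h^*(\lambda)$ or of the quantities in the GI/M/1 bracket is too weak to close the argument. The proof therefore hinges on retaining next-order corrections---either through the refined Taylor pair $(e^{-x}\geq 1-x,\ e^{-x}\leq 1-x+x^2/2)$ in the M/GI/1 case or through careful use of the fixed-point identity $\gamma=g^*(\mu-\mu\gamma)$ coupled with Cauchy--Schwarz in the GI/M/1 case.
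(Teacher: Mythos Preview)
Your treatment of the four ``easy'' inequalities is correct and matches the paper: the two outer comparisons in part~(ii) and $\E[A_{\rm LCFS}^{\rm NP-W/O}]\leq\E[A_{\rm FCFS}]$ in part~(i) all reduce, exactly as you say, to the Taylor bounds $1-\rho\leq h^*(\lambda)\leq 1-\rho+\lambda^2\E[H^2]/2$ and to non-negativity of $-g^{(1)}$. For the GI/M/1 middle inequality your target
\[
(-g^{(1)}(\mu-\mu\gamma))-(-g^{(1)}(\mu))\;\geq\;\frac{\mu g^*(\mu)g^{(2)}(\mu)}{1-\mu(-g^{(1)}(\mu))}
\]
is exactly what the paper proves, and your integral representation combined with the fixed-point relation is enough: convexity of $g^*$ gives $\gamma=g^*(\mu-\mu\gamma)\geq g^*(\mu)+\mu\gamma(-g^{(1)}(\mu))$, hence $\gamma\geq g^*(\mu)/[1-\mu(-g^{(1)}(\mu))]$, and then $\int_{\mu(1-\gamma)}^{\mu}g^{(2)}(s)\,\dd s\geq\mu\gamma\,g^{(2)}(\mu)$ since $g^{(2)}$ is decreasing. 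The paper obtains the same two bounds by truncating the power series $\gamma=\sum_n b_n\gamma^n$ and $-g^{(1)}(\mu-\mu\gamma)=\sum_n (n+1)b_{n+1}\gamma^n/\mu$ at the first two terms. However, drop the Cauchy--Schwarz idea: $(-g^{(1)}(\mu))^2\leq g^*(\mu)g^{(2)}(\mu)$ \emph{lower}-bounds the right-hand side, which is the wrong direction.

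The genuine gap is the M/GI/1 comparison $\E[A_{\rm LCFS}^{\rm NP-W}]\leq\E[A_{\rm LCFS}^{\rm NP-W/O}]$. After the paper's factorisation, the difference is a non-negative multiple of
\[
h^*(\lambda)\Bigl(\tfrac{\lambda^2\E[H^2]}{2}+h^*(\lambda)+\lambda(-h^{(1)}(\lambda))-1\Bigr)+\rho\bigl(h^*(\lambda)-(1-\rho)\bigr),
\]
so everything hinges on showing $\frac{\lambda^2\E[H^2]}{2}+h^*(\lambda)+\lambda(-h^{(1)}(\lambda))\geq 1$. Your toolkit does not deliver this: the bound $h^*(\lambda)\geq 1-\rho$ alone falls short by $\rho-\lambda^2\E[H^2]/2$, and $-h^{(1)}(\lambda)\leq\E[H]$ is an \emph{upper} bound on the very term you need to push up. The paper instead proves the pointwise inequality $f(x):=\tfrac{(\lambda x)^2}{2}+(1+\lambda x)e^{-\lambda x}\geq 1$ by checking $f(0)=1$ and $f'(x)=\lambda^2 x(1-e^{-\lambda x})\geq 0$, then integrates against $\dd H$. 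You will need this (or the equivalent refinement $(1+u)e^{-u}\geq 1-u^2/2$) rather than the cruder second-order Taylor bound.
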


Theorem \ref{theorem:M/G/1-G/M/1-order} (ii) shows that 
for exponential service times, $\E[A_{\rm LCFS}^{\rm
P}]$ is the smallest among four service disciplines,
which is almost obvious from the memoryless property of the
exponential distribution.
Readers are referred to \cite{Bedewy16} for a detailed discussion on
the optimality of the preemptive LCFS discipline in queues with
exponential service times.

\begin{figure}[t]
\centering
\includegraphics[scale=1.0]{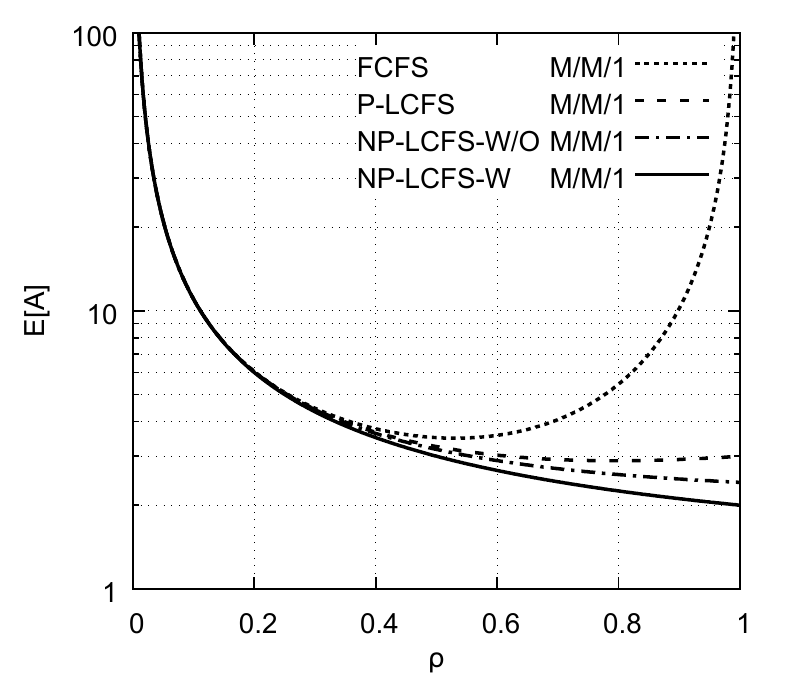}
\caption{Comparison of the mean AoI in M/M/1 queues.}
\label{fig:m-m-1}
\mbox{}
\\
\includegraphics[scale=1.0]{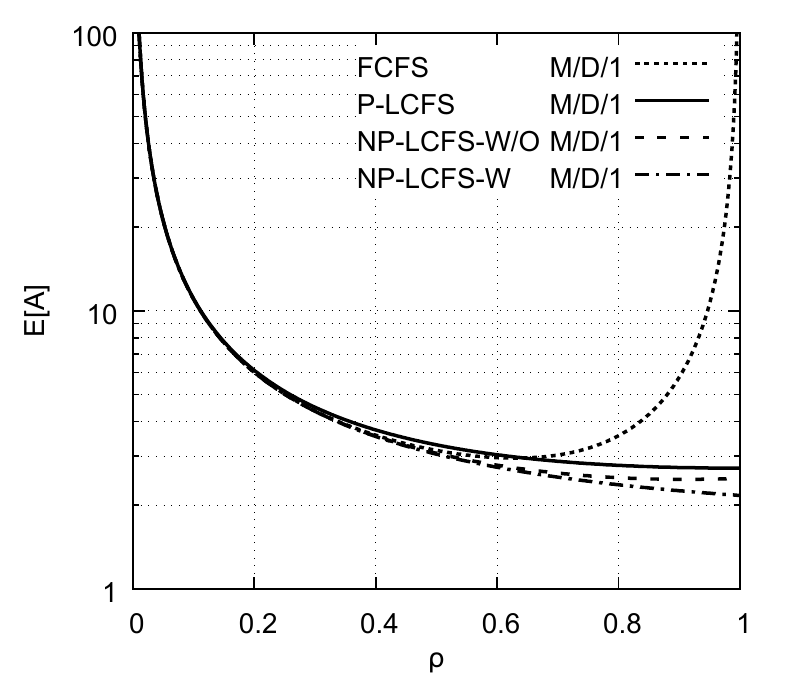}
\caption{Comparison of the mean AoI in M/D/1 queues.}
\label{fig:m-d-1}
\end{figure}

Figures \ref{fig:m-m-1} and \ref{fig:m-d-1} show the mean AoI
in the M/M/1 and M/D/1 queues with $\E[H]=1$ as a function of the
traffic intensity $\rho$.  We observe that when the traffic intensity
$\rho$ is large, $\E[A_{\rm FCFS}]$ increases with $\rho$ because it
diverges as $\rho \to 1$ (cf.\ (\ref{eq:mean-age-MG1})). 
On the other hand, $\E[A_{\rm LCFS}^{\rm NP-W/O}]$ takes a moderate
value for large $\rho$, even though the system delay of
non-informative packets diverges as $\rho \to 1$. 

As discussed in Section \ref{ssec:LCFS-P}, the mean AoI in the
preemptive M/GI/1 queue is influenced strongly by the service time
distribution. Particularly, in the M/D/1 queue, using the results in
Appendix \ref{appendix:summary}, we can analytically show that
$\E[A_{\rm LCFS}^{\rm P}] \geq \E[A_{\rm LCFS}^{\rm NP-W/O}]$
for all $\rho \in (0,1)$ and 
\[
\E[A_{\rm LCFS}^{\rm P}] \geq \E[A_{\rm FCFS}],
\quad
\rho \in (0,\hat{\rho}^*], 
\]
where $\hat{\rho}^* \approx 0.643798$ denotes the unique positive
solution of $2(1-\rho)\exp(\rho)+\rho-2 = 0$.  

We close this section with the following theorem whose
proof is given in Appendix \ref{appendix:theorem:MG1-FCFS<P-LCFS}

\begin{theorem}\label{theorem:MG1-FCFS<P-LCFS}
In the stationary M/GI/1 queue, we have
\begin{align}
\lefteqn{\rho \in (0,2-\sqrt{2})\ \mbox{and}\ (\Cv[H])^2 \leq v(\rho)}
&
\nonumber
\\
& \qquad\qquad\qquad\qquad {}
\Rightarrow\ 
\E[A_{\rm LCFS}^{\rm P}] \geq \E[A_{\rm FCFS}],
\label{eq:theorem:MG1-FCFS<P-LCFS}
\end{align}
where $v(\rho)$ ($\rho \in (0,2-\sqrt{2})$) is a decreasing function of $\rho$,
which is given by
\[
v(\rho)
= 
\frac{1}{\rho^2} \left(\sqrt{1+\rho^2 \{(2-\rho)^2 - 2\}}-1\right) 
< 
v(0+)=1.
\]
\end{theorem}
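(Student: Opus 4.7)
The plan is to reduce the desired inequality $\E[A_{\rm LCFS}^{\rm P}] \geq \E[A_{\rm FCFS}]$ to a polynomial inequality in $\rho$ and $(\Cv[H])^2$ by applying a quadratic upper bound to $h^*(\lambda)$.

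Using (\ref{eq:mean-age-MG1}) and (\ref{eq:mean-AoI-LCFS-P-MG1}), I first cancel the common term $(1-\rho)\E[H]/(\rho h^*(\lambda))$ to write
\[
\E[A_{\rm LCFS}^{\rm P}] - \E[A_{\rm FCFS}]
=
\E[H]\cdot\frac{1-h^*(\lambda)}{h^*(\lambda)} - \frac{\lambda \E[H^2]}{2(1-\rho)}.
\]
Writing $w := 1+(\Cv[H])^2 = \E[H^2]/(\E[H])^2$, the desired inequality becomes
\[
\frac{1-h^*(\lambda)}{h^*(\lambda)} \geq \frac{\rho w}{2(1-\rho)}.
\]

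Next, I would invoke the elementary bound $e^{-x} \leq 1 - x + x^2/2$ for $x \geq 0$ (verified by differentiating twice), which yields $h^*(\lambda) = \E[e^{-\lambda H}] \leq 1 - \rho + \rho^2 w / 2$. Hence it suffices to establish
\[
\frac{\rho - \rho^2 w/2}{1-\rho+\rho^2 w/2} \geq \frac{\rho w}{2(1-\rho)}.
\]
Clearing denominators (positive since $\rho < 1$ and, as will be verified, $\rho w < 2$) and simplifying turns this into the quadratic condition
\[
Q(w) := \rho^2 w^2 + 2(1-\rho^2)\, w - 4(1-\rho) \leq 0.
\]
Since $Q(0) = -4(1-\rho) < 0$ and the leading coefficient is positive, the roots $w_\pm$ of $Q$ satisfy $w_- < 0 < w_+$, so $Q(w) \leq 0$ iff $w \leq w_+$, i.e., $(\Cv[H])^2 \leq w_+ - 1$. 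Using the identity $(1-\rho^2)^2 + 4\rho^2(1-\rho) = 1 + \rho^2((2-\rho)^2 - 2)$, a direct computation shows $w_+ - 1 = v(\rho)$ as defined in the theorem statement.

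To complete the proof I verify the two auxiliary claims. Evaluating $Q(1) = 2 - (2-\rho)^2$ shows that $v(\rho) \geq 0$ iff $\rho \leq 2 - \sqrt{2}$, which explains the range restriction. For the strict monotonicity of $v$, implicit differentiation of $Q(w_+,\rho) = 0$ yields
\[
\frac{dw_+}{d\rho} = -\frac{\rho w_+^2 - 2\rho w_+ + 2}{\rho^2 w_+ + 1 - \rho^2}.
\]
The numerator, viewed as a quadratic in $w_+$, has discriminant $4\rho(\rho-2)<0$ and is therefore strictly positive; the denominator is positive for $\rho \in (0,1)$. Hence $v'(\rho) < 0$ on $(0,2-\sqrt{2})$, and a Taylor expansion at $\rho = 0$ yields $v(0+) = 1$. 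The main obstacle is the bookkeeping in reducing the rational inequality to $Q(w) \leq 0$: one must confirm that $\rho w < 2$ at the relevant $w$, which follows from the identity $Q(2/\rho) = 4/\rho > 0$, forcing $w_+ < 2/\rho$ and hence all denominators remain positive throughout.
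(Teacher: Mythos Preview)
Your proof is correct and follows essentially the same route as the paper's: both subtract the two mean-AoI formulas, apply the quadratic bound $h^*(\lambda)\le 1-\rho+\lambda^2\E[H^2]/2$, and reduce to a one-variable quadratic inequality whose positive root yields $v(\rho)$. Your variable $w=1+(\Cv[H])^2$ and the paper's $x=\rho^2(\Cv[H])^2$ give equivalent quadratics (indeed $f_1(\rho^2(w-1))=-\rho^2 Q(w)$), and you additionally supply the monotonicity and limit arguments that the paper only asserts; the positivity check $\rho w<2$ is harmless but unnecessary, since the denominators $1-\rho+\rho^2w/2$ and $2(1-\rho)$ are already positive for $\rho\in(0,1)$.
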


Figure \ref{figure:sufficient-region} depicts the region 
(\ref{eq:theorem:MG1-FCFS<P-LCFS}) of
($\rho$, $\Cv[H]$) for $\E[A_{\rm LCFS}^{\rm P}] \geq
\E[A_{\rm FCFS}]$.

\begin{figure}[t]
\centering
\includegraphics[scale=1.0]{./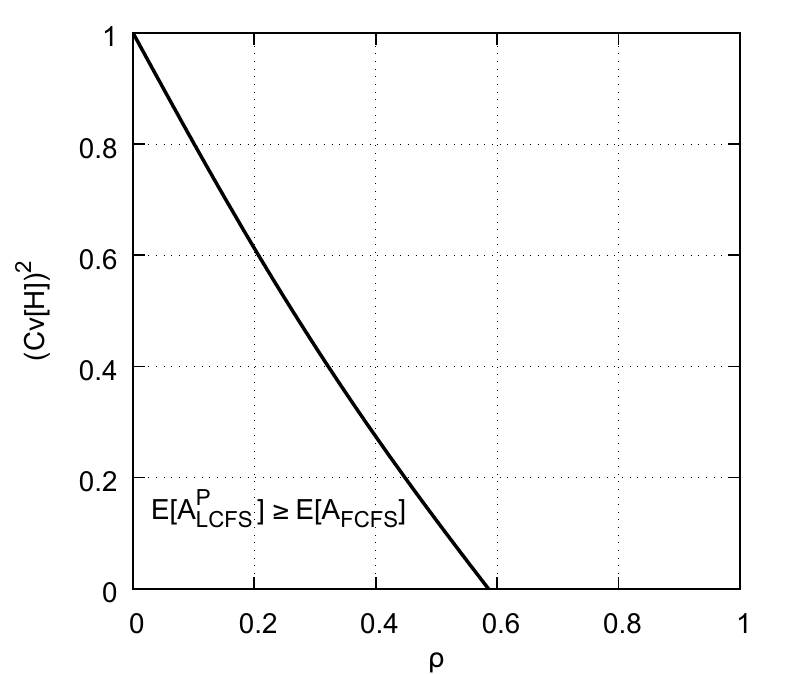}
\caption{Sufficient condition (\ref{eq:theorem:MG1-FCFS<P-LCFS}) 
in terms of $\rho$ and $(\Cv[H])^2$.}
\label{figure:sufficient-region}
\end{figure}

\section{Conclusion} \label{sec:conclusion}

We presented a general formula for the stationary distribution of the
AoI, which holds for a wide class of information update systems.  The
formula shows that the stationary distribution of the AoI is given in
terms of the stationary distributions of the system delay and the peak
AoI. Therefore it provides a unified, efficient approach to the
analysis of the AoI because the system delay and the peak AoI can be
analyzed by standard techniques in queueing theory.  To demonstrate
this fact, we analyzed the stationary distributions of the AoI in
single-server queues under four different service disciplines:
first-come first-served (FCFS), preemptive last-come first-served
(LCFS), and non-preemptive LCFS with and without discarding. Moreover,
we showed comparison results for the mean AoI in the M/GI/1 and GI/M/1
queues under these service disciplines. The results in this paper will
form a basis for the analysis of the AoI in developing sophisticated
information update systems.

\appendix

\section{Summary of Results for Special Cases}\label{appendix:summary}

In this appendix, we summarize some simplified formulas for three
special cases: The M/M/1, M/D/1, and D/M/1 queues.  Note that formulas
for the M/M/1 queue can be obtained from those in either the M/GI/1 or
GI/M/1 queues, and formulas for the M/D/1 (resp.\ D/M/1) queue can be
obtained from those for the M/GI/1 (resp.\ GI/M/1) queue.

\subsection{FCFS Queues}\label{appendix:FCFS}

\subsubsection*{The FCFS M/M/1 queue}

Consider the stationary FCFS M/M/1 queue.
From (\ref{eq:LST-M/GI/1}), we have
\begin{align*}
a^*(s) 
&=
\frac{(1-\rho)\mu}{s+(1-\rho)\mu} 
-
\frac{(1-\rho) \mu s (s+\lambda+\mu)}{(s+\mu)^2 (s+\lambda)},
\end{align*}
and therefore,
\begin{align*}
A(x)
&=
1 - e^{-(1-\rho) \mu x} 
+
\left( \frac{1}{1-\rho} + \rho \mu x \right) e^{-\mu x} 
- 
\frac{1}{1-\rho} \cdot e^{-\lambda x}.
\end{align*}
Because $\E[H^2]=2(\E[H])^2$ holds in this model, (\ref{eq:mean-age-MG1}) and
(\ref{eq:2nd-age-MG1}) are also reduced to
\begin{align}
\E[A] 
&= 
\left( \frac{1}{1-\rho} + \frac{1}{\rho} - \rho \right) \E[H],
\label{eq:mean-age-MM1}
\\
\E[A^2] 
&=
2 \left(
\frac{1}{(1-\rho)^2} -2\rho + \frac{1}{\rho} + \frac{1}{\rho^2}
\right) (\E[H])^2.
\nonumber
\end{align}
Note that (\ref{eq:mean-age-MM1}) is identical to Eq.\ (17) 
in \cite{Kaul12-1}.

\subsubsection*{The FCFS M/D/1 queue}

Consider the stationary FCFS M/D/1 queue.
From  (\ref{eq:mean-age-MG1}) and (\ref{eq:2nd-age-MG1}), 
we have
\begin{align}
\E[A] 
&=
\left(
\frac{1}{2(1-\rho)} 
+
\frac{1}{2}
+ 
\frac{(1-\rho) \exp(\rho)}{\rho}
\right) \E[H],
\label{eq:mean-age-MD1}
\\
\E[A^2] 
&=
\left(
\frac{1}{2(1-\rho)^2} + \frac{1}{3(1-\rho)} + \frac{1}{6}
+ \frac{2(1-\rho) \exp(2\rho)}{\rho^2}
\right)(\E[H])^2.
\notag
\end{align}
Although the FCFS M/D/1 queue is considered
in \cite{Kaul12-1}, no explicit formula for $\E[A]$ is provided there.
With the explicit formula (\ref{eq:mean-age-MD1}), one can deduce
that $E[A]$ is minimized when $\rho \approx 0.6291$.

In addition, with (\ref{eq:mean-age-MM1}) and (\ref{eq:mean-age-MD1}),
it is easy to verify that $\E[A]$ in the FCFS M/D/1 queue
is strictly smaller than that in the FCFS M/M/1 queue with the same
$\E[H]$ and $\rho$ ($0 < \rho < 1$).

\subsubsection*{The FCFS D/M/1 queue}

Consider the stationary FCFS D/M/1 queue.
From (\ref{eq:mean-age-GM1}) and (\ref{eq:2nd-age-GM1}), 
we have
\begin{align}
\E[A]
&=
\left( \frac{1}{2\rho} + \frac{1}{1-\gamma} 
\right) \E[H],
\label{eq:mean-age-DM1}
\\
\E[A^2]
&=
\left( 
2 \left(\frac{1}{1-\gamma}\right)^2 
+
\frac{1}{(1-\gamma)\rho} 
+
\frac{1}{3 \rho^2}
\right) (\E[H])^2.
\nonumber
\end{align}
We note that (\ref{eq:mean-age-DM1}) is identical to Eq.\ (25) in
\cite{Kaul12-1}.

\subsection{Preemptive LCFS queues}\label{appendix:PR-LCFS}

\subsubsection*{The preemptive LCFS M/M/1 queue}

Consider the preemptive LCFS M/M/1 queue.
In this case, (\ref{eq:theorem-PR-LCFS-AoI}) is reduced to 
\[
a^*(s) = \frac{\lambda}{s+\lambda} \cdot \frac{\mu}{s+\mu},
\]
so that the AoI is given by the sum of an interarrival time and a 
service time, which are independent. We thus have
if $\lambda \neq \mu$, 
\[
\Pr(A \leq x) 
= 
1 - \frac{\mu}{\mu-\lambda} \cdot e^{-\lambda x}
+ \frac{\lambda}{\mu-\lambda} \cdot e^{-\mu x}, 
\quad x \geq 0,
\]
and if $\lambda=\mu$, the AoI $A$ follows an Erlang distribution of
the second order.
\[
\Pr(A \leq x) 
= 
1 - e^{\lambda x} - \lambda x e^{-\lambda x}, 
\quad x \geq 0.
\]
Furthermore,
\begin{align}
\E[A] &= \left(1+\frac{1}{\rho}\right) \E[H],
\label{eq:EA-preemptive-MM1}
\\
\E[A^2] &= 2\left(1+\frac{1}{\rho}+\frac{1}{\rho^2}\right) (\E[H])^2.
\nonumber
\end{align}
Note that (\ref{eq:EA-preemptive-MM1}) is identical to Eq.\ (48) in 
\cite{Kaul12-2}.

\subsubsection*{The preemptive LCFS M/D/1 queue}
\label{appendix:LCFS-P-MD1}

Consider the preemptive LCFS M/D/1 queue.
From Corollary \ref{corollary:PR-LCFS-summary}, 
we have
\begin{align}
\E[A] &= \frac{\exp(\rho)}{\rho} \cdot \E[H],
\label{eq:meanAoI-P-LCFS-MD1}
\\
\E[A^2] &= 
2 \left(\frac{\exp(\rho)}{\rho} - 1\right) 
\frac{\exp(\rho)}{\rho} (\E[H])^2. 
\nonumber
\end{align}

\subsubsection*{The preemptive LCFS D/M/1 queue}

Consider the preemptive LCFS D/M/1 queue.  From
(\ref{eq:mean-AoI-GM1-LCFS-P}), we have
\begin{align*}
\E[A] &= \frac{\E[G]}{2} + \E[H],
\\
\E[A^2] &= \frac{(\E[G])^2}{3} + \E[H]\E[G] + 2(\E[H])^2.
\end{align*}

\subsection{Non-preemptive LCFS queues with discarding}
\label{subsection:appendix-NP-LCFS-C}

\subsubsection*{The non-preemptive LCFS M/M/1 queue with discarding}

Consider the non-preemptive LCFS M/M/1 queue with discarding.
From Theorem \ref{theorem:NP-LCFS} and Corollary
\ref{corollary:NP-LCFS}, we have
\begin{align}
a^*(s)
&= 
\frac{1+\rho}{1+\rho+\rho^2}
\left( \frac{\lambda}{s+\lambda} + \rho \cdot \frac{\mu}{s+\mu}
- \frac{\rho}{1+\rho} \cdot \frac{\lambda+\mu}{s+\lambda+\mu}
\right)
\left(\frac{1}{1+\rho}+\frac{\rho}{1+\rho}\cdot \frac{\lambda+\mu}{s+\lambda+\mu}
\right) \frac{\mu}{s+\mu},
\nonumber
\\
\E[A]
&=
\frac{\E[H]}{1+\rho+\rho^2}
\left[
2\rho^2 + 3\rho + \frac{1}{\rho}+\frac{3}{1+\rho} 
- \left(\frac{1}{1+\rho}\right)^2
\right].
\label{EA-NP-MM1-w}
\end{align}
We can verify that (\ref{EA-NP-MM1-w}) is identical to Eq.\ (65) in 
\cite{Costa16}.

\subsubsection*{The non-preemptive LCFS M/D/1 queue with discarding}

Consider the non-preemptive LCFS M/D/1 queue with
discarding. From Corollary \ref{corollary:NP-LCFS}, 
we have
\begin{align*}
\E[A] 
&= 
\left[
\frac{1}{1+\rho \exp(\rho)}\left(\frac{1}{2}+\frac{1}{\rho}\right)
+
\frac{\exp(\rho)-(1+\rho)}{\rho \exp(\rho)} + \frac{3}{2}
\right]\E[H].
\end{align*}

\subsubsection*{The non-preemptive LCFS D/M/1 queue with discarding}

Consider the stationary non-preemptive LCFS D/M/1 queue with
discarding. From Corollary \ref{corollary:NP-LCFS},
we have
\begin{align*}
\E[A] &= \frac{\E[G]}{2} + \frac{\E[H]}{1 - \rho^{-1}\exp(-\rho^{-1})}.
\end{align*}

\subsection{Non-preemptive LCFS queues without discarding}
\label{subsection:appendix-NP-LCFS-D}

\subsubsection*{The non-preemptive LCFS M/M/1 queue without discarding}

Consider the stationary non-preemptive LCFS M/M/1 queue without
discarding. From Theorem \ref{theorem:NP-LCFS} and Corollary 
\ref{corollary:NP-LCFS}, we have
\begin{align*}
a^*(s) 
&=
\left[
1 - \rho + \rho(2-\rho) \frac{\mu}{s+\mu} 
- \rho(1-\rho) \left(\frac{\mu}{s+\mu}\right)^2
\right]
\frac{\lambda}{s+\lambda} \cdot \frac{\mu}{s+\mu},
\\
\E[A]
&=
\left(\rho^2 + 1+\frac{1}{\rho}\right) \E[H],
\\
\E[A^2]
&=2\left( 3\rho^2 + 1 +\frac{1}{\rho}+\frac{1}{\rho^2} \right)
(\E[H])^2.
\end{align*}

\subsubsection*{The non-preemptive LCFS M/D/1 queue without discarding}

Consider the stationary non-preemptive LCFS M/D/1 queue without
discarding.  From Corollary \ref{corollary:NP-LCFS},
we have
\begin{equation}
\E[A] = \left(
\frac{\rho}{2} + 2 + \frac{(1-\rho)^2}{\rho} \cdot \exp(\rho)\right) \E[H].
\label{eq:meanAoI-NP-LCFS-WO-MD1}
\end{equation}

\subsubsection*{The non-preemptive LCFS D/M/1 queue without discarding}

Consider the stationary non-preemptive LCFS D/M/1 queue without
discarding. In this case, we have $-g^{(1)}(\mu-\mu\gamma) = \gamma\E[G]$. 
Corollary \ref{corollary:NP-LCFS} then yields
\[
\E[A] = \frac{\E[G]}{2} + (1+\gamma)\E[H].
\]

\section{Proof of Lemma \ref{lemma:cov-D-G}}
\label{appendix:cov-D-G}

We first prove 
\begin{align}
\Cov[G_n^{\dag},D_n]
&= 
\int_0^{\infty} 
\E\left[(y - G)(G-\E[G]) \mid G \leq y\right]
G(y) \dd D(y).
\label{eq:cov-D-G}
\end{align}
It follows from Lindley's recursion that
\begin{equation}
D_n = \one_{\{G_n^{\dag} \leq D_{n-1}\}}(D_{n-1} - G_n^{\dag}) + H_n^{\dag},
\quad 
n = 1,2,\ldots.
\label{recursion-D_n}
\end{equation}
Since both $D_{n-1}$ and $H_n^{\dag}$ are independent of $G_n^{\dag}$, it
follows from (\ref{recursion-D_n}) and the stationarity of the system that
\begin{align}
\E[G_n^{\dag} D_n] - \E[G]\E[H]
&=
\E[G_n^{\dag} (D_n-\E[H])]
\nonumber 
\\
&=
\int_0^{\infty} 
\E\left[\one_{\{G \leq y\}} (y - G)G\right] \dd D(y)
\nonumber
\\
&=
\int_0^{\infty} 
G(y) \E\left[ (y - G)G \,|\, G \leq y \right] \dd D(y).
\label{hm-eqn-05}
\end{align}
Similarly, we have
\begin{align}
\E[G_n^{\dag}] \E[D_n] - \E[G]\E[H]
&=\int_0^{\infty} 
G(y) \E\left[y - G \,|\, G \leq y \right] \E[G] \dd D(y).
\label{hm-eqn-06}
\end{align}
Substituting (\ref{hm-eqn-05}) and (\ref{hm-eqn-06}) into
$\Cov[G_n^{\dag},D_n] = \E[G_n^{\dag} D_n] - \E[G]\E[D]$ yields (\ref{eq:cov-D-G}).

Using (\ref{eq:cov-D-G}), we obtain the lower bound of
$\Cov[G_n^{\dag},D_n]$
as follows:
\begin{align*}
\Cov[G_n^{\dag},D_n]
&=
\int_0^{\infty} 
\E\left[(y - G)(G-\E[G])\one_{\{G < \E[G]\}}\one_{\{G \leq y\}}\right]
\dd D(y)
\\
&\quad
{}+
\int_0^{\infty} 
\E\left[(y - G)(G-\E[G])\one_{\{\E[G] \leq G \leq y\}}\right]
\dd D(y)
\\
&\geq
\int_0^{\infty} 
\E\left[(y - G)(G-\E[G])\one_{\{G < \E[G]\}}\one_{\{G \leq y\}}\right]
\dd D(y)
\\
&=
\int_0^{\infty} 
\E\bigl[ ( -\E[G]y + G \cdot (y-G+\E[G])) 
\\
& \qquad\qquad\qquad\qquad\qquad\qquad {} 
\times
\one_{\{G < \E[G]\}}\one_{\{G \leq y\}}\bigr]
\dd D(y)
\\
&\geq
- \E[G]\int_0^{\infty} 
\E\left[\one_{\{G < \E[G]\}}\one_{\{G \leq y\}}\right]
y \dd D(y)
\\
&\geq
- \E[G]\int_0^{\infty} \E[\one_{\{G < \E[G]\}}] y \dd D(y)
\\
&=
- \E[G]\Pr(G < \E[G] )\E[D].
\end{align*}

Finally, we prove $\Cov[G_n^{\dag},D_n] \leq 0$. Consider an arbitrarily fixed
$y$ such that $G(y)=\Pr(G \leq y) > 0$. If $\Pr(G > y) =0$, then $\E[G]=\E[G
\mid G \leq y]$, and otherwise
\begin{align*}
\E[G] 
&
= \Pr(G \leq y) \E[G \mid G \leq y] + \Pr(G > y) \E[G \mid G > y]
\\
& \geq
\Pr(G \leq y) \E[G \mid G \leq y] + \Pr(G > y) \E[G \mid G \leq y]
\\
& =
\E[G \mid G \leq y],
\end{align*}
because $\E[G \mid G \leq y] \leq y \leq \E[G \mid G > y]$ if $\Pr(G >
y) > 0$.  Note also that $(y-x)x$ is concave with respect to $x$.
Using these facts and Jensen's inequality, we have
\begin{align}
\E\left[(y - G)G \mid G \leq y\right]
&\leq
\left(y - \E[G \mid G \leq y]\right) \E[G \mid G \leq y]
\nonumber
\\
&\leq
\E[y - G \mid G \leq y] \E[G],
\label{hm-eqn-07}
\end{align}
for all $y \geq 0$ such that $G(y) > 0$. 
Combining (\ref{hm-eqn-05}), (\ref{hm-eqn-06}) and (\ref{hm-eqn-07}),
we obtain $\E[G_n^{\dag} D_n] \leq \E[G_n^{\dag}]\E[D_n]$ and thus
$\Cov[G_n^{\dag},D_n] \leq 0$. 

\section{Derivation of (\ref{eq:PR-LCFS-GG1-a^*(s)})}
\label{appendix:PR-LCFS-GG1-a^*(s)}

From (\ref{eq:PR-LCFS-GG1-(n+1)st-peak}), we obtain
\begin{align}
a_{\peak}^*(s) 
&= 
g_{>H}^*(s) 
\sum_{m=1}^{\infty} (1-\zeta)\zeta^{m-1} \cdot (g_{<H}^*(s))^{m-1} h_{<G}^*(s)
\nonumber
\\
&=
\frac{(1-\zeta) g_{>H}^*(s)}{1-\zeta g_{<H}^*(s)} \cdot h_{<G}^*(s).
\label{eq:PR-LCFS-GG1-peak}
\end{align}
With straightforward calculations using (\ref{eq:lmd^dag-Ap-D}), 
(\ref{eq:D_n-PR-LCFS-GG1}), and (\ref{eq:PR-LCFS-GG1-peak}), 
we obtain 
\begin{equation}
\lambda^{\dag} = \frac{1-\zeta}{\E[G]}.
\label{eq:Pr-LCFS-GG1-throughput}
\end{equation}
Therefore, (\ref{eq:PR-LCFS-GG1-a^*(s)}) follows from 
Theorem \ref{theorem:time-average-GIG1},
(\ref{eq:D_n-PR-LCFS-GG1}), (\ref{eq:PR-LCFS-GG1-peak}),
and (\ref{eq:Pr-LCFS-GG1-throughput}). 
Furthermore, taking the derivative of $a^*(s)$, 
we obtain $\E[A]$.

\section{Proof of Theorem \ref{theorem:NP-LCFS}}
\label{appendix:NP-LCFS}

\subsection{Derivation of (\ref{eq:a^*-NP-LCFS-MG1-with-d})}

It is readily seen that the waiting time $W_n$ ($n=2,3,\ldots$) of the
$n$th informative packet is given by
\begin{equation}
W_n = \begin{cases}
X(H_{n-1}^{\dag}), & \mbox{with probability $1-h^*(\lambda)$},
\\
0, & \mbox{with probability $h^*(\lambda)$},
\label{eq:W_n-LCFS-NP-MG1-w-d}
\end{cases}
\end{equation}
where $X(H_{n-1}^{\dag})$ denotes the remaining service time 
seen by the last packet arrived in $H_{n-1}^{\dag}$.
We can verify that the LST $\psi^*(s)$ of $X(H_{n-1}^{\dag})$ is given by
\begin{align}
\psi^*(s)
&= 
\frac{1}{1-h^*(\lambda)}
\int_{x=0}^{\infty} \dd H(x)
\sum_{k=1}^{\infty}
\int_{y=0}^x 
\frac{e^{-\lambda y}(\lambda y)^{k-1}\lambda}{(k-1)!}
\cdot e^{-\lambda (x-y)} e^{-s(x-y)} \dd y 
\nonumber
\\
&=
\frac{\lambda}{s+\lambda} \cdot \frac{1-h^*(s+\lambda)}{1-h^*(\lambda)},
\label{eq:lemma:psi(s)}
\end{align}
so that we have from (\ref{eq:W_n-LCFS-NP-MG1-w-d}),
\[
w^*(s) 
= h^*(\lambda) + \frac{\lambda}{s+\lambda}( 1-h^*(s+\lambda) ).
\]
Since service times are i.i.d.\ and services are non-preemptive, 
$W_n$ and $H_n^{\dag}$ are independent. Therefore, we obtain from
(\ref{eq:D_n-NP-LCFS}),
\begin{equation}
d^*(s) = w^*(s)h^*(s) 
=
\left(h^*(\lambda) + \rho \tilde{h}^*(s+\lambda)\right) h^*(s).
\label{eq:NP-LCFS-with-d^*(s)}
\end{equation}

Using (\ref{eq:NP-LCFS-A_peak-1}) and (\ref{eq:W(n+1)=0-w-d}), we can
verify that
\begin{align}
a_{\peak}^*(s)
&=
(1 - h^*(\lambda))
\cdot
w^*(s) \cdot \frac{h^*(s)-h^*(s+\lambda)}{1-h^*(\lambda)} \cdot h^*(s)
\nonumber
\\
&\quad
{}+
h^*(\lambda)
\cdot
w^*(s) \cdot \frac{h^*(s+\lambda)}{h^*(\lambda)} 
\cdot \frac{\lambda}{s+\lambda} \cdot h^*(s)
\nonumber
\\
&=
d^*(s) \left[
h^*(s) - h^*(s+\lambda) \left(1-\frac{\lambda}{s+\lambda}\right)
\right].
\label{eq:a_peak-NP-LCFS-with-d}
\end{align}
We then obtain (\ref{eq:a^*-NP-LCFS-MG1-with-d})
with straight forward calculations based on 
(\ref{eq:time-average-GIG1-LST}), (\ref{eq:lmd^dag-Ap-D}),
(\ref{eq:NP-LCFS-with-d^*(s)}), and (\ref{eq:a_peak-NP-LCFS-with-d}).

\subsection{Derivation of (\ref{eq:a^*-NP-LCFS-GM1-with-d})}

Note first that 
\[
\Pr(W_{n+1} > 0 \mid W_n = 0) 
=
\int_0^{\infty} e^{-\mu x} \dd G(x)
= g^*(\mu),
\]
and 
\begin{align*}
\Pr(W_{n+1} > 0 \mid W_n > 0)
&= 
\frac{1}{1-g^*(\mu)}
\int_{x=0}^{\infty} \dd G(x) \int_{y=0}^x \mu e^{-\mu y} \cdot e^{-\mu (x-y)} \dd y 
\\
&=
- \frac{\mu g^{(1)}(\mu)}{1-g^*(\mu)}. 
\end{align*}
Therefore, considering the stationary distribution of 
a discrete-time Markov chain with two states
``no wait'' and ``wait'', we obtain
\begin{equation}
\Pr(W = 0) 
= 
\ds\frac{q}{q + g^*(\mu)},
\quad
\Pr(W > 0) 
=
\ds\frac{g^*(\mu)}
{q + g^*(\mu)},
\label{eq:W-prob-LCFS-NP-w-d}
\end{equation}
where $q$ is given by
\[
q = 1 - \frac{\mu (-g^{(1)}(\mu))}{1-g^*(\mu)}.
\]

We then have from (\ref{eq:D_n-NP-LCFS}), 
\begin{align}
d^*(s) 
&=
\frac{\mu}{s+\mu} \biggl[\Pr(W = 0) 
+
\frac{\Pr(W > 0) }{1-g^*(\mu)}
\int_0^{\infty} e^{-s x} \mu e^{-\mu x} (1-G(x)) \dd x 
\biggr]
\nonumber
\\
&=
\frac{\mu}{s+\mu}
\biggl[
\Pr(W=0) 
+ 
\Pr(W > 0) \cdot \frac{\mu}{s+\mu}
\cdot \frac{1-g^*(s+\mu)}{1-g^*(\mu)}
\biggr].
\label{eq:NP-LCFS-GM1-W-d^*(s)}
\end{align}

Let $a_{\peak,0}^*(s)$ (resp.\ $a_{\peak,+}^*(s)$) denote the
conditional LST of the $(n+1)$st peak AoI $A_{\peak,n+1}$ given
$W_n=0$ (resp.\ $W_n > 0$):
\begin{align}
a_{\peak}^*(s)
&=
\Pr(W=0) a_{\peak,0}^*(s) 
+ 
\Pr(W > 0) a_{\peak,+}^*(s).
\label{eq:a_peak-NP-GM1}
\end{align}
When $W_n=0$, it follows from (\ref{eq:NP-LCFS-A_peak-2}) and
(\ref{eq:W(n+1)=0-w-d}) that
\[
A_{\peak,n+1} = \max(H_n^{\dag}, G_{n+1}^{\dag}) + H_{n+1}^{\dag},
\]
which implies
\begin{align}
a_{\peak,0}^*(s) 
&= 
g^*(s+\mu) \left(\frac{\mu}{s+\mu}\right)^2 
+ 
(g^*(s) - g^*(s+\mu)) \frac{\mu}{s+\mu}.
\label{eq:NP-LCFS-GM1-W-0}
\end{align}

On the other hand, when $W_n > 0$, $W_n$ has the same distribution as
the conditional service time $H_{<G}$ given that $H < G$, where $H$ is
exponentially distributed with parameter $\mu$. From 
(\ref{eq:NP-LCFS-A_peak-2}) and (\ref{eq:W(n+1)=0-w-d}),
the peak AoI is then given by
\begin{align*}
A_{\peak,n+1} 
&\eqdist
\max(H_{<G} + H_n^{\dag}, G) + H_{n+1}^{\dag}
\\
&=
H_{<G} + \max(H_n,G-H_{<G}) + H_{n+1}^{\dag}.
\end{align*}
We define $f^{**}(s,\omega) = \E[e^{-s H_{<G}} e^{-\omega (G -
H_{<G})}]$ as the joint LST of the waiting time $H_{<G}$ and the
remaining interarrival time $G-H_{<G}$, which is given by
\begin{align*}
f^{**}(s,\omega)
&=
\int_0^{\infty} \frac{\dd G(x)}{1-g^*(\mu)} 
\int_0^x e^{-s y} e^{-\omega (x-y)} \mu e^{-\mu y} \dd y
\\
&=
\frac{\mu}{1-g^*(\mu)} \cdot 
\frac{g^*(\omega)-g^*(s+\mu)}{s+\mu-\omega}.
\end{align*}
It can be verified from the above observations that the conditional
LST $a_{\peak,+}^*(s)$ of the peak AoI is given by
\begin{align}
a_{\peak,+}^*(s)
& =
\Bigl[
f^{**}(s,s+\mu) \frac{\mu}{s+\mu} 
+ f^{**}(s,s) - f^{**}(s,s+\mu)\Bigl] \frac{\mu}{s+\mu}
\nonumber
\\
&=
\frac{\mu}{s+\mu}\biggl[
\frac{\mu (-g^{(1)}(s+\mu))}{1-g^*(\mu)} 
\cdot \frac{\mu}{s+\mu}
+
\frac{g^*(s)-g^*(s+\mu)}{1-g^*(\mu)}
-
\frac{\mu (-g^{(1)}(s+\mu))}{1-g^*(\mu)} 
\biggr],
\label{eq:NP-LCFS-GM1-W-+}
\end{align}
where we used
\[
f^{**}(s,s+\mu)
=
\lim_{\omega \to s+\mu} f^{**}(s,\omega)
=
\frac{\mu \cdot (-g^{(1)}(s+\mu))}{1-g^*(\mu)}.
\]
We then obtain (\ref{eq:a^*-NP-LCFS-GM1-with-d})
with some calculations based on 
(\ref{eq:time-average-GIG1-LST}), (\ref{eq:lmd^dag-Ap-D}),
(\ref{eq:NP-LCFS-GM1-W-d^*(s)}), (\ref{eq:a_peak-NP-GM1}),
(\ref{eq:NP-LCFS-GM1-W-0}), and (\ref{eq:NP-LCFS-GM1-W-+}).

\subsection{Derivation of (\ref{eq:a^*-NP-LCFS-MG1-wo-d})}

Because the non-preemptive LCFS service discipline is work-conserving,
the stationary queue length distribution is identical to that in the FCFS M/GI/1 queue. 
Let $L$ denote the number of packets, and let $\tilde{H}$ denote the remaining
service time in steady state provided that a packet is being
served. Note that $\Pr(L \geq 1) = \rho$. Furthermore, it is known
that \cite{Taki92,Wish60}
\begin{align}
\Pi(z,s) 
&:=
\E[z^L e^{-s \tilde{H}} \mid L \geq 1]
\nonumber
\\
&=
\frac{(1-\rho) z (z-1)}{z-h^*(\lambda - \lambda z)}
\cdot
\frac{h^*(\lambda - \lambda z)-h^*(s)}
{\E[H](s-\lambda+\lambda z)}.
\label{eq:Pi(z,s)}
\end{align}

Let $W$ denote the waiting time of informative packets and
$L^{\dag}$ denote the number of waiting (non-informative) packets at
the beginnings of services of the informative packets. 
Arriving packets become informative if (i) they arrive at the empty
system or (ii) they arrive when a packet is being served and no subsequent
packets arrive in the remaining service time.
Therefore, owing to PASTA, we obtain
\begin{align}
w^{**}(z,s) 
&:=
\E[z^{L^{\dag}} e^{-sW}] 
\nonumber
\\
&= \frac{1}{1-\rho + \rho \tilde{h}^*(\lambda)}
\left[ 1-\rho + \rho \cdot \frac{\Pi(z,s+\lambda)}{z} \right].
\label{eq:w**(z,s)}
\end{align}
It then follows from $d^*(s) = w^{**}(1,s) h^*(s)$ and 
(\ref{eq:Pi(z,s)}) that
\begin{equation}
d^*(s)
=
\frac{1-\rho + \rho \tilde{h}^*(s+\lambda) }
{1-\rho + \rho \tilde{h}^*(\lambda)} \cdot h^*(s).
\label{eq:NP-LCFS-MG1-wo-d^*(s)}
\end{equation}

Next, we consider the peak AoI.
We define $w_k^*(s)$ ($k=0,1,\ldots$) as
\[
w_k^*(s) = \Pr( L^{\dag} = k) \E[e^{-sW} \mid L^{\dag}=k].
\]
By definition, $w^{**}(z,s) = \sum_{k=0}^{\infty} w_k^*(s)
z^k$. With straightforward calculations based on (\ref{eq:NP-LCFS-A_peak-1}), 
we can verify that
\begin{align}
a_{\peak}^*(s)
&=
(1 - h^*(\lambda))
\cdot 
w^{**}(1,s) \cdot \frac{h^*(s)-h^*(s+\lambda)}{1-h^*(\lambda)} \cdot h^*(s)
\nonumber
\\
&\quad
{}+
h^*(\lambda)
\cdot
\frac{1}{h^*(\lambda)}
\sum_{k=0}^{\infty}
w_k^*(s) h^*(s+\lambda) 
\left[
\sum_{\ell=1}^k (h^*(s+\lambda))^{\ell-1} (h^*(s)-h^*(s+\lambda)) h^*(s)
\right.
\nonumber
\\
&\hspace{25em} {} \left.
+
(h^*(s+\lambda))^k \cdot \frac{\lambda}{s+\lambda} \cdot h^*(s)
\right]
\nonumber
\\
&=
d^*(s) \cdot \frac{h^*(s)-h^*(s+\lambda)}{1-h^*(s+\lambda)}
+
w^{**}(h^*(s+\lambda),s)h^*(s+\lambda) h^*(s)
\left(
\frac{\lambda}{s+\lambda} 
- \frac{h^*(s)-h^*(s+\lambda)}{1-h^*(s+\lambda)}
\right).
\label{eq:a_peak-NP-MG1-wo-d}
\end{align}

Therefore, we can obtain (\ref{eq:a^*-NP-LCFS-MG1-wo-d})
with some calculations based on 
(\ref{eq:time-average-GIG1-LST}), (\ref{eq:lmd^dag-Ap-D}),
(\ref{eq:Pi(z,s)}), (\ref{eq:w**(z,s)}),
(\ref{eq:NP-LCFS-MG1-wo-d^*(s)}), and (\ref{eq:a_peak-NP-MG1-wo-d}).

\subsection{Derivation of (\ref{eq:a^*-NP-LCFS-GM1-wo-d})}
\label{appendix:NP-LCFS-GM1-w/o-d}

Let $L^{\rm A}$ denote the queue length seen by a randomly chosen
packet on arrival in steady state.  Since the non-preemptive LCFS
service discipline without discarding is work-conserving, the queue
length distribution immediately before arrivals is identical to that
in the FCFS service discipline, which is given by \cite[Page
251]{Klei75}
\[
\Pr(L^{\rm A} = k) = (1-\gamma) \gamma^k,
\quad
k=0,1,\ldots,
\]
where $\gamma$ denotes the unique solution of
(\ref{eq:gamma-def}). 

Let $W$ denote the waiting time of informative packets in steady state. 
Arriving packets become informative if (ii) they arrive at the empty
system or (ii) they arrive when a packet is being served and no
subsequent packets arrive in the exponentially distributed remaining
service time. Note here that for a randomly chosen arrival, the event
(i) happens with probability $1-\gamma$ and the event (ii) happens with
probability $\gamma(1-g^*(\mu))$. From these observations, we have
\begin{equation}
\Pr(W = 0) 
= \frac{1-\gamma}{1- \gamma g^*(\mu)},
\qquad
\Pr(W > 0) 
= \frac{\gamma (1-g^*(\mu))}{1- \gamma g^*(\mu)}.
\label{eq:W-pr-NP-LCFS-GM1-wo-d}
\end{equation}
With those, we obtain
\begin{align}
d^*(s) 
&=
\biggl[
\Pr(W=0) + \Pr(W >0) \cdot \frac{1}{1-g^*(\mu)} 
\int_0^{\infty} e^{-s x} \mu e^{-\mu x} (1-G(x))\dd x
\biggr] \frac{\mu}{s+\mu}
\nonumber
\\
&=
\biggl[
\Pr(W=0) 
+ \Pr(W>0)\cdot 
\frac{\mu(1-g^*(s+\mu))}{s+\mu}
\biggr]
\frac{\mu}{s+\mu}.
\label{eq:d(s)-NP-LCFS-GM1-wo-d}
\end{align}

By considering two exclusive cases $W_n = 0$ and $W_n > 0$,
we write the LST $a_{\peak}^*(s)$ of the peak
AoI $A_{\peak,n+1}$ in the form of (\ref{eq:a_peak-NP-GM1}).
When $W_n=0$, i.e., an informative packet finds the system empty on
arrival, the packet served next to this informative packet is also
informative and $W_{n+1}=0$. We then have from  (\ref{eq:NP-LCFS-A_peak-2}),
\begin{align}
a_{\peak,0}^*(s)
&=
\left[
g^*(s+\mu)\cdot\frac{\mu}{s+\mu} + (g^*(s)-g^*(s+\mu))\right]
\frac{\mu}{s+\mu}
\nonumber
\\
&=
\left[g^*(s) - \frac{sg^*(s+\mu)}{s+\mu}\right]
\frac{\mu}{s+\mu}.
\label{eq:NP-LCFS-GM1-peak-AoI0}
\end{align}

When $W_n > 0$, on the other hand, the number of non-informative packets
served before the arrival of the next informative packet need to be
taken into account. We define $b_k(s)$ ($s > 0$, $k=0,1,\ldots$)
as
\[
b_k^*(s) = \int_0^{\infty} e^{-s x} \cdot \frac{e^{-\mu x}(\mu x)^k}{k!} \dd G(x).
\]
If $L^{\rm A} = k-1$, the queue length immediately after the arrival
is equal to $k$.  Note here that
\[
\Pr(L^{\rm A} \geq k \mid L^{\rm A} > 0) = \gamma^{k-1}, 
\quad 
k=1,2,\ldots.
\]
We thus obtain from (\ref{eq:NP-LCFS-A_peak-2}),
\begin{align}
a_{\peak,+}^*(s)
&=
\sum_{k=1}^{\infty} \frac{b_k^*(s)}{1-g^*(\mu)}
\left[
1-\gamma^{k-1}
+ 
\gamma^{k-1} \cdot \frac{\mu}{s+\mu}\right]
\frac{\mu}{s+\mu}
\nonumber
\\
&=
\frac{1}{1-g^*(\mu)}
\biggl[
g^*(s)-g^*(s+\mu) 
- 
\frac{s[g^*(s+\mu-\mu \gamma) - g^*(s+\mu)]}{\gamma(s+\mu)}
\biggr] \frac{\mu}{s+\mu}.
\label{eq:NP-LCFS-GM1-peak-AoI+}
\end{align}
We can obtain (\ref{eq:a^*-NP-LCFS-GM1-wo-d})
with some calculations based on 
(\ref{eq:time-average-GIG1-LST}), (\ref{eq:lmd^dag-Ap-D}),
(\ref{eq:a_peak-NP-GM1}), (\ref{eq:W-pr-NP-LCFS-GM1-wo-d}),
(\ref{eq:d(s)-NP-LCFS-GM1-wo-d}), (\ref{eq:NP-LCFS-GM1-peak-AoI0}),
(\ref{eq:NP-LCFS-GM1-peak-AoI+}).

\section{Proof of Theorem \ref{theorem:LCFS-P-limit}}
\label{appendix:LCFS-P-limit}

We first consider the M/GI/1 queue. We rewrite (\ref{eq:mean-AoI-LCFS-P-MG1}) as
\[
\E[A] = \frac{1}{\lambda h^*(\lambda)}.
\]
Under the assumptions of Theorem \ref{theorem:LCFS-P-limit},
it follows from the initial value theorem \cite[Page 243]{Zemanian1965} 
that
\[
\lim_{\lambda \to \infty} \lambda h^*(\lambda) = h(0),
\]
which implies (\ref{eq:LCFS-P-limit}).

We then consider the D/GI/1 queue. Suppose that interarrival times
are given by $G_n = \tau$, where $\tau > 0$ is a real number. 
Because $\lambda = 1/\tau$, we consider the limit $\tau \to 0+$.
Obviously, the first two terms on the right-hand side of
(\ref{eq:PR-LCFS-GG1-EA}) converge to zero as $\tau \to 0+$. 
Furthermore, the third term is rewritten as
\begin{equation}
\frac{\zeta \E[G_{<H}]}{1-\zeta} 
= 
\frac{\tau\Pr(H > \tau)}{1-\Pr(H > \tau)}
=
\frac{\ds \int_{\tau}^\infty h(x) \dd x}{\ds \frac{1}{\tau}\int_0^{\tau} h(x)
\dd x}.
\label{eq:LCFS-P-DM1-limit-3}
\end{equation}
Because $h(x)$ is assumed to be continuous, we have
\begin{equation}
\lim_{\tau \to 0+} \frac{1}{\tau}\int_0^{\tau} h(x) \dd x = h(0),
\label{eq:LCFS-P-DM1-limit-3-h(0)}
\end{equation}
whether $h(0) \neq 0$ or not. Therefore, we obtain
(\ref{eq:LCFS-P-limit}) from (\ref{eq:LCFS-P-DM1-limit-3}),
(\ref{eq:LCFS-P-DM1-limit-3-h(0)}), and 
$\lim_{\tau \to 0+} \int_{\tau}^\infty h(x) \dd x = 1$.

\section{Proof of Theorem \ref{theorem:M/G/1-G/M/1-order}}
\label{appendix:M/G/1-G/M/1-order}

\subsection{Proof of the M/GI/1 case}

We first show $\E[A_{\rm LCFS}^{\rm NP-W/O}] \leq \E[A_{\rm FCFS}]$.
Because 
\[
1-\lambda x \leq 
e^{-\lambda x} \leq 
1 - \lambda x + \frac{\lambda^2 x^2}{2},
\quad
\lambda \geq 0,\,
x \geq 0, 
\]
we have
\begin{equation}
1-\rho
\leq
h^*(\lambda) 
\leq 
1 - \rho + \frac{\lambda^2 \E[H^2]}{2}.
\label{h^*(lambda)-bounds}
\end{equation}
It then follows from (\ref{eq:mean-age-MG1}),
(\ref{eq:EA-NP-LCFS-MG1-w/o-d}), and 
(\ref{h^*(lambda)-bounds}) that
\begin{align*}
\E[A_{\rm FCFS}] - \E[A_{\rm LCFS}^{\rm NP-W/O}]
&=
\frac{\lambda \rho \E[H^2]}{2(1-\rho)}
+
\frac{(1-\rho)\E[H]}{h^*(\lambda)} - \E[H]
\\
&=
\frac{\E[H]}{h^*(\lambda)}
\left(
\frac{\lambda^2\E[H^2]}{2} \cdot \frac{h^*(\lambda)}{1-\rho} + 1-\rho
- h^*(\lambda)
\right)
\\
&\geq
\frac{\E[H]}{h^*(\lambda)}
\left(
\frac{\lambda^2\E[H^2]}{2} + 1-\rho - h^*(\lambda)
\right)
\\
&\geq 0.
\end{align*}

Next we consider $\E[A_{\rm LCFS}^{\rm NP-W}] \leq \E[A_{\rm
LCFS}^{\rm NP-W/O}]$. It follows from Corollary
\ref{corollary:NP-LCFS} and
(\ref{eq:EA-NP-LCFS-MG1-w/o-d}) that
\begin{align*}
\E[A_{\rm LCFS}^{\rm NP-W/O}] - \E[A_{\rm LCFS}^{\rm NP-W}]
&=
\left(1 - \frac{1}{\rho + h^*(\lambda)}\right)
\left(\frac{\lambda\E[H^2]}{2} + \frac{h^*(\lambda)}{\lambda}
- h^{(1)}(\lambda)\right) 
+ \E[H] + \frac{(1-\rho)^2}{\lambda h^*(\lambda)}
- \frac{1}{\lambda}
\\
&=
\frac{h^*(\lambda)-(1-\rho)}
{\lambda h^*(\lambda) (\rho+h^*(\lambda))}
\biggl[
h^*(\lambda) 
\left(\frac{\lambda^2 \E[H^2]}{2} + h^*(\lambda)
- \lambda h^{(1)}(\lambda)\right) 
-(1-\rho)(\rho+h^*(\lambda))
\biggr]
\\
&=
\frac{h^*(\lambda)-(1-\rho)}
{\lambda h^*(\lambda) (\rho+h^*(\lambda))}
\biggl[
h^*(\lambda) 
\left(\frac{\lambda^2 \E[H^2]}{2} + h^*(\lambda)
- \lambda h^{(1)}(\lambda) -1\right) 
+ \rho [h^*(\lambda)-(1-\rho)]
\biggr], 
\end{align*}
which implies that 
\begin{align*}
\frac{\lambda^2 \E[H^2]}{2} + h^*(\lambda)
- \lambda h^{(1)}(\lambda) \geq 1
\ \Rightarrow\
\E[A_{\rm LCFS}^{\rm NP-W/O}] \geq \E[A_{\rm LCFS}^{\rm NP-W}].
\end{align*}
Note here that
\[
\frac{\lambda^2 \E[H^2]}{2} + h^*(\lambda)
- \lambda h^{(1)}(\lambda)
=
\int_0^{\infty} f(x) \dd H(x),
\]
where $f(x)$ ($x \geq 0$) is given by
\[
f(x) = \frac{\lambda^2 x^2}{2} + e^{-\lambda x} + \lambda x e^{-\lambda x}.
\]
It is readily seen that $f(x) \geq 1$ ($x \geq 0$) because 
$f(0)=1$ and for $x \geq 0$, 
\[
\frac{\dd}{\dd x} f(x) 
=
\lambda^2 x (1- e^{-\lambda x}) \geq 0.
\]
We thus have
\[
\int_0^{\infty} f(x) \dd H(x) 
\geq 
\int_0^{\infty} \dd H(x) =1,
\]
which completes the proof.

\subsection{Proof of the GI/M/1 case}

Note that $\E[A_{\rm LCFS}^{\rm P}] \leq \E[A_{\rm LCFS}^{\rm NP-W}]$
and $\E[A_{\rm LCFS}^{\rm NP-W/O}] \leq \E[A_{\rm FCFS}]$ are readily
verified from (\ref{eq:mean-age-GM1}), (\ref{eq:mean-AoI-GM1-LCFS-P}),
(\ref{eq:EA-NP-LCFS-GM1-with-d}), and
(\ref{eq:EA-NP-LCFS-GM1-w/o-d}).  We thus prove only $\E[A_{\rm
LCFS}^{\rm NP-W}] \leq \E[A_{\rm LCFS}^{\rm NP-W/O}]$ below.

By definition (\ref{eq:gamma-def}), $\gamma$ satisfies 
\[
\gamma = \int_0^{\infty}e^{-(\mu - \mu\gamma)x} \dd G(x)
=
\sum_{n=0}^{\infty} b_n \gamma^n,
\]
where $b_n$ ($n = 0,1,\ldots$) is given by
\[
b_n = \int_0^{\infty} \frac{e^{-\mu x}(\mu x)^n}{n!}\dd G(x).
\]
We then have
\[
\gamma \geq b_0 + b_1 \gamma,
\]
so that
\begin{equation}
\gamma \geq \frac{b_0}{1-b_1} 
= \frac{g^*(\mu)}{1 - \mu (-g^{(1)}(\mu))}.
\label{eq:gamma-bound}
\end{equation}
Similarly, we can verify that
\begin{align}
-g^{(1)}(\mu - \mu\gamma)
&= 
\int_0^{\infty}xe^{-(\mu - \mu\gamma)x} \dd G(x)
\nonumber
\\
&=
\sum_{n=0}^{\infty} 
\frac{(n+1)b_{n+1}}{\mu} 
\cdot
\gamma^n
\nonumber
\\
&\geq
\frac{b_1}{\mu} 
+
\frac{2b_2}{\mu} 
\cdot
\gamma
\nonumber
\\
&=
(-g^{(1)}(\mu)) + \gamma \mu g^{(2)}(\mu).
\label{eq:g(1)-gamma-bound}
\end{align}
It then follows from (\ref{eq:EA-NP-LCFS-GM1-with-d}),
(\ref{eq:EA-NP-LCFS-GM1-w/o-d}), (\ref{eq:gamma-bound}), and
(\ref{eq:g(1)-gamma-bound}) that
\begin{align*}
\E[A_{\rm LCFS}^{\rm NP-W}] 
&\leq
\E[H] + \frac{\E[G^2]}{2\E[G]} 
+ 
\rho \biggl(
(-g^{(1)}(\mu)) 
+
\gamma \mu g^{(2)}(\mu)
\biggr)
\\
&\leq
\E[H] + \frac{\E[G^2]}{2\E[G]} 
+ 
\rho \biggl(
-g^{(1)}(\mu - \mu\gamma)
\biggr)
\\
&=
\E[A_{\rm LCFS}^{\rm NP-W/O}],
\end{align*}
which completes the proof.

\section{Proof of Theorem \ref{theorem:MG1-FCFS<P-LCFS}}
\label{appendix:theorem:MG1-FCFS<P-LCFS}

We first consider (\ref{eq:theorem:MG1-FCFS<P-LCFS}).  
It follows from (\ref{eq:mean-age-MG1}),
(\ref{eq:mean-AoI-LCFS-P-MG1}), and (\ref{h^*(lambda)-bounds})
that
\begin{align*}
\frac{\E[A_{\rm LCFS}^{\rm P}] - \E[A_{\rm FCFS}]}{\E[H]}
&=
\frac{1}{h^*(\lambda)} 
-
\frac{\lambda \E[H^2]}{2(1-\rho)\E[H]} -1
\\
& \geq 
\frac{1}{1-\rho + \lambda^2 \E[H^2]/2}
-
\frac{\lambda \E[H^2]}{2(1-\rho)\E[H]} -1
\\
&=
\frac{2}{2(1-\rho) + \rho^2 ((\Cv[H])^2 + 1)}
-
\frac{\rho}{2(1-\rho)} ((\Cv[H])^2 + 1)-1,
\end{align*}
where we use $\E[H^2] = (\E[H])^2 ((\Cv[H])^2 + 1)$.
Straightforward calculations then yield
\begin{align}
\frac{\E[A_{\rm LCFS}^{\rm P}] - \E[A_{\rm FCFS}]}{\E[H]}
\geq
\frac{-[\rho^2 (\Cv[H])^2]^2 - 2 \rho^2 (\Cv[H])^2
+ \rho^2 \{(2-\rho)^2-2\}}
{2\rho (1-\rho) \{1+(1-\rho)^2 + \rho^2 (\Cv[H])^2\}}.
\label{eq:diff-P-FCFS-MD1}
\end{align}
Note that the denominator on the right-hand side of
(\ref{eq:diff-P-FCFS-MD1}) is always positive for $\rho \in (0,1)$.
We thus consider $f_1(x):= -x^2 - 2x + \rho^2 \{(2-\rho)^2-2\}$ ($x \geq
0$), which corresponds to the numerator with $x=\rho^2 (\Cv[H])^2 \geq
0$.  It is easy to verify that $f_1(x) \geq 0$ is
equivalent to
\[
(2-\rho)^2-2 \geq 0\ 
\mbox{and}\ 
x \leq - 1 + \sqrt{1 + \rho^2 \{(2-\rho)^2-2\}},
\]
from which (\ref{eq:theorem:MG1-FCFS<P-LCFS}) follows. It can also be 
verified that for $\rho \in (0, 2-\sqrt{2})$, $v(\rho)$ is a
decreasing function of $\rho$ and $\lim_{\rho \to 0+} v(\rho) = 1$.

\end{document}